\def\dOi{11(4:1)2015}
\subjclass{
F.4.3 Formal Languages;
H.2.3 Database languages, query languages;
F.2 Analysis of algorithms and problem complexity
}
\newcommand{\V}{{\mathcal D}} 
\newcommand{\boxtheorem}{\hfill $\Box$}
\newcommand{\pspace}{{\sc Pspace}}
\newcommand{\nlogspace}{{\sc NLogspace}}
\newcommand{\expspace}{{\sc Expspace}}  
\newcommand{\M}{{\mathcal M}} 
\newcommand{\FO}{{\rm FO}} 
\newcommand{\A}{{\mathcal A}} 
\newcommand{\OMIT}[1]{}
\renewcommand{\l}{\ell} 
\newcommand{\D}{{\mathcal D}} 
\newcommand{\T}{{\mathcal T}}
\renewcommand{\S}{{\mathcal S}}
\newcommand{\N}{\mathbb{N}}
\newtheorem*{proposition*}{Proposition}
\newtheorem*{notation*}{Notation}
\newtheorem*{theorem*}{Theorem}
\newtheorem*{definition*}{Definition}
\newtheorem*{lemma*}{Lemma}
\newcommand{\wl}{ \mathit{WL}}
\newcommand{\rl}{ \mathit{RL}}
\newcommand{\tow}[2]{\mathit{tower}(#1,#2)}
\newcommand{\sem}[1]{\llbracket #1 \rrbracket}
\newtheorem{theorem}{Theorem}[section]
\newtheorem{lemma}[theorem]{Lemma}
\newtheorem{corollary}[theorem]{Corollary}
\newtheorem{definition}{Definition}[section]
\newtheorem{proposition}[theorem]{Proposition}
\newtheorem{claim}{Claim}[theorem]
\newtheorem{example}{Example}[section]
\begin{document}

\title[Expressive Path Queries on Graphs with Data]
{Expressive Path Queries on Graphs with Data\rsuper*}

\author[P.~Barcel{\'o}]{Pablo Barcel{\'o}\rsuper a}	%required
\address{{\lsuper{a,b}}Center for Semantic Web Research \& Department of Computer Science, 
University of Chile}	%required
\email{\{pbarcelo, gaelle\}@dcc.uchile.cl}  %optional

\author[G.~Fontaine]{Gaelle Fontaine\rsuper b}	%required
\address{\vspace{-18 pt}}	%required
%\email{gaelle@dcc.uchile.cl}  %optional

\author[A. W-Lin]{Anthony Widjaja Lin\rsuper c}	%optional
\address{{\lsuper c}Yale-NUS College, Singapore}	%optional
\email{anthony.w.to@gmail.com}  %optional

\thanks{{\lsuper{a,c}}Barcel\'o is funded by the Millennium Nucleus Center
for Semantic Web Research under Grant NC120004 and Fontaine by Fondecyt postdoctoral 
grant 3130491. Part of this work done when Lin visited Barcel\'o funded by 
Fondecyt grant 1130104. This work was also partially done when Lin was 
at Oxford University supported by EPSRC (H026878).}	%optional

%% mandatory lists of keywords and classifications:
\keywords{graph databases; graph logics; RPQs; non elementary;
register automata}
\titlecomment{{\lsuper*}This is the full version of the conference paper \cite{BFW}.}

%\date{}

\begin{abstract}
Graph data models have recently become popular owing to their applications,
e.g., in social networks and the semantic web. Typical navigational query languages 
over graph databases --- such as Conjunctive Regular Path Queries (CRPQs) --- 
cannot express relevant properties of the interaction between the underlying 
data and the topology. Two languages have been recently proposed to overcome 
this problem: {\em walk logic} (WL) and {\em regular expressions with memory} 
(REM). In this paper, we begin by investigating fundamental properties of
WL and REM, i.e., complexity of evaluation problems and expressive power. We
first show that the data complexity of WL is nonelementary, which rules out
its practicality. On the other hand, while REM has low data complexity, we 
point out that many natural data/topology properties of graphs expressible in 
WL cannot be expressed in REM. To this end, we propose 
{\em register logic}, an extension of REM, which we show to be able to express 
many natural graph properties expressible in WL, while at the same time 
preserving the elementariness of data complexity of REMs. It is also 
incomparable to WL in terms of expressive power.
%which corresponds to the extension of FO over graph databases 
%with explicit quantification
%over paths, and a simple class of {\em regular expressions with
%memory} (REMs), which compare the data that appears in a path while
%checking if the label of such path satisfies a given regular condition.
\end{abstract} 

\maketitle

\section{Introduction} 
\label{sec:intro}

Graph databases have gained renewed interest due to 
applications, such as the semantic web, social network
analysis, crime detection networks, software bug detection, biological
networks, and others (e.g., see \cite{AG} for a survey). 
%of different
%graph database models and their applications). 
Despite the
importance of querying graph databases, no general agreement has been
reached to date about the kind of features a practical query language
for graph databases should support and about what can be considered 
a reasonable computational cost of query evaluation for the
aforementioned applications. 
  
Typical navigational query languages for graph databases --- including
the conjunctive regular path queries \cite{CMW} and its many
extensions \cite{BLWW} --- suffer from a common drawback: they are
well-suited for expressing relevant properties about the underlying
topology of a graph database, i.e., about the way in which (labeled) 
nodes are connected via (labeled) edges, but not about how such topology  
interacts with the node ids or the data. This drawback is shared by common
specification languages for verification \cite{MC} (e.g. CTL$^*$), which are
evaluated over a similar graph data model (a.k.a. transition systems).
Examples of important queries that combine graph data and topology, but cannot 
be expressed in usual
navigational languages for graph databases, include the following \cite{WL,trial}: (Q1)
{\em Find pairs of people in a social network connected by
  professional links restricted to people of the same age}. 
(Q2) {\em Find pairs of cities $x$ and $y$ in a transportation
  system, such that $y$ can be reached from $x$ using only services
  operated by the same company}. In each one of these queries, the
connectivity between two nodes (i.e., the topology) is constrained by
the data (from an infinite domain, e.g., $\N$), in the sense that we only 
consider 
paths in which all intermediate nodes
satisfy a certain condition (e.g. they are people of the same age).
%, or
%services operated by the same company). 

Two languages, {\em walk logic} and {\em regular expressions with
  memory}, have recently been proposed to overcome this
problem. These languages have different goals:

\smallskip

(a) \underline{Walk logic (WL)} was proposed by Hellings et
al. \cite{WL} as a
  unifying framework for understanding the expressive power of path
  queries over graph databases. Its strength is on the expressiveness side.
  The underlying data model of WL is that of (node or edge)-labeled
  directed graphs. In this context, WL can be seen as a natural
  extension of FO with path quantification,
  plus the ability to check whether positions $p$ and $p'$ in paths
  $\pi$ and $\pi'$, respectively, have the same data values. In their 
  paper, Hellings et
al. assume the restriction that each node carries a distinct data
  value (and, therefore, that this data value serves as an identifier 
for the node). 
However, as we shall see, this makes no difference in terms of the
  results that we can obtain.
  %graph database. \anthony{This needs some rephrasing as per our discussion
  %today}
%Queries such as (Q1) and (Q2) can be easily expressed
%  in WL, once the underlying schema is properly formalized.

  \smallskip  (b) \underline{Regular expressions with memory
    (REMs)} were proposed by Libkin and Vrgo\v{c} \cite{LV} as a
  formalism for comparing data values along a single path,
  while retaining a reasonable complexity for query evaluation. The
  strength of this language is on the side of efficiency.
  The data model of the class of REMs is that of
  edge-labeled directed graphs, in which each node is assigned a data
  value from an infinite domain. REMs define pairs of nodes in the
  graph database that are linked by a path satisfying a given
  condition $c$. Each such condition $c$ is defined in a formalism 
  inspired by the class of {\em register automata} \cite{Kam}, allowing some
  data values to be
  stored in the registers and then compared against other data values.
  The evaluation problem for REMs is \pspace-complete
  (same as for FO over relational databases), and can be solved in
  polynomial time in {\em data complexity} \cite{LV}, i.e.,
  assuming queries to be fixed.\footnote{Recall that data complexity
    is a reasonable measure of complexity in the database scenario
    \cite{Vardi}, since queries are often much smaller than the
    underlying data.} This shows that the language is, in fact,
  well-behaved in terms of the complexity of query evaluation. 

%\end{itemize}
\smallskip

\medskip 

The aim of this paper is to investigate the expressiveness and complexity of 
query evaluation for WL and the class of REMs with the hope of finding 
a navigational query language for data graphs that strikes a good balance 
between these two important aspects of query languages.
\smallskip
%
%Our goal is to delineate the trade-off 
%between expressiveness and efficiency axes of these languages can be combined in a formalism
%with a good balance between the two.
 %

\noindent
\textbf{Contributions.}
We start by considering WL, which is known to be a powerful formalism
in terms of expressiveness. Little is known about the
cost of query evaluation for this language, save for the decidability of 
the evaluation problem and {\sc NP}-hardness of its data complexity. Our first main 
contribution is to pinpoint the exact complexity of the evaluation problem for 
WL (and thus answering an open problem from \cite{WL}): we prove that it
is non-elementary, and that this holds even in data
complexity, which rules out the practicality of the language. 
\OMIT{
With this not only we solve an open problem from
\cite{WL}, but also completely rule out any applicability of WL as a
practical query language. In addition, our proof
of the non-elementary lower bound uses some very simple queries, which 
severely restricts the
possibility of finding an interesting fragment of the logic with
reasonable data complexity. 
}

We thus move to the class of REMs, which suffers from the opposite
drawback: Although the 
complexity of evaluation for queries in this class is reasonable, 
the expressiveness of the
language is too rudimentary for expressing some important path
properties due to its inability to (i) compare data values in \emph{different} 
paths and (ii) express branching properties of the graph database.
An example of an interesting query that is not expressible as an REM
is the following: (Q) {\em Find pairs of nodes $x$ and $y$, such that there is
a node $z$ and a path $\pi$ from $x$ to $y$ in which each node 
is connected to $z$}. 
%pablo1: Such a query is interesting from the point of view of
%graph databases \anthony{Cite Juan's paper at this point.} 
%dont think we need to say this, since it is at the basis 
%of Q1 and Q2 it has to be important! I will cite the paper
%earlier
Notice that this is 
the query that lies at the basis of the queries (Q1) and (Q2) we presented 
before.  

%In order to make up for the lack of expressiveness of the class of
%REMs, 
Our second contribution then is to identify a natural extension of this 
language, called {\em
  register logic} (RL), that closes REMs under Boolean
combinations and existential quantification over nodes, paths and
register assignments. The latter allows the logic to express 
comparisons of data values appearing in different paths, as well as
branching properties of the data. 
This logic is incomparable in expressive power to WL. Besides, many
natural queries relating data and topology in data graphs can be expressed
in RL including: the query (Q), hamiltonicity, the existence of
an Eulerian trail, bipartiteness, and connected graphs with an even number of
nodes.
%In particular, the query (Q) we
%presented in the previous paragraph is expressible in RL. 
We then study the complexity of the problem of query evaluation for RL, and
show that it can be solved in elementary time (in particular, that it
is \expspace-complete). This is in contrast to WL, for which even
the data complexity is non-elementary. With respect to data
complexity, we prove that RL is \pspace-complete. 
%But as opposed to the case of WL, 
%That said, register logic
%presents an interesting fragment --- 
We then identify a slight extension of its existential-positive fragment,
which  
%pablo1: more
is tractable (\nlogspace) in data complexity and can express many
queries of interest (including the query (Q)). The idea behind this
extension is that atomic REMs can be enriched with an existential
branching operator -- in the style of the class of {\em nested regular
  expressions} \cite{BLP} -- that increases expressiveness without
affecting the cost of evaluation.

\smallskip
\noindent
{\bf Organization of the paper.} \,   
Section \ref{sec:model} defines our data model. In Section \ref{sec:wl},
we briefly recall the definition of walk logic and some basic results
from \cite{WL}. In Section \ref{sec:nonelem}, we prove that the data complexity
of WL is nonelementary. Section \ref{sec:rl} contains our results concerning
register logic. We conclude in Section \ref{sec:conc} with future work.
%Due to space restrictions, we relegate some proofs into the appendix.

\section{The Data Model} 
\label{sec:model}

We start with a definition of our data model: data graphs.
\begin{definition}[Data graph] 
Let $\Sigma$ be a finite alphabet. 
A data graph $G$ over $\Sigma$ is a tuple
$(V,E,\kappa)$, where $V$ is the finite set of nodes, $E \subseteq V
\times \Sigma \times V$ is the set of directed edges labeled in
$\Sigma$ (that is, each triple $(v,a,v') \in E$ is to be seen as
an edge from $v$ to $v'$ in $G$ labeled $a$), and $\kappa : V \to \D$
is a function that assigns a data value in $\D$ to each node in
$V$. 
\end{definition} 
This is the data model adopted by Libkin and Vrgo\v{c} \cite{LV} in their
definition of REMs. In the case of WL \cite{WL}, the authors adopted \emph{graph
databases} as their data model, i.e., data graphs $G = (V,E,\kappa)$ such that
$\kappa$ is injective (i.e. each node carries a different data value). 
In such a case we can think of $\kappa(v)$ as the identifier (id) of
$v$, for each $v \in V$. 
%Second, 
We shall adopt the general model of \cite{LV} since none of 
our complexity results are affected by the data model: upper bounds hold for
data graphs, while all lower bounds are proved in the more restrictive
setting of graph databases. However, for the sake of the comparison
with the expressiveness of WL, many of our examples are constructed in
the scenario of graph databases, that is, when $\kappa(v)$ serves as
an id for node $v$. 

There is also the issue of edge-labeled vs node-labeled data graphs.
Our data model is edge-labeled, but the original one for WL is
node-labeled \cite{WL}. We have chosen to use the former because it is
the standard in the literature \cite{Bar13}. Again,
this choice is inessential, since all the complexity results we
present in the paper remains true if the logics are interpreted
over node-labeled graph databases or data graphs (applying the expected
modifications to the syntax). 

%and $\lambda : V \to 2^\Sigma$
%is the labeling function that assigns a set of labels in $\Sigma$ to
%each node $v \in V$.

Finally, in several of our examples we use logical formulas to express 
properties of undirected graphs. In each such case we assume that an undirected
graph $H$ is represented as a graph database $G = (V,E,\kappa)$ over unary
alphabet $\Sigma = \{a\}$, where $V$ is the set of nodes of
$H$ and $E$ is a symmetric relation (i.e. $(v,a,v') \in E$ iff $(v',a,v) \in
E$). In particular, since $G = (V,E,\kappa)$ is a graph database we
have that $\kappa$ is injective, i.e., each node is uniquely
determined 
by its data value. 
%contains tuples $(v,a,v')$ and $(v',a,v)$, for 
%each undirected edge $\{v,v'\}$ in $H$.  

\section{Walk Logic} 
\label{sec:wl}

WL is an elegant and powerful formalism for defining properties of
paths in graph databases, which was originally proposed in \cite{WL} as a
yardstick for measuring the expressiveness of different path logics. 
 
%\paragraph{{\bf Syntax and semantics of WL:}} 
The syntax of WL is defined with respect to countably infinite sets 
$\Pi$ of {\em path variables} (that we denote as
$\pi,\pi_1,\pi_2, \dots$) and $\T(\pi)$, for each $\pi \in \Pi$,
 of {\em position
  variables} of sort $\pi$. We assume that different sorts are associated with
  distinct position variables.
  %position variables of
%different sort are different. 
We denote position variables by
$t,t_1,t_2,\dots$, and write $t^\pi$  
when we need to emphasize that
position variable $t$ is of sort $\pi$. 

\begin{definition}[Walk logic (WL)] 
The set of formulas of WL over finite alphabet $\Sigma$ is defined by
the following grammar, where (i) $a \in \Sigma$, (ii) $t,t_1,t_2$ are position
variables of any sort, (iii) $\pi$ is a path variable, and (iv) 
$t_1^\pi,t_2^\pi$ are position variables of the
same sort $\pi$: 
$$\phi,\phi' \, := \, E_a(t_1^\pi,t_2^\pi)  \, \mid \, t_1^\pi
< t_2^\pi \, \mid \, t_1 \sim t_2 \, \mid \, \neg \phi \, \mid \, \phi
\vee \phi' \, \mid \, \exists t \phi \, \mid \, \exists \pi \phi$$ 
%As in
%\cite{WL}, we restrict WL formulas to not contain free path variables
%(i.e. walk variables $\pi$ that are not under the scope of a quantifier
%$\exists \pi$). 
As usual, WL formulas without free variables are called {\em
Boolean}. \qed
\end{definition}  

To define the semantics of WL we need to introduce some 
terminology. 
A {\em path} (a.k.a. \emph{walk} in \cite{WL}) in the data graph 
$G = (V,E,\kappa)$ is a 
finite, nonempty sequence $$\rho \ = \ v_1 a_1 v_2
\cdots v_{n-1} a_{n-1} v_n,$$ 
such that $(v_i,a_i,v_{i+1}) \in E$ for each
$1 \leq i < n$. 
The set of {\em positions} of $\rho$ is $\{1,\dots,n\}$, and $v_i$ is
the node in position $i$ of $\rho$, for $1 \leq i \leq
n$. 
The intuition behind the semantics of  WL formulas is as
follows. Each path variable $\pi$ is interpreted as a path $\rho = v_1
a_1 v_2 \cdots v_{n-1} a_{n-1} v_n$ in the data graph $G$, while each
position variable $t$ of sort $\pi$ is interpreted as a position $1
\leq i \leq n$ in $\rho$ (that is, position variables of sort $\pi$
are interpreted as positions in the path that interprets $\pi$). The
atomic formula $E_a(t_1^\pi,t_2^\pi)$ is true iff $\pi$ is interpreted
as path $\rho = v_1 a_1 v_2 \cdots v_{n-1} a_{n-1} v_n$, the position
$p_2$ that interprets $t_2$ in $\rho$ is the successor of the position
$p_1$ that interprets $t_1$ (i.e. $p_2 = p_1 + 1$), and node in
position $p_1$ is linked in $\rho$ by an $a$-labeled edge to node in
position $p_2$ (that is, $a_{p_1} = a$). In the same way, $t_1^\pi <
t_2^\pi$ holds iff in the path $\rho$ that interprets $\pi$ the
position that interprets $t_1$ is smaller than the one that interprets
$t_2$. Furthermore, $t_1 \sim t_2$ is the case iff the data value
carried by the node in the position assigned to $t_1$ is the same than
the data value carried by the node in the position assigned to $t_2$
(possibly in different paths). We formalize the semantics of WL below.

Let $G = (V,E,\kappa)$ be a data graph and $\phi$ a WL formula. 
Assume that $\S_\phi$ is the set that consists of (i) all 
position variables $t^\pi$ and path
variables $\pi$ such that $t^\pi$ is a free variable of $\phi$, and
(ii) all path variables $\pi$ such that $\pi$ is a free variable of
$\phi$. Intuitively,
$\S_\phi$ defines the set of (both path and position) variables that
are relevant to define the semantics of $\phi$ over $G$.   
An {\em assignment} $\alpha$ for $\phi$ over
$G$ is a mapping that associates a path $\rho = v_1 a_1 v_2 \cdots
v_{n-1} a_{n-1} v_n$ in
$G$ with each path variable $\pi \in \S_\phi$, and a position $1 \leq i \leq
n$ with each position variable of the form $t^\pi$ in $\S_\phi$
(notice that this is well-defined since $\pi \in S_\phi$ every
time a position variable of the form $t^\pi$ is in $S_\phi$). As usual, we denote by
$\alpha[t \to i]$ and $\alpha[\pi \to \rho]$ the assignments that are
equal to $\alpha$ except that $t$ is now assigned position $i$ and
$\pi$ the path $\rho$, respectively. 

We say that {\em $G$ satisfies 
$\phi$ under $\alpha$}, denoted $(G,\alpha) \models \phi$, if one of
the following holds (we omit Boolean combinations which are standard):

\begin{itemize}

\item $\phi = E_a(t_1^\pi,t_2^\pi)$, the path 
$\alpha(\pi)$ is $v_1 a_1 v_2 \cdots v_{n-1} a_{n-1} v_n$,
and it is the case that 
$\alpha(t_2^\pi) = \alpha(t_1^\pi) + 1$ and $a = a_{\alpha(t_1^\pi)}$.  

\item $\phi = t_1^\pi < t_2^\pi$ and $\alpha(t_1^\pi) <
\alpha(t_2^\pi)$.  

\item $\phi = (t_1 \sim t_2)$, $t_1$ is of sort $\pi_1$, $t_2$ is of
  sort $\pi_2$, and $\kappa(v_1) = \kappa(v_2)$, where $v_i$ is the
  node in position $\alpha(t_i)$ of $\alpha(\pi_i)$, for $i = 1,2$.
%  and $v_2$ is the node in position $\alpha(t_2)$ of
%  $\alpha(\pi_2)$. 

%\item $\phi = \neg \psi$ and $\phi = \psi \vee \psi'$ are standard. 

\item $\phi = \exists t^\pi \psi$ and one of the following holds: 
\begin{enumerate}
\item  $t^\pi$ does not appear free in $\psi$, or
\item  both $t^\pi$ and $\pi$ appear free in $\psi$, and  
there is a position $i$ in
  $\alpha(\pi)$ such that $(G,\alpha[t^\pi \to i]) \models \psi$, or
\item $t_\pi$ appears free in $\psi$, $\pi$ does not appear free in
$\psi$, and there is a path $\rho$ in
  $G$ and  a position $i$ in
  $\rho$ such that $(G,\alpha[\pi \to \rho,t^\pi \to i]) \models \psi$. 
\end{enumerate} 

\item $\phi = \exists \pi \psi$ and the following holds: 
\begin{enumerate} 
\item $\pi$ does not appear free in $\psi$, or 
\item there is a path $\rho$ in
  $G$ such that $(G,\alpha[\pi \to \rho]) \models \psi$. 
\end{enumerate} 

\end{itemize}

\begin{example}
A simple example from \cite{WL} that shows that WL expresses 
NP-complete properties is the following query that checks if a
graph $G$ has a Hamiltonian path:
$$\exists \pi \: \big(\, \forall t_1^\pi \forall t_2^\pi 
\, 
(t_1^\pi \neq t_2^\pi \to 
t_1^\pi \not\sim t_2^\pi)
\, \wedge \, \forall \pi' \forall t_1^{\pi'} \exists t_2^\pi (t_1^{\pi'} \sim t_2^\pi)
\, \big).$$ In fact, this query expresses that there is a path $\pi$
in $G$ that
does not repeat nodes 
(because $\pi$ satisfies $\forall t_1^\pi \forall t_2^\pi 
(t_1^\pi \neq t_2^\pi \to 
t_1^\pi \not\sim t_2^\pi)$), and every node belongs to such
path (because $\pi$ satisfies $\forall \pi' \forall t_1^{\pi'} \exists t_2^\pi
(t_1^{\pi'} \sim t_2^\pi)$, and, therefore, every node that occurs in some path $\pi'$ in
the graph database also occurs in $\pi$). 
Note that this formula uses in an essential way the fact that 
$G$ is a graph database, i.e., that each node is
uniquely identified by its data value. \hfill $\Box$
\end{example} 

\section{WL Evaluation is Non-elementary in Data Complexity}
\label{sec:nonelem}

In this section we pinpoint the precise complexity of query evaluation for WL.
It was proven in \cite{WL} that this problem is decidable. Although the precise 
complexity of this problem was left open in \cite{WL}, one can prove that this 
is, in fact, a non-elementary problem  by an easy
translation from the satisfiability problem for FO formulas -- which is known to
be non-elementary \cite{robertson,stockmeyer}. 
In databases, however, one is often interested in a different measure
of complexity -- called {\em data complexity} \cite{Vardi} -- that
assumes the formula $\phi$ to be fixed. This is a reasonable
assumption since databases are usually much bigger than
formulas. Often in the setting of data complexity the cost of
evaluating queries is much smaller than in the general setting in
which formulas are part of the input. The main result of this section
is that the data complexity of evaluating WL formulas is
nonelementary even \emph{over graph databases}, which rules out its 
practicality.  

Let $\phi$ be a WL formula without free variables. 
The evaluation problem for $\phi$, denoted {\sc Eval}(WL,$\phi$), is 
defined as follows: Given a data graph $G$, is it the case that $G \models \phi$? We prove 
the following: 

\begin{theorem} \label{theo:main} 
The evaluation problem for WL is non-elementary in data
complexity. In particular, for each $k \in \mathbb{Z}_{> 0}$, there is a 
finite alphabet $\Sigma$ and a Boolean formula 
$\phi$ over $\Sigma$,  
such that the problem {\sc Eval}(WL,$\phi$) of evaluating the WL formula $\phi$
is $k$-{\sc Expspace}-hard. In
addition, the latter holds even if the
input is restricted to the class of graph databases.  
\end{theorem}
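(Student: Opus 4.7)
The plan is to exhibit, for each fixed $k \geq 1$, a Boolean WL-formula $\phi_k$ of size $O(k)$ together with a polynomial-time reduction $w \mapsto G_w$ from some $k$-\textsc{Expspace}-hard problem (for concreteness, the $T_k(n) \times T_k(n)$ corridor tiling problem, where $T_0(n)=n$ and $T_{i+1}(n)=2^{T_i(n)}$) into graph databases, such that $G_w \models \phi_k$ iff $w$ is a positive instance. The database $G_w$ will be the disjoint union of an encoding of the input and a polynomial-size ``scaffold'' out of which the fixed formula manufactures tower-exponential structures. Because $G_w$ is a graph database, $\kappa$ is injective and the predicate $\sim$ effectively behaves as node-identity across independently quantified paths, which is exactly the ``cross-path resource'' that drives the lower bound.

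The central ingredient is an inductively defined family of WL-formulas $\mathrm{Eq}_i(t_1,t_2)$, each of size $O(i)$, expressing that the positions $t_1,t_2$ (which may lie on different paths) encode the same integer in $\{0,\dots,T_i(n)-1\}$ under a natural addressing scheme: a level-$i$ address is a sequence of $T_{i-1}(n)$ bits, each bit indexed by a level-$(i-1)$ address. The base case $\mathrm{Eq}_0$ compares two $n$-bit strings along paths of the scaffold using $<$, the edge atoms $E_a$, and $\sim$. For the inductive step, $\mathrm{Eq}_i(t_1,t_2)$ universally quantifies a fresh path variable $\pi$ and a position variable $t^\pi$ to range over all level-$(i-1)$ indices, uses $\mathrm{Eq}_{i-1}$ (applied to two position variables on two different paths) to identify ``the same index'' on the two address-carriers of $t_1$ and $t_2$, and finally asserts that the corresponding bit values agree. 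The crucial point is that WL quantifies paths and positions independently and communicates between them \emph{only} via $\sim$; this is precisely what allows each level to reuse the previous equality predicate without any size blow-up, so that $|\mathrm{Eq}_k|=O(k)$.

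With $\mathrm{Eq}_k$ in hand, the formula $\phi_k$ existentially guesses a single path $\pi^*$ encoding a purported tiling as a sequence of $T_k(n)^2$ tile labels indexed by pairs of level-$k$ addresses (row, column), and universally verifies the horizontal and vertical adjacency constraints, using $\mathrm{Eq}_k$ to align consecutive cells across the two dimensions; the boundary and initial-row conditions coming from $w$ are read directly off $G_w$ using a constant-size subformula. The main obstacle is engineering the scaffold of $G_w$ so that, for \emph{every} level $i\le k$, the paths needed to witness all $T_i(n)$ distinct $i$-level addresses actually exist and are recognizable by a fixed subformula of $\phi_k$: since WL-paths may revisit nodes, a constant-size branching gadget with a polynomial base layer of uniquely-labelled nodes suffices, but one has to check inductively that at each level the witnessing paths carry enough bit information and that $\mathrm{Eq}_i$ only accepts those intended to represent genuine $i$-level addresses (otherwise a spurious ``short'' path could falsify a universal check). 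Once this book-keeping is in place, the reduction goes through, and because the scaffold uses only polynomially many uniquely-labelled nodes, the hardness persists over graph databases, yielding the claimed $k$-\textsc{Expspace}-hardness in data complexity for every $k$.
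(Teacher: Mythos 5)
Your proposal is correct in outline and follows essentially the same route as the paper: reduce from a $k$-fold-exponentially bounded computation (your tiling problem versus the paper's space-bounded Turing machine is immaterial), encode the tower-sized object as a single existentially quantified path over a polynomial, uniquely-labelled scaffold, and compare hierarchical level-$i$ addresses (the paper's $k$-counters) with inductively defined, $n$-independent equality/successor formulas that exploit $\sim$ acting as node identity over graph databases. The book-keeping obstacle you flag --- ensuring the address-carrying (sub)paths are well formed so the equality predicates only accept genuine addresses --- is exactly what the paper's Lemma~\ref{lem:pass} formulas $\phi^{\Delta}_{k,n,p}$ and the inductive construction of the graphs $G^{\Delta}_{k,n}$ take care of.
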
 

We prove the above result by showing that for all natural numbers $k$, the data complexity of the
model checking problem for $\wl$ is $k$-{\sc ExpSpace}-hard. For all
natural numbers $k$ and $f_0$,  
we provide a reduction to the class of problems solvable by a Turing
machine using a tape of size $\tow{k}{f_0 n}$ given an input word of size
$n$, where $\tow{1}{n} := 2^{n}$ and $\tow{k+1}{n}=2^{\tow{k}{n}}$.
 
More precisely, for all natural numbers $k > 0$, there is a Turing
machine $M$ and a constant $f_0$ 
such that the following problem is $k$-{\sc
  ExpSpace}-hard: given a word $w$ of size $n$, is there an accepting
run of $M$ over $w$ using at most $\tow{k}{f_0 n}$ cells? We prove that there is a formula $\phi
\in \wl$ such that for all words $w$ of size $n$, there is a graph
$G_w$ such that
\begin{equation} \label{eq:gw}
G_w \vDash \phi \quad \text{iff} \quad \text{there is an accepting
run of $M$ over $w$ using at most $\tow{k}{f_0 n}$ cells}. 
\end{equation}
Before giving a proof, we sketch the case $k = 1$ here, which illustrates the 
proof idea. 
Let $M$ be a Turing machine $M$ such that the following problem is {\sc
  ExpSpace}-hard: given a word $w$ of size $n$, is there an accepting
run of $M$ over $w$ using at most $2^{f_0 n}$ cells? 
The formula $\phi$ that we will define and satisfying equivalence~\eqref{eq:gw} is 
of the form 
\[
\exists \pi \psi(\pi),
\] 
where $\psi$ is a formula that does not contain any quantification
over path variables. Given a word $w$ of size $n$, the label of the path $\pi$ in the graph
$G_w$ will encode an accepting run of $M$ over the word $w$ in the
following way.

Given a word $w$ of size $n$, consider a configuration $C$ of the run of
$M$ over $w$ where the
head is scanning the cell number $i_0$, the machine is
in state $q$ and the content of the tape is the word $w'=w'_0 \dots
w'_j$ ($j = 2^{f_0 n}-1$).  
We may encode 
the configuration $C$ by the word $e_C= d^C_0 \dots d^C_{j}$ where each $d^C_i$
 encodes the information in cell number $i$ and $j=2^{f_0 n}-1$. More precisely, 
we define $d^C_i$
as a word of the form
\begin{equation} \label{eq:ci}
c(i)  \; (q'_i,w'_i),
\end{equation}
where $c(i)$ and $q'_i$ are defined as follows. The word $c(i)$
is the binary encoding of the number $i$. The
 letter $w'_i$ is the content of the cell $i$. The letter $q'_i$ is equal
 to the dummy symbol $\$$ if the head is not scanning the cell number $i$; 
 otherwise,
 $q'_i$ is equal to the state $q$. That is, $q'_{i_0} = q$
 and for all $i \neq i_0$, $q'_i = \$$. We encode a run $C_0 C_1
 \dots$ as the sequence $e_{C_0} e_{C_1} \dots$.

We think of a path $\pi$ encoding a run as consisting of
two parts: the first part contains the encoding $e_{C_0}$ of the
initial configuration and is a path through a subgraph $I_w$ of $G_w$,
while the second part contains the encoding $e_{C_1}e_{C_2} \dots$ and
is a path through the subgraph $H$ of $G_w$. If $Q$ is the set of
states of $M$ and $\Sigma$ is the alphabet, we define $H$
as the following graph
\vspace{-0.3cm}
\begin{center}
\begin{tikzpicture} [scale=0.6]
\node(1) [draw,circle,scale=0.8] at (0,0) {};
\node(2) [draw, circle,scale=0.8] at (4,0) {};
\node at (-0.4,-0.5) {$x$};
\node at (16,-0.6) { $y$};
\node(Y)  at (2,1) {{\scriptsize $0$}};
\node(N) at (2,-1) {{\scriptsize $1$}};
\draw [->,>=latex, thick]
    (1) to[bend left] (2) ;
%node[midway,above]{{\scriptsize $1$}}; 
%\draw[->,>=latex] (1) to[bend right] (2);
\draw [->,>=latex, thick]
    (1) to[bend right] (2);
%node[midway,above]{{\scriptsize $0$}}; 

 \node(Y2) at (6,1) {{\scriptsize $0$}};
\node(N2) at (6,-1) {{\scriptsize $1$}};
\node(3) [draw,circle,scale=0.8] at (8,0) {};
\draw [->,>=latex, thick]
    (2) to[bend left] (3); 
\draw [->,>=latex, thick]
    (2) to[bend right] (3);

\node(d) at (10,0) {$\dots$};
\node(f)  [draw, circle,scale=0.8] at (16,0) {};
\node(n)  [draw, circle,scale=0.8] at (12,0) {};
 \node(Yn) at (14,1) {{\scriptsize $0$}};
\node(Nn) at (14,-1) {{\scriptsize $1$}};
\draw [->,>=latex, thick]
    (n) to[bend left] (f); 
\draw [->,>=latex, thick]
    (n) to[bend right] (f); 

\node(z) at (20.5,-0.3) {$z$};

\node(d1) at (18,1.6) {{\scriptsize $d_1$}};
\node(d2) at (18,0.8) {{\scriptsize $d_2$}};
\node(d3) at (18,-1.5) {{\scriptsize $d_l$}};

\node(f2) [draw,circle,scale=0.8] at (20,0) {};
\draw [->,>=latex, thick]
    (f) to[bend left =75] (f2);
\draw [->,>=latex, thick]
    (f) to[bend left=25] (f2);
\draw [->,>=latex, thick]
    (f) to[bend right=55] (f2);
\node(r)[anchor=east] at (10,-2){};
\node(r2)[anchor=west] at (10,-2){};
\node at (18,-0.3) {$\dots$};
\draw [thick]
    (f2) to[out=270,in=0] (r);
\draw [->,>=latex, thick] (r2) to[out=180,in=270] (1);
\end{tikzpicture}
\end{center}
\vspace{-0.5cm}
where $l$ is equal to $|(Q \cup \{\$\}) \times \Sigma|$,  $\{ d_i : 1\leq i \leq l\} = (Q \cup \{\$\}) \times \Sigma$ and the number of nodes with outgoing edges with labels $0$ and $1$
is equal to $f_0 n$. The label of a path $\pi'$ from the ``left-most'' node $x$ to
the ``right-most'' node $z$ with only once occurrence of $x$ is
exactly the description of a cell in a configuration: it is the binary
encoding of a natural number $< 2^{f_0 n}$ followed by a pair of the form 
$(q',a)$. We can define a
formula $\phi_C \in $ WL such that for all paths $\pi$ starting in
$x$ and ending in $z$,
\[
H \vDash \phi_C (\pi) \quad \text{iff} \quad \text{the label of
  $\pi$ is the encoding of a configuration.}
\]
We do not give details; $\phi_C$ has to express that the
encoding
of a configuration only has one tape head, that the first number
encoded in binary is $0$, that the last number is $2^{f_0 n}-1$ and that the
encoding of the description of cell number $j$ is followed by the
description of cell number $j+1$. Using the formula $\phi_C$, we can define
a formula $\phi_1$ such that for all paths $\pi$,
\[
H \vDash \phi_1 (\pi) \quad \text{iff} \quad \text{the label of
  $\pi$ is the encoding of an accepting run.}
\]
The formula $\phi_1$ has to ensure that if $e_C e_{C'}$ occurs in the
label of $\pi$, then $C$ and $C'$ are consecutive configurations
according to $M$. Moreover, $\phi_1$ has to express that eventually we
reach the final state. In order to express $\phi_C$ and $\phi_1$, we use the ability of WL to check whether two positions
correspond to the same node. For example, in order to define $\phi_1$,
since we need to compare consecutive configurations $e_C$ and
$e_{C'}$, we need to be able to compare the content of a cell in
configuration $C$ and the content of that same cell in $C'$. In
particular, we want to be able to express whether two subpaths $\pi'_0$
and $\pi'_1$ of $\pi$
starting in $x$ and ending in $y$ correspond to the binary encoding of
the same number. Since the length of such subpaths depends on $n$,
we cannot check node by node whether the two subpaths are
equal. However, it is sufficient to check that if $t_0^{\pi'_0}$ and
$t_1^{\pi'_1}$ correspond to the same node ($t_0^{\pi'_0} \sim t_1^{\pi'_1}$), then their successors
also correpond to the same node ($t_0^{\pi'_0}+1 \sim
t_1^{\pi'_1}+1$).  Note
that using the facts that $\pi'_0$ and $\pi'_1$ are subpaths of $\pi$,
we will be able to define $\phi_1$ such that it only contains
quantifications over node variables (and no quantifications over path variables). Similarly, in the formula $\phi_C$, we use the
operator $\sim$ in order to express that two subpaths correspond to the
binary encodings of numbers that are successors of each other. 

Similarly to the way we define the graph $H$, we can introduce a graph $I_w$ and a formula
$\phi_0(\pi)$ such that 
\[
I_w \vDash \phi_0 (\pi) \quad \text{iff} \quad \text{the label of 
  $\pi$ is the encoding $e_{C_0}$,}
\]
where $C_0$ is the initial configuration of the run of $M$
 over $w$. By adding an edge from $I_w$ to $H$, we 
 construct a graph $G_w$ such that for all paths $\pi$, $
G_w \vDash \phi_0(\pi) \wedge \phi_1(\pi)$  iff the label of 
  $\pi$ is the encoding of an accepting run over $w$. 
Hence, the formula $\phi:= \exists \pi (\phi_0(\pi) \wedge \phi_1(\pi)
)$ satisfies~\eqref{eq:gw}.

For the case where $k>1$, the problem to adapt the above proof is that
we have to consider Turing machine configurations whose size is bounded by a 
tower of
exponentials of height $k$. If $k>1$, the binary  representation of
such a bound is not polynomial. The trick is to 
represent such exponential towers 
by $k$-counters. A $1$-counter is the binary representation of a
number. If $k>1$, a $k$-counter  is a word $\sigma_0 l_0 \dots
\sigma_{j_0} l_{j_0}$, where $l_{j}$ is a $(k-1)$-counter and $\sigma_j
\in \{ 0,1\}$.

\begin{definition*}
For all natural numbers $k$, we consider the alphabet $\Sigma_k = \{
a_k,b_k\}$, where $a_k$ and $b_k$ represent $0$ and $1$ respectively. We
define $\Gamma_k$ as the alphabet $\Sigma_1 \cup \dots \cup \Sigma_k$.

A {\em $1$-counter} of length $n$ is a sequence of the form
\[
l_0 \dots l_{f_0 n-1},
\]
where for all $ 0 \leq i < f_0 n$, $l_i \in \Sigma_1$. This $1$-counter
represents the number $\sum_{i=0}^{f_0 n-1} l_i 2^i$. Recall that if $l_i$
is equal to $a_1$ (resp. $b_1$), then $l_i$ represents $0$ (resp. $1$).

If $k\geq 2$, a {\em $k$-counter} of length $n$ is a sequence of the form
\[
\sigma_0 l_0 \dots \sigma_j l_j,
\]
where for all $0 \leq i \leq j$, $l_i \in \Sigma_k$, $\sigma_i$ is a
$(k-1)$-counter representing the number $i$ and $j = \tow{k-1}{f_0 n}-1$. This $k$-counter
represents the number $\sum_{i=0}^{j} l_i 2^i$. Again recall that  if $l_i$
is equal to $a_1$ (resp. $b_1$), then $l_i$ represents $0$ (resp. $1$).

\OMIT{
A {\em $(k,f_0 n)$-description} over an alphabet $\Delta$ is a sequence 
\[
\sigma_0 d_0\dots \sigma_j d_j,
\]
where for all $0 \leq i \leq j$, $d_i \in \Delta$, $\sigma_i$ is a
$(k-1)$-counter representing the number $i$ and $j =
\tow{k}{c(n-1)}-1$.
}

A {\em $(k, f_0 n,p)$-description} (over an alphabet $\Delta$) is a sequence 
\[
\sigma_{p} d_{p} \dots \sigma_j d_j ,
\]
where for all $p \leq i \leq j$, $d_i \in \Delta$, $\sigma_i$ is a
$(k-1)$-counter representing the number $i$ and $j =
\tow{k}{f_0 (n-1)}-1$. 
A {\em $(f_0 k,n)$-description} (over an alphabet $\Delta$) is a 
$(k, f_0 n,0)$-description.
\qed
\end{definition*}

Note that a $(k,f_0 n)$-description over the alphabet $\Sigma_k$ is a
$k$-counter of length $n$. 
%\anthony{I agree with the reviewer that this is wrong, but it's not easy
 %   to put the precise number because of how we defined $k$-counters and
%$(k,n)$-descriptions. I noticed this problem and mentioned the following to Gaelle 
%before submission: instead of \emph{hiding} the constant $c$ but only say 
%$(k,n)$-descriptions, we should make this constant explicit. For example,
%we should say $(k,cn)$-descriptions are ... This change would be global. Gaelle,
%could you please do this?}
%We will consider $(k,n)$-description over the alphabet $ (Q \cup \{ \$
%\}) \times \Sigma$. 
If $\Delta$ is the alphabet $ (Q \cup \{ \$
\}) \times \Sigma$ (where $Q$ is the set of states and $\Sigma$ is the
alphabet of the machine), a $(k,f_0 n)$-description over $\Delta$ is of the form
\[
l_0 (x_0,y_0)  \dots l_j (x_j,y_j) 
\]
where $j =\tow{k}{f_0 n}-1$. Hence, if we define $c(i)$ in~\eqref{eq:ci} as
the $k$-counter encoding the number $i$, the encoding of a
configuration (as defined above) is nothing but a
$(k,f_0 n)$-description. 

In particular, if we want to encode a run as the label of a path
satisfying some well-chosen formula in a well-chosen graph, we should
also 
be able to encode $(k,f_0 n,p)$-descriptions as labels of paths. We show
how to do so in the
following lemma.

\begin{notation*}
Given a path $\pi$ in a graph over an alphabet $\Delta$, we denote by
$l(\pi)$ the label of $\pi$. Given an alphabet $\Delta' \subseteq
\Delta$, we denote by $l_{\Delta'} (\pi)$ the trace of $l(\pi)$ over
the alphabet $\Delta'$, that is, the subsequence of $l(\pi)$ obtained
by deleting the letters that do not belong to $\Delta'$. 

Let $G'=(V',E',\kappa')$ be a subgraph of
$G=(V,E,\kappa)$ and let $\pi$ be a path in $G$ and of the form
\[
v_1 a_1 v_{2} \dots v_{n-1} a_n v_n,
\]
where $(v_{i},a_{i},v_{i+1}) \in E$ for all $1 \leq i <n$. 
Assume that there are $i_0$ and $i_1$ such that $i_0 \leq i_1$ and 
\[
\{ v_i : v_i \in V', 1 \leq i \leq n\} = \{ v_{i_0}, \dots, v_{i_1} \},
\]
that is, once the path leaves $G'$, it never goes back to $G'$. 
Then we define the {\em trace} $\pi'$ of $\pi$ on $G'$ as the subpath
\[
v_{i_0} a_{i_0} v_{i_0+1} \dots v_{i_1-1} a_{i_1} v_{i_1},
\]
that is, $\pi'$ is the longest subpath of $\phi$ with nodes in $G'$.

In order to make notation easier, we also abbreviate the formula
\[
\exists s \; E_a(s,t)
\] 
by $a(t)$.

Given a formula $\phi(\pi,s,t)$ with path variable $\pi$ and node variables $s$ and $t$, we denote by $\phi(\pi_{s,t})$ the formula obtained by replacing in $\phi(\pi,s,t)$ each quantification of the form
\[
\exists r^\pi
\] 
by
\[
\exists r^\pi \text{ s.t. } (s < r <t).
\]
Intuitively, we ``restrict'' the path $\pi$ to the nodes occurring between $s$ and $t$.
\end{notation*}

\begin{lemma} \label{lem:pass}
For all $n$ and $k$ and for all alphabets $\Delta$, there are 
formulas $\phi_{k,n,p}^{\Delta}(\pi)$ ($0 \leq p <n$) and a graph 
$G^{\Delta}_{k, n}$ satisfying the following. There is a unique node with
an outgoing (resp. incoming) edge with label $i_{k,n}^\Delta$ (resp. $f_{k,n}^\Delta$); moreover, that node has
no incoming (resp. outgoing) edge.  That node is called the
{\em initial} (resp. {\em final}) node. 
Finally,   $G^{\Delta}_{k,n} \vDash \phi_{k,n,p}^{\Delta}(\pi)$ iff
 the label $l(\pi)$ of $\pi$ satisfies the following conditions:
\begin{itemize}
\item only the first edge of $\pi$ is labeled $i_{k,n}^\Delta$,
\item only the last edge of $\pi$ is labeled $f_{k,n}^\Delta$,
\item if $k\geq 2$ and $\Delta'=\Delta \cup \Gamma_{k-1}$, then
$l_{\Delta'}(\pi)$  is a $(k,f_0 n,p)$-description over $\Delta$;  
\item if $k =1$, $l_{\Sigma_1}(\pi)$ is a $1$-counter of length $n$.
\end{itemize}
We let $\phi_{k,n}^{\Delta}(\pi)$  be an abbreviation for $\phi_{k,n,0}^{\Delta}(\pi)$. 

Moreover, if $\Delta=\Sigma_k$, then there are formulas
$\mathit{succ}_{k,n}(\pi,\pi')$, $\mathit{number}^i_{k,n}$ ($1 \leq i \leq
n$), $\mathit{last}_{k,n}$ and $\mathit{eq}_{k,n}(\pi,\pi')$  such that for all paths $\pi$ and
$\pi'$ satisfying $G^{\Delta}_{k,n} \vDash \phi_{k,n}^{\Delta}(\pi) \wedge \phi_{k,n}^{\Delta}(\pi')$,  we have
\begin{itemize}
\item $G^{\Delta}_{k,n} \vDash \mathit{succ}_{k,n}(\pi,\pi')$ iff the number encoded by $l_{\Gamma_k}(\pi')$ is the
  successor of the number encoded by $l_{\Gamma_k} (\pi)$.
\item $G^{\Delta}_{k,n} \vDash \mathit{number}^i_{k,n}(\pi)$ iff $l_{\Gamma_k} (\pi)$ is the
  encoding of the number $i$. 
\item $G^{\Delta}_{k,n} \vDash \mathit{last}_{k,n}(\pi)$ iff $l_{\Gamma_k} (\pi)$ is the
  encoding of the number $\tow{k}{f_0 n}$. 
\item $G^{\Delta}_{k,n} \vDash \mathit{eq}_{k,n}(\pi,\pi')$ iff the
  number encoded by $l_{\Gamma_k} (\pi')$ is equal to the
  number encoded by $l_{\Gamma_k} (\pi)$.
\end{itemize}
\end{lemma}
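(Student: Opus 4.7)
The plan is to prove Lemma~\ref{lem:pass} by induction on $k$. For the base case $k=1$, I would take $G_{1,n}^\Delta$ to be a ``diamond chain'': a unique edge labelled $i_{1,n}^\Delta$ from the initial node into the chain, then $f_0 n$ consecutive diamonds (each a pair of parallel edges labelled $a_1$ and $b_1$ between two nodes), then a unique edge labelled $f_{1,n}^\Delta$ into the final node. Any path from the initial to the final node has $\Sigma_1$-trace of length exactly $f_0 n$, i.e.\ a $1$-counter. The formula $\phi_{1,n}^\Delta(\pi)$ enforces this shape using $E_a$ and $<$. When $\Delta=\Sigma_1$, the auxiliary formulas are then immediate: $\mathit{number}^i_{1,n}$ is a finite conjunction pinning each of the $f_0 n$ digits; $\mathit{last}_{1,n}$ asserts every digit is $b_1$; $\mathit{succ}_{1,n}(\pi,\pi')$ encodes binary increment by universally aligning position pairs at equal depth and comparing their labels. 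The crucial $\mathit{eq}_{1,n}(\pi,\pi')$ uses the trick from the preceding sketch: both paths begin at the unique initial node, so asserting the invariant ``whenever two positions of $\pi$ and $\pi'$ carry nodes related by $\sim$, their successor positions also do'' forces $\pi$ and $\pi'$ to make the same $a_1$/$b_1$ choice at every diamond and thus to have equal labels.

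For the inductive step ($k\geq 2$), I would build $G_{k,n}^\Delta$ by embedding a copy of $G_{k-1,n}^{\Sigma_{k-1}}$ (which, by induction, emits $(k-1)$-counters) into a loop: the edge $i_{k,n}^\Delta$ enters the copy; the $f_{k-1,n}^{\Sigma_{k-1}}$-edge leaving it leads to parallel $a_k$/$b_k$-edges (for the digit $l_i$), then to parallel edges for each $d\in\Delta$, and finally either loops back to the initial node of the copy or exits via $f_{k,n}^\Delta$. Any path from initial to final node then realises a sequence $\sigma_p l_p d_p\cdots\sigma_j l_j d_j$ in which each $\sigma_i$ is a $(k-1)$-counter. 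Using the restriction construction $\phi(\pi_{s,t})$ to apply the inductive formulas to subpaths delimited by matching $i_{k-1,n}^{\Sigma_{k-1}}$/$f_{k-1,n}^{\Sigma_{k-1}}$-edges, $\phi_{k,n,p}^\Delta(\pi)$ asserts that the first such subpath satisfies $\mathit{number}^p_{k-1,n}$, the last satisfies $\mathit{last}_{k-1,n}$, and every consecutive pair satisfies $\mathit{succ}_{k-1,n}$. The level-$k$ auxiliary formulas are obtained analogously: $\mathit{succ}_{k,n}(\pi,\pi')$ uses $\mathit{eq}_{k-1,n}$ to identify corresponding $(k-1)$-counter subpaths of $\pi$ and $\pi'$ and compares the associated $l_i$-digits via $\sim$; $\mathit{eq}_{k,n}$, $\mathit{number}^i_{k,n}$, and $\mathit{last}_{k,n}$ follow the same pattern.

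The main technical obstacle is the interplay between inductively produced formulas, which are written for the standalone graph $G_{k-1,n}^{\Sigma_{k-1}}$, and their invocation on subpaths of the larger ambient graph $G_{k,n}^\Delta$. This is resolved by the restriction construction $\phi(\pi_{s,t})$ together with the fact that the embedded copy of $G_{k-1,n}^{\Sigma_{k-1}}$ is entered and exited only through its boundary edges, so every subpath between a matching boundary pair behaves exactly as a path in the standalone graph. A secondary delicate point is that the $\sim$-trick must continue to witness label agreement at every higher level; this works because at each $k$ the $a_k$/$b_k$-edges, the $\Delta$-edges, and the boundary edges sit between small fixed pairs of named nodes of the single ambient graph, so two paths that ``reach the same position'' do so at literally the same node, which $\sim$ detects via $\kappa$.
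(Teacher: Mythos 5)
Your overall strategy is essentially the paper's: induction on $k$, a digit-choice chain flanked by the $i/f$-edges at the base, a loop around the level-$(k-1)$ graph at the inductive step, passages delimited by boundary edges and handled with the restriction construction $\phi(\pi_{s,t})$, and $\mathit{succ}/\mathit{eq}/\mathit{number}/\mathit{last}$ built level by level. However, there is a genuine gap in how you certify label agreement. In your gadgets the competing digit edges ($a_1/b_1$ at the base, and later the $a_k/b_k$- and $d\in\Delta$-edges) are parallel edges between one fixed pair of nodes. Consequently every initial-to-final path visits exactly the same sequence of nodes, so the invariant ``$t\sim t' \Rightarrow (t+1)\sim(t'+1)$'' is satisfied by \emph{every} pair of such paths, no matter which labels they choose: $\sim$ only compares nodes, and the node reached after a digit edge is the same for both choices. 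Hence your $\mathit{eq}_{1,n}$ does not force equal labels, and the same objection hits your closing claim that the $\sim$-trick ``witnesses label agreement'' at higher levels, as well as the digit comparison inside $\mathit{succ}_{k,n}$. The paper avoids this by splitting the choice: in its base graph the $a_1$-edge and the $b_1$-edge leave a common node but enter two \emph{distinct} nodes, which are then re-merged by auxiliary $N$-labeled edges, so the chosen digit is recorded in which node the path visits, and only then does the $\sim$-invariant (for simple paths with the same start, over a graph database) force two paths to encode the same number. To repair your argument you must either adopt such a node-splitting gadget or replace node comparison by explicit label tests at aligned positions (e.g.\ atoms of the form $\exists s\, E_{a_1}(t,s)$), and then still argue that the alignment itself is definable.

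A secondary mismatch: your level-$k$ block emits $\sigma_i\, l_i\, d_i$ (a $(k-1)$-counter, an $a_k/b_k$ digit, \emph{and} a $\Delta$-letter), whereas a $(k,f_0 n,p)$-description over $\Delta$ is $\sigma_p d_p\cdots\sigma_j d_j$ with no separate digit. When $\Delta$ is disjoint from $\Sigma_k$ the spurious digit is filtered out of $l_{\Delta'}(\pi)$, but in the crucial case $\Delta=\Sigma_k$ --- exactly the case for which $\mathit{succ}$, $\mathit{eq}$, $\mathit{number}^i$ and $\mathit{last}$ must be supplied --- the trace carries two $\Sigma_k$-letters per block and is not a $k$-counter as defined; the paper's graph has only the $\Delta$-letter edges per block. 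Both issues are repairable, but as written the key formulas $\mathit{eq}$ and $\mathit{succ}$, on which the whole induction rests, are not correct for the graphs you construct.
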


\begin{proof}
The formulas and the graph are defined by induction on $k$. Suppose
first that $k=1$ and $\Delta=\Sigma_1$. We define $G_0$ as the following graph
\begin{center}
\begin{tikzpicture} [scale=0.7]
\node(ss) [draw, circle,scale=0.8] at (0,0) {};
\node(S) [draw, circle,scale=0.8] at (2,0) {};
\draw [->,>=latex, thick]
    (ss) -- (S) node[midway,above]{$i^\Delta_{1,n}$}; 

\node(1) [draw,circle,scale=0.8] at (4,0) {};
\node(Y) [draw, circle,scale=0.8] at (6,1) {};
\node(N) [draw,circle,scale=0.8] at (6,-1) {};
\draw [->,>=latex, thick]
    (1) -- (Y) node[midway,above]{$a_1$}; 
\draw [->,>=latex, thick]
    (1) -- (N) node[midway,above]{$b_1$}; ; 
\draw [->,>=latex, thick]
    (S) -- (1) node[midway,above]{$N$}; 

\node(2) [draw, circle,scale=0.8] at (8,0) {};
 \node(Y2) [draw, circle,scale=0.8] at (10,1) {};
\node(N2) [draw,circle,scale=0.8] at (10,-1) {};
\draw [->,>=latex, thick]
    (2) -- (Y2) node[midway,above]{$a_1$};  
\draw [->,>=latex, thick]
    (2) -- (N2) node[midway,above]{$b_1$}; 
\draw [->,>=latex, thick]
    (Y) -- (2) node[midway,above]{$N$}; 
\draw [->,>=latex, thick]
    (N) -- (2) node[midway,above]{$N$}; 

\node(3) [draw,circle,scale=0.8] at (12,0) {};
\draw [->,>=latex, thick]
    (Y2) -- (3) node[midway,above]{$N$}; 
\draw [->,>=latex, thick]
    (N2) -- (3) node[midway,above]{$N$}; 

\node(d) at (14,0) {$\dots$};

\node(n)  [draw, circle,scale=0.8] at (16,0) {};
 \node(Yn) [draw, circle,scale=0.8] at (18,1) {};
\node(Nn) [draw,circle,scale=0.8] at (18,-1) {};
\draw [->,>=latex, thick]
    (n) -- (Yn) node[midway,above]{$a_1$};
\draw [->,>=latex, thick]
    (n) -- (Nn) node[midway,above]{$b_1$};
\node(f)  [draw, circle,scale=0.8] at (20,0) {};
\node(ff) [draw, circle,scale=0.8] at (22,0) {};
\draw [->,>=latex, thick]
    (f) -- (ff) node[midway,above]{$f^\Delta_{1,n}$};
\draw [->,>=latex, thick]
    (Yn) -- (f) node[midway,above]{$N$}; 
\draw [->,>=latex, thick]
    (Nn) -- (f) node[midway,above]{$N$}; 
\end{tikzpicture}
\end{center}
where the number of nodes with outgoing edges with labels $a_1$ and
$b_1$, is equal to $f_0 n$. The label $N$ is an additional label that we introduce in order to simplify the notation in the
formulas.

We let $G^\Delta_{1,n}$ be the graph $G_0$. We define now the formula
$\phi_{1,n}^{\Delta}$. In fact, any path $\pi$ over $G_0$ starting
with the node with no incoming edge and ending  with the node with no
outgoing edge, will be such that $l_{\Sigma_1}(\pi)$ is the encoding of a $1$-counter. Hence, we can define $\phi_{1,n}^{\Delta}$ as the conjunction of the formula
\[
\exists s^\pi [\neg \exists t^\pi,  t< s] 
\]
and the formula
\[
\exists s^\pi  [\neg \exists t^\pi,  s <t ] .
\]
We show now how to define the formulas
$\mathit{num}^i_{1,n}(\pi)$ (by induction on $i$),
$\mathit{eq}_{k,n} (\pi,\pi')$ and $\mathit{last}_{1,n}(\pi)$. For the
formula  $\mathit{last}_{1,n}(\pi)$, a path $\pi$ corresponds to the
encoding of the number $2^{f_0 n}-1$ iff we always choose the node with label $b_1$. Or equivalently, if we never choose the node with label $a_1$. Hence, we may define $\mathit{last}_{1,n}(\pi)$ as the formula
\[
\neg \exists s^\pi, a_1(s). 
\]
For the formula $\mathit{eq}_{1,n}(\pi,\pi')$, two paths $\pi$ and
$\pi'$ correspond to the same number iff $\pi$ and $\pi'$ are
equal. Since $\pi$ and $\pi'$ are simple paths with the same starting
node, this is equivalent over graph databases (where each node carries
a different data value) to the fact the the following formula holds
\[
\forall t^\pi, (t')^{\pi'} [t \sim t' \to (t+1) \sim (t'+1)].
\]
The formulas $\mathit{num}^i_{1,n}(\pi)$ is defined by induction
on $i$. If $i=0$, the path $\pi$ encodes the number $0$ iff we always
choose the node with label $a_1$. Or equivalently, if we never choose
the node with label $b_1$, which is expressed by
\[
\neg \exists s^\pi, b_1(s). 
\]
For the induction case, the path $\pi$ encodes the number $i+1$ iff there is a path $\pi''$ encoding the number $i$ and the number encoded by $\pi$ is the successor of the  number encoded by $\pi''$. Hence, we can define $\mathit{num}^{i+1}_{1,n}(\pi)$ as the formula
\[
\exists \pi'' (\mathit{num}^i_{1,n}(\pi'') \wedge \mathit{succ}_{1,n} (\pi'',\pi)).
\]
In order to finish the base case, it remains to define the formula $\mathit{succ}_{1,n} (\pi,\pi')$.  Basically, we have to simulate addition in binary. If
$x_1 \dots x_{f_0 n}$ is the binary encoding of a number $i < 2^{f_0 n}-1$, then the
binary encoding of the number $i+1$ is the sequence $x'_1 \dots
x'_{f_0 n}$
such that $x_m$ is equal to 
\begin{enumerate}[label=\({\alph*}]
\item $1$ if $x_m =0$ and all the elements $x_{m+1}, \dots,x_{f_0
    n}$ are
  equal to $1$,
\item $0$ if $x_m =1$ and all the elements $x_{m+1}, \dots,x_{f_0
    n}$ are
  equal to $1$,
\item $0$ if $x_m =0$ and there is an element in the sequence
  $x_{m+1} \dots x_{f_0 n}$ that is equal to $0$, 
\item $1$ if $x_m =1$ and there is an element  in the sequence
  $x_{m+1} \dots x_{f_0 n}$ that is equal to $0$.
\end{enumerate}
Case~(a) can be expressed by the following formula
\[
\forall t^\pi,(t')^{\pi'} [t \sim t' \wedge a_1 (t) \wedge  \forall s \in \pi [(t < s) \wedge N(s-1) \to b_1(s)]] \to b_1(t').
\]
The other cases can be treated similarly. This finishes the base case.

We turn now to the induction step. If $\Delta=\{ d_1,\dots,d_l\}$, we define $G_{k+1,n}^{\Delta}$ as
the following graph
\begin{center}
\begin{tikzpicture} [scale=0.7]
\node(S) [draw, circle,scale=0.6] at (0,0) {};
\node(G) [draw, circle,scale=1.2] at (3,0) {$G^{\Sigma_k}_{k,n}$};
\node(Si) [draw, circle,scale=0.8] at (6,0) {};
\node(a1)  at (9,2.2) {$d_1$};
\node(a2)  at (9,1.2) {$d_2$};
\node(al) at (9,-2) {$d_3$};
\node [scale=0.8] at (9,-0.5) {$\dots$};
\node(b) [draw, circle,scale=0.8] at (12,0) {};
\node(F) [draw, circle,scale=0.6] at (14,0) {};
\node at (6.5,-2.8) {$\Delta_f^{k+1}$};

\draw [->,>=latex, thick]
    (S) -- (G)
node[midway,above]{$i_{k+1,n}^\Delta$}; 
\draw [->,>=latex, thick]
    (G) -- (Si)
node[midway,above]{$\Delta^{k+1}$}; 
\draw [->,>=latex, thick]
    (Si) to[bend left =75] (b);
\draw [->,>=latex, thick]
    (Si) to[bend left=25] (b);
\draw [->,>=latex, thick]
    (Si) to[bend right=55] (b);
%\draw [->,>=latex, thick]
 %   (Si) -- (a1) node[midway,above]{$d_1$}; 
%\draw [->,>=latex, thick]
 %   (Si) -- (a2) node[midway,above]{$d_2$}; 
%\draw [->,>=latex, thick]
 %   (Si) -- (al) node[midway,above]{$d_l$}; 
%\draw [->,>=latex, thick]
 %   (a1) -- (b);
%\draw [->,>=latex, thick]
 %   (a2) -- (b);
%\draw [->,>=latex, thick]
 %   (al) -- (b);   
\draw [->,>=latex, thick]
    (b) -- (F) node[midway,above] {$f_{k+1,n}^\Delta$};   
\draw [->,>=latex, thick]
    (b) edge[out=270,in=270] (G);

\end{tikzpicture}
\end{center}
The edge with label $i_{k+1,n}^\Delta$ and the edge with label
$\Delta^{k+1}_f$ are pointing to the initial node in $G_{k,n}^{\Sigma_k}$. The edge with label $\Delta^{k+1}$ is an edge
starting from the final node in $G_{k,n}^{\Sigma_k}$.
%where the edges with label $*$ are edges pointing to the node with
%label $i_{k,n}^{\Sigma_k}$ in $G_{k,n}^{\Sigma_k}$ and the edge with label
%$\circ$ is an edge starting in the node with label $f_{k,n}^{\Sigma_k}$ in
%$G_{k,n}^{\Sigma_k}$.

We define now the formula $\phi^\Delta_{k+1,n,p}(\pi)$. The intuition
is as follows. We encode a $(k+1,f_0n,p)$-description 
\[
\sigma_{p} d_{p} \dots \sigma_j d_j, 
\]
as a path $\pi$ starting with the edge with label $i_{k+1,n}^\Delta$
and ending with the edge with label $f_{k+1,n}^\Delta$. Each
$k$-counter $\sigma_i$ will correspond to a path through the subgraph
$G_{k,n}^{\Sigma_k}$, while $d_i$ will correspond to the label of an
edge occurring after the edge with label $\Delta^{k=1}$. The formula  $\phi^\Delta_{k+1,n,p}(\pi)$ needs to ensure that the following hold:
\begin{enumerate}[label=\({\alph*}]
\item The first  edge of $\pi$ is the edge with label $i_{k+1,n}^\Delta$. 
\item Each ``passage'' of the path $\pi$ through the graph $G_{k,n}^{\Sigma_k}$ corresponds to the encoding of a $k$-counter. To express this, we will use the formula $\phi_{k,n}^{\Sigma_k} (\pi)$ given by the induction hypothesis.
\item  The first time the path $\pi$ ``goes through'' the graph $G_{k,n}^{\Sigma_k}$  corresponds to the encoding of the number $p$.
\item Two successive ``passages'' of $\pi$ through the graph
  $G_{k,n}^{\Sigma_k}$   correspond to two successive $k$-counters.
\item The edge with label $f_{k+1,n}^\Delta$ occurs after the edge 
  with label $\Delta_f^{k+1}$  iff the last passage of the path $\pi$
  through the graph    $G_{k,n}^{\Sigma_k}$, corresponded to the
  encoding of the number $\tow{k}{f_0 n}$. This ensures that we fully
  encode a $(k+1,f_0 n)$-description, and not a subsequence of it.
\end{enumerate}

\noindent We only show how to express~(b) as this is one of the most difficult
cases and the other ones can be treated similarly. 

For ~(b) we have to express that each passage of $\pi$ through the
  graph $G_{k,n}^{\Sigma_k}$ corresponds to the encoding of a
  $k$-counter. Recall that by the induction hypothesis, since a
  $(k,f_0 n)$-description over $\Sigma_k$ is a $k$-counter of length $k$,
  the formula $\phi_{k,n}^{\Sigma_k} (\pi')$ is true in the graph
  $G_{k,n}^{\Sigma_k}$ iff $l_{\Gamma_k}(\pi')$ is the encoding of a
  $k$-counter of length $n$.

Hence, in order to express~(b), it is enough to ensure that if $s$ is
the first node of a passage of $\pi$ through  $G_{k,n}^{\Sigma_k}$
and if $t$ is the last node of that same passage, then the formula
$\phi_{k,n}^{\Sigma_k} (\pi_{s,t})$ holds. We introduce a formula  
  $\mathit{IF}_{k,n}(s,t,\pi)$ such that 
\begin{align} 
\mathit{IF}_{k,n}(s,t,\pi) \text{ holds } & \quad \text{iff} \quad & \text{$s$
is the first node of a passage of $\pi$ through
  $G_{k,n}^{\Sigma_k}$} \label{eq:pass2}  \\ & & 
\text{and $t$ is the last node of that same passage.} \notag
\end{align}
We define $\mathit{IF}_{k,n}(s,t,\pi)$ as the formula 
\[
i ^{\Sigma_k}_{k,n} (s) \wedge E_{f ^{\Sigma_k}_{k,n}}(t-1,t) \wedge (s<t) \wedge \neg \exists u^\pi
\; [(s <u < t) \wedge \bigvee_i d_i(u) ].
\]
This equivalent to saying that $s$ is the initial node of
$G_{k,n}^{\Sigma_k}$, that $t$ is the final node of
$G_{k,n}^{\Sigma_k}$ and the path ``never goes out'' of the graph
$G_{k,n}^{\Sigma_k}$ (this can be enforced by imposing that we do not
go through the edge with label $d_i$ for some $i$). 

We define now the formula $\chi_1(s,t,\pi)$ expressing condition~(b),
that is, if  $s$ is
the first node of a passage of $\pi$ through  $G_{k,n}^{\Sigma_k}$
and if $t$ is the last node of that same passage, then the formula
$\phi_{k,n}^{\Sigma_k} (\pi_{s,t})$ holds. By~\eqref{eq:pass2}, we may
define 
$\chi_1(s,t,\pi)$  as the formula
\[
\mathit{IF}_{k,n} (s,t,\pi) \to \phi_{k,n}^{\Sigma_k} (\pi_{s,t}).
\]\smallskip

\noindent We turn now to the definitions of the formulas $\mathit{succ}_{k+1,n}(\pi)$,
$\mathit{num}^i_{k+1,n}(\pi)$, $\mathit{eq}_{k+1,n} (\pi,\pi')$ and
$\mathit{last}_{k+1,n}(\pi)$. The formulas $\mathit{succ}_{k+1,n}(\pi)$,
$\mathit{num}^i_{k+1,n}(\pi)$ and
$\mathit{last}_{k+1,n}(\pi)$ are defined in a similar fashion as the
basis case ($k=1$). 

In order to define the formula $\mathit{eq}_{k+1,n}
(\pi,\pi')$, let $\pi$ and $\pi'$ be two paths satisfying the formula
$\phi_{k+1,n}^{\Sigma_{k+1}}$. Recall that $\pi$ corresponds to the
encoding of a $(k+1)$-counter 
\[
\sigma_1 d_1 \dots \sigma_j d_j,
\]
where each $\sigma_i$ corresponds to a passage $\pi_{s,t}$ of $\pi$ through
$G_{k,n}^{\Sigma_k}$ and $d_i$ corresponds to the label of an edge
occurring right after that passage. Given the structure of the graph
$G_{k+1,n}^{\Delta}$, that edge is the incoming edge of the node $t+2$.

The paths $\pi$ and $\pi'$ correspond to the encoding of
the same $(k+1)$-counter if for all passages $\pi_{s,t}$ of $\pi$ through
$G_{k,n}^{\Sigma_k}$ and for all  passages $\pi_{s',t'}$ of $\pi$ through
$G_{k,n}^{\Sigma_k}$ such that $\pi_{s,t}$  and $\pi_{s',t'}$ encode
the same $k$-counter, we have that $t+2$ and $t'+2$ are the same nodes. By~\eqref{eq:pass2} and by the induction hypothesis, this can be expressed by the following
formula   $\mathit{eq}_{k+1,n}
(\pi,\pi')$ given by
\[
\forall s^\pi ,t^\pi \; \forall (s')^{\pi'},(t')^{\pi'} \; [\mathit{IF} (s,t,\pi) \wedge \mathit{IF}(s',t',\pi') \wedge
\mathit{eq}_{k,n} (\pi_{s,t},\pi_{s',t'}) \to( t+2 \sim t'+2)]
\]
This finishes the proof of Lemma~\ref{lem:pass}.
\end{proof}

We are now ready to prove Theorem~\ref{theo:main}.

\medskip 

\noindent {\em Proof of Theorem~\ref{theo:main}.}  
%\sProofs{
As explained earlier, we prove that for all
Turing machines $M$ and for all $k$, there is a formula $\phi
\in \wl$ such that for all words $w$ of size $n$, there is a graph
$G_w$ such that
\[
G_w \vDash \phi \quad \text{iff} \quad \text{there is an accepting
run of $M$ over $w$ using at most $\tow{k}{n}$ cells}. 
\] 
Let $(\Sigma,Q, \delta,q_i,q_f)$ be the Turing machine $M$,
where $\Sigma$ is the input alphabet together with a blank symbol $B$, $q_0$ is the initial state, $q_f$ is the final state and
$\delta: Q \times \Sigma \to (Q \times \Sigma \times \{ L,R\})$ is the
transition map, where $L$ stands for ``left'' and $R$ stands for ``right''.

The formula $\phi$ is a formula
of the form
\[
\exists \pi \psi(\pi),
\]
where $\psi$ is a formula that does not contain any quantification
over path variables. Given a word $w$, the label of  $\pi$ in the graph
$G_w$ is the encoding of an accepting run of $M$ over the word
$w$. Recall that we encode a configuration of the machine in the
following way. Suppose that $C$ is a configuration 
where the content of the tape is the word $w'=w'_0 \dots w'_j$ ($j =
\tow{k}{f_0 n}-1$), the
head is scanning the cell number $i_0$ and the machine is
in state $q$.  We may encode 
$C$ by the word $e_C= d^C_0 \dots d^C_{j}$ where each $d_i$ is a sequence
\[
c(i)  \; (q'_i,w'_i),
\]
and $c(i), w'_i$ and $q'_i$ are defined as follows. The word $c(i)$
is the $k$-counter encoding the number $i$. The
 letter $w'_i$ is the content of the cell $i$. The letter $q'_i$ is equal
 to $\$$ if the head is not scanning the cell number $i$; otherwise,
 $q'_i$ is equal to the state $q$. 
This implies that given a configuration $C$, the word $ec_C$ is a
$(k+1,n)$-counter over the alphabet $\Delta := (Q \cup \{ \$\}) \times
\Sigma$.
  
The run of $M$ over the word $w$ is a sequence of configurations of
the form $C_0 C_1 \dots$. We encode the run as  the word $e_{C_0} 
e_{C_1} \dots$ (which is a sequence of $(k+1,n)$-counters). We will define the formula $\psi(\pi)$ and the
graph $G_w$ in such a
way that a path $\pi$ satisfies $\psi$ iff the projection of the label
of $\pi$ on the alphabet $\Gamma_k \cup \Delta$ is the
encoding of an accepting run of $M$ over $w$.    

We think of a path $\pi$ encoding a run of $M$ over $w$ as consisting
of two parts. The label of the first part contains the encoding $ e_{C_0}$ of the
initial configuration $C_0$. The label of the second part contains the encoding  $e_{C_1} 
e_{C_2} \dots$ of the remaining part of the run. The first part of the
path $\pi$ is a path in a subgraph $I_w$ of  $G_w$, while the second
part is a path in the subgraph $H$ (independent of $w$) of $G_w$. The graph $G_w$ will be
obtained by adding an edge from a node of $I_w$ to a node of $H$. 

We start by defining the graph $H$. Recall that $\Delta$ 
is the alphabet $ (Q \cup \{ \$ \}) \times \Sigma$. The graph $H$ is defined
as the graph $G^{\Delta}_{k,n}$ with an additional edge from
the final node to the initial node. Hence,
it follows from the proof of Lemma~\ref{lem:pass} that $H$ is the
following graph,

\begin{center}
\begin{tikzpicture} [scale=0.7]
\node(S) [draw, circle,scale=0.6] at (0,0) {};
\node(G) [draw, circle,scale=1.2] at (3,0) {$G^{\Sigma_k}_{k,n}$};
\node(Si) [draw, circle,scale=0.8] at (6,0) {};
\node(a1)  at (9,2.2) {$d_1$};
\node(a2)  at (9,1.2) {$d_2$};
\node(al) at (9,-2) {$d_l$};
\node [scale=0.8] at (9,-0.5) {$\dots$};
\node(b) [draw, circle,scale=0.8] at (12,0) {};
\node(F) [draw, circle,scale=0.6] at (14,0){};
\node at (6.5,-3) {$\Delta^{k+1}_f$};

\draw [->,>=latex, thick]
    (S) -- (G) node[midway, above] {$i_{k+1,n}^\Delta$};  
\draw [->,>=latex, thick]
    (G) -- (Si)
node[midway,above]{$\Delta^{k+1}$}; 
\draw [->,>=latex, thick]
    (Si) to[bend left =75] (b);
\draw [->,>=latex, thick]
    (Si) to[bend left=25] (b);
\draw [->,>=latex, thick]
    (Si) to[bend right=55] (b);
%\draw [->,>=latex, thick]
 %   (Si) -- (a1) node[midway, above] {$d_1$}; 
%\draw [->,>=latex, thick]
 %   (Si) -- (a2) node[midway, above] {$d_2$}; 
%\draw [->,>=latex, thick]
 %   (Si) -- (al) node[midway, above] {$d_l$}; 
%\draw [->,>=latex, thick]
 %   (a1) -- (b);
%\draw [->,>=latex, thick]
 %   (a2) -- (b);
%\draw [->,>=latex, thick]
%    (al) -- (b);   
\draw [->,>=latex, thick]
    (b) -- (F) node[midway,above]{$f_{k+1,n}^\Delta$} ;   
\draw [->,>=latex, thick]
    (b) edge[out=270,in=270] (G);

\draw [->,>=latex, thick]
    (F) edge[out=270,in=270] (S);
\end{tikzpicture}
\end{center}
where $\Delta=\{d_1,\dots,d_l\}$ and  where the edges with label
$i_{k,n}^{\Delta}$ and $\Delta_f^{k+1}$ are edges pointing to the
initial node of $G_{k,n}^{\Sigma_k}$ and the edge with label
$\Delta^{k+1}$ is an edge starting from the final node of 
$G_{k,n}^{\Sigma_k}$. In the above paragraphs, any edge pointing to
the graph $G_{k-1,n}^{\Sigma_k}$  is an edge pointing to the initial
node of that graph. Similarly, any edge starting from the graph $G_{k-1,n}^{\Sigma_k}$ will
always be referring to an edge starting in the final node of the graph.

Recall that if a path
$\pi$ encodes a run $C_0C_1C_2 \dots$, the trace of $\pi$ on $H$
will encode the part $C_1 C_2 \dots$ of the run. Each configuration $C_i$ 
is encoded as a $(k+1,f_0 n)$-description over $\Delta$, which will correspond, as
in Lemma~\ref{lem:pass}, to a passage of the path $\pi$ from the
initial node to the final node of $G_{k-1,n}^{\Sigma_k}$.

We define now the graph $I_w$ encoding the initial configuration of
the tape. Recall that in the initial configuration, the tape contains
the word $w=w_0 \dots w_{n-1}$, all the cells with number $\geq n$
contain 
the blank symbol $B$, the head is scanning the first cell and the
state is $q_0$. The graph $I_w$ is obtained by ``assembling'' the
subgraphs $K_0, \dots,K_{n-1}$ and $K$, which we will define next. For each $i \leq n$, the 
graph $K_i$ is such that the label of its unique maximal path  is the
encoding of the cell number $i$ in the initial configuration. The trace
of the path $\pi$ on the graph $K$ is the encoding of the contents of
the cells with number $\geq n$ in the initial configuration.  

More precisely, we define the graphs $K_0, \dots,K_{n-1}$ and $K$  in the
following way. The following graph is the graph $K_0$

\begin{center}
\begin{tikzpicture} [scale=0.5]
\node(E0) [draw, circle,scale=0.8] at (-8,0) {};
\node(E) [draw, circle,scale=0.8] at (-4,0) {};
\node(T) [draw, circle,scale=1.2] at (0,0) {$G_{k,n}^{\Sigma_k}$};
\node(S) [draw, circle,scale=0.7] at (4,0) {};
\draw [->,>=latex, thick]
    (E0) -- (E) node[midway,above]{$\mathit{in}$};
\draw [->,>=latex, thick]
    (E) -- (T) node[midway,above]{$\#_0$};
\draw [->,>=latex, thick]
    (T) -- (S) node[midway,above]{$(q_i,w_0)$};
\end{tikzpicture}
\end{center}
The node with label $\mathit{in}$ will be the starting node of the path
$\pi$. Since the trace of $\pi$ in $K_1$ encodes the content
of the cell with number $0$ in the initial configuration, its label must contain
the $k$-counter encoding the number $0$ followed by the letter
$(q_i,w_0)$ of the alphabet $\Delta$ (indicating that the first cell
contains the letter $w_0$, the head is scanning the first cell and the
current state is $q_0$).  Using Lemma~\ref{lem:pass} and the formula
$\mathit{num}_{k,n}^0$, we will impose that the passage of $\pi$
through the subgraph $G_{k,n}^{\Sigma_k}$ of 
$K_0$  corresponds to the encoding of the number $0$.  

Next, for all $1 \leq i \leq n$, we define $K_i$ as the following
graph
\begin{center}
\begin{tikzpicture} [scale=0.6]
\node(T0) [draw, circle,scale=0.7] at (-4,0) {};
\node(T) [draw, circle,scale=1.2] at (0,0) {$G_{k,n}^{\Sigma_k}$};
\node(S) [draw, circle,scale=0.7] at (4,0) {};
\draw [->,>=latex, thick]
    (T0) -- (T) node[midway,above]{$\#_i$};
\draw [->,>=latex, thick]
    (T) -- (S) node[midway,above]{$(\$,w_i)$};
\end{tikzpicture}
\end{center}
Recall that we want to define $K_i$ in such a way that the trace of
$\pi$ on $K_i$ is the encoding of the contents of the cell with number
$\mathit{in}$ in the initial configuration (that it, it contains the letter
$w_i$ and the head is not scanning the cell since $i \neq 0$). Recall
that the encoding of such a cell (and its content) is given by
\[
c(i) (\$,w_i),
\] 
where $c(i)$ is the $k$-counter encoding the number $i$. We will use
the formula $\mathit{num}_{k,n}^i$ given by Lemma~\ref{lem:pass} to
express that the passage of $\pi$ through the subgraph
$G_{k,n}^{\Sigma_k}$ of $K_i$ corresponds to the $k$-counter encoding
$i$. 

Finally we define the graph $K$ as the graph
\begin{center}
\begin{tikzpicture} [scale=0.6]
\node(T) [draw, circle,scale=1.2] at (0,0) {$G_{k,n}^{\Delta_B}$};
\end{tikzpicture}
\end{center}
where $\Delta_B$ is the one-letter alphabet containing the blank
symbol $B$. 
Recall that the trace of $\pi$ on the graph $K$ will encode the
contents of the cells with number $\geq n$ in the initial configuration
(that is, the fact that those cells contain the blank symbol and are
not scanned by the head). Since the encoding of such a cell
with number $i$ 
is given by
\[
c(i) (\$,B)
\] 
(where $c(i)$ is the $k$-counter encoding $i$),  the label of the trace of
$\pi$ on $K$ must contain the word
\[
c(n+1) (B,\$) \dots c(j) (B,\$),
\]
where $j=\tow{k}{n}-1$. That is, the label of the trace of $\pi$ on $K$
is the unique $(k,f_0 n,n+1)$-description over the alphabet $\Delta_B$. We
will express that the passage of $\pi$ through the graph
$G_{k,n}^{\Delta_B}$ corresponds to the $(k,f_0 n,n+1)$-description over the
alphabet $\Delta_B$ using the formula $\phi_{k,n,n+1}^{\Delta_B}$ provided
by Lemma~\ref{lem:pass}.

We are now ready to define the graph $I_w$ which is obtained by assembling the
graphs previously introduced in the following way.

\begin{center}
\begin{tikzpicture} [scale=0.5]

\node(0) [draw, circle,scale=1] at (0,0) {$K_0$};
\node(1) [draw, circle,scale=1] at (5,0) {$K_1$};
\draw [->,>=latex, thick]
    (0) -- (1);

\node(2) at (10,0) {};
\draw [->,>=latex, thick]
    (1) -- (2);

\node(d) at (11,0) {$\dots$};
\node(n1) at (12,0) {};
\node(n) [draw, circle,scale=1] at (17,0) {$K_{n-1}$};

\draw [->,>=latex, thick]
    (n1) -- (n);
\node(i) [draw, circle,scale=1] at (22,0) {$K$};
\draw [->,>=latex, thick]
    (n) -- (i);

\end{tikzpicture}
\end{center}

Each edge between two graphs in the picture above is an edge from the ``left-most'' node
of the first graph to the ``right-most'' node of the second graph. Finally the graph $G_w$ is the graph obtained by
considering the union of the graph $I_w$ and $H$ and 
adding an edge from the  final node of $K$ to the
initial node of $H$. 

Now  that we have defined the graph $G_w$, we are ready to define the
formula $\psi$. The formula $\psi(\pi)$ is obtained as the conjunction of the following
formulas.

\begin{enumerate}[label=\({\Alph*}]
\item  First we need to express that the path $\pi$ starts with the
  edge 
  with label $\mathit{in}$.
\item  We need to express that eventually in a configuration, the
  machine reaches the final state $q_f$. 
\item We also have to express that each passage of the path $\pi$ from the
  initial node of $G_{k,n}^{\Sigma_k}$ to the final node of 
$G_{k,n}^{\Sigma_k}$ in the graph $H$ corresponds to the encoding
  of a $(k+1,f_0 n)$-description. 
\item We have to express that for all $i < n$, the trace of $\pi$ on the
  subgraph $G_{k,n}^{\Sigma_k}$ of the graph $K_i$ corresponds to the
  $k$-counter encoding $i$.
\item We need to express that the trace of the path $\pi$ on the
  subgraph $G_{k+1,n}^{\Delta_B}$ of $K$ is the unique
  $(k+1,f_0 n,n+1)$-description over the alphabet $\Delta_B$. 
\item  Finally we need to express how we move from one configuration of the
  tape to the next one. 
\end{enumerate}
Cases~(A) and~(B) are straightforward. Cases~(C),~(D) and~(E) are
similar and we only give details for case~(C) and case~(F). By Lemma~\ref{lem:pass}, case~(C) means that
\begin{equation} \label{eq:pist} 
 \text{if $\pi_{s,t}$ is the subpath of $\pi$ corresponding to such a
  passage, then $\phi_{k+1,n,0}^{\Delta} (\pi_{s,t})$ holds. }
\end{equation}

The node $s$ is a node satisfying $i_{k,n}^{\Sigma_k}$, while $t$ is the
``closest'' node to $s$ with an incoming edge with label $f_{k+1,n}^\Delta$. This is expressed
by the following
formula   
$\mathit{IF}^\Delta_{k+1,n} (s,t,\pi)$ defined by
\begin{equation} \label{eq:if1}
i_{k+1,n}^{\Sigma_k} (s) \wedge E_{f_{k+1,n}^{\Sigma_{k=1}}} (t-1,t) \wedge \neg \exists r^\pi \;
[s < r <t \wedge E_{f_{k+1,n}^{\Sigma_{k+1}}} (r-1,r)].
\end{equation}
It follows from the definitions of $\mathit{IF}^\Delta (s,t,\pi)$ and
the graph $G_{k+1,n}^\Delta$ that
\begin{align} 
G_{k+1,n}^\Delta \vDash \mathit{IF}^\Delta_{k+1,n} (s,t,\pi)  \; \;
\text{iff}  & \; \; \text{$s$
is the first node of a passage of $\pi$ through
 $G_{k+1,n}^{\Delta}$} \notag  \\ 
 & \; \; \text{and $t$ is the last node of that same passage.}  \label{eq:if}
\end{align}
Hence,~\eqref{eq:pist} is equivalent to the fact that if $
\mathit{IF}^\Delta_{k+1,n} (s,t,\pi) $ holds, so does the formula
$\phi_{k+1,n,0}^{\Delta} (\pi_{s,t})$. Therefore 
the formula 
\[
\forall s^\pi,t^\pi \; [IF^{\Delta} (s,t,\pi) \to \phi_{k+1,n,0}^{\Delta} (\pi_{s,t})]
\]
expresses case~(C). 

Finally we treat the most difficult case which is case~(F). We need to
express how we move from one configuration of the
  tape to the next one. Recall that the trace of $\pi$ on the graph
  $H$ will contain the encoding of the sequence $C_1 C_2 \dots$ of
  the run, where $C_0 C_1 \dots$ is the full run of the machine on the
  input $w$. 

Let $\pi_{s,t}$ be the subpath of $\pi$ corresponding to the encoding
of a configuration $C_i$ and let $\pi_{s',t'}$ be the subpath of $\pi$
corresponding to the configuration $C_{i+1}$. We need to express how
to move from the configuration $C_i$ to the configuration
$C_{i+1}$. Suppose that in the configuration $C_i$, the current state
is $q$, and the head is scanning the cell $c$ containing the letter
$u$. Suppose also that $\delta(q,u)=(q',v,R)$ (we can treat
similarly the case where the head moves to
the left). In order to keep our formulas simpler, we use a slightly
different definition of a run of a Turing machine, but it would be
clear that the notion of run that we use here, can be simulated by a
usual Turing machine. Here, we assume that if the machine scans a cell
$c$ with content $u$ and $\delta(q,u)=(q',v,R)$, then in the next
state, the machine scans the successor $c'$ of $c$, the content of
$c'$ is $v$, while the content of $c$ is $u$ (in the usual definition,
the content of $c'$ is unchanged, while the content of $c$ is $v$). 

Let $\pi_{r,s}$ be the subpath of $\pi$ corresponding to the encoding
of the cell $c$ in the configuration $C_i$. Let $\pi_{r',s'}$ be the
encoding of an arbitrary cell $c'$ in the configuration $C_{i+1}$. If
$c'$ is the successor of the cell $c$, then the head should scan the
cell $c'$ and the content of $c'$ should be the letter $v$. We express
this by the formula $\mathit{change}^R_{(q,a,q',b)}(r,s,r',s',\pi)$ defined by 
\[
\mathit{succ}_{k,n} (\pi_{s,r},\pi_{r',s'}) \wedge (q,u) (s+2) \to
(q',v) (s'+2).
\]
Recall that by Lemma~\ref{lem:pass}, $\mathit{succ}_{k,n}
(\pi_{r,s},\pi_{r',s'})$ is the formula expressing that the
$k$-counter associated with $\pi_{r',s'}$ is the successor of the
$k$-counter associated with $\pi_{r,s}$. 

If $c'$ is not the successor of the cell $c$, then the head is not
scanning the cell $c'$ and its content remains unchanged. If
$\pi_{x,y}$ is the subpath of $\pi$ corresponding to the content of
the cell $c'$ in the configuration $C_i$, this is expressed by the
following formula $\mathit{stay}^R_{(q,u,q',v)}(r,s,x,y,r',s',\pi)$
defined by
\[
\neg \mathit{succ}_{k,n} (\pi_{r,s},\pi_{r',s'}) \wedge (q,u) (s+2)
\wedge \mathit{eq}_{k,n} (\pi_{x,y},\pi_{r',s'}) \wedge (q'',u_0) (x,y) \to (\$,u_0) (s+2)
\]
where $q'' \in Q \cup \{ \$\}$. 
Recall that by Lemma~\ref{lem:pass}, $\mathit{eq}_{k,n}
(\pi_{x,y},\pi_{r',s'})$ expresses that the $k$-counters associated
with $\pi_{x,y}$ and $\pi_{r',s'}$ are the same. 

Now we need to express that the paths $\pi_{r,s}$, $\pi_{s',s'}$ and
$\pi_{x,y}$ correspond to the encodings of $k$-counters. By
Lemma~\ref{lem:pass}, this means that those paths correspond to
passages of $\pi$ through the graph $G_{k,n}^{\Sigma_k}$. Similarly
to~\eqref{eq:if1}, we introduce a formula
$\mathit{IF}^{\Sigma_k}_{k,n}(r,s,\pi)$ defined by
\[
i_{k,n}^{\Sigma_k} (r) \wedge E_{f_{k,n}^{\Sigma_k}} (s-1,s) \wedge \neg \exists t^\pi \;
[r < t <s \wedge E_{f_{k,n}^{\Sigma_k}} (t-1,t)].
\]
This formula expresses that the path $\pi_{r,s}$ corresponds to the
encoding of a $k$-counter. 

Next we also need a formula to assert that
the paths $\pi_{r,s}$ and $\pi_{r',s'}$ appear in the encodings of
successive configurations (and similarly, that the paths $\pi_{x,y}$ and $\pi_{r',s'}$ appear in the encodings of
successive configurations). Since the encoding of a configuration
starts with the unique edge with label $i_{k+1,n}^\Delta$ (and that
edge only occurs at the beginning of the encoding of a configuration), this is
equivalent to say that there is a unique edge between $s$ and $r'$
with label $i_{k+1,n}^\Delta$. This is expressed by the formula
$\mathit{config}(s,r',\pi)$ defined by
\[
\exists t^\pi \; [(s < t < r') \wedge i_{k+1,n}^\Delta (t) \wedge
\neg \exists (t')^\pi \; [(s < t' < r') \wedge i_{k+1,n}^\Delta (t')  \wedge
(t' \neq t)] ].
\]

We are now ready to define 
$\theta^R_{q,u,q',v} (\pi) $ as the following formula 
\begin{align}
 && \forall r^\pi,s^\pi,(r')^\pi,(s')^\pi,x^\pi,y^\pi \; [\mathit{IF}_{k,n}^{\Sigma_k}(r,s,\pi)
\wedge \mathit{IF}_{k,n}^{\Sigma_k}(r',s',\pi)  \wedge
\mathit{IF}_{k,n}^{\Sigma_k}(x,y,\pi)   \label{eq:1}\\ &&
\wedge \mathit{config}(s,r',\pi) \wedge
\mathit{config}(y,r'\pi)  \label{eq:2}\\  && \to
\mathit{change}^R_{(q,u,q',v)}(s,r,s',r',\pi) \wedge
\mathit{stay}^R_{(q,u,q',v)}(r,s,x,y,r',s',\pi)]. \label{eq:3}
\end{align}
It expresses the following. Suppose that $\pi_{r,s}$, $\pi_{r,s}$
and $\pi_{r',s'} $ are $k$-counters encoding the numbers of three
cells (this corresponds to~\eqref{eq:1}). Suppose that $\pi_{r,s}$ and
$\pi_{x,y}$ correspond to cells occurring in the same configuration $C$
and that the cell corresponding to $\pi_{r',s'}$ occurs in the next
configuration  $C'$ (this is expressed by~\eqref{eq:2}). Then, if we
``apply'' the transition $\delta(q,u)=(q',v,R)$ to move from $C$ to
$C'$, we  move the head to the right and update the content of the
cell being scanned (as expressed by the formula
$\mathit{change}^R_{(q,u,q',v)}(r,s,r',s',\pi)$) and we leave the
other cells unchanged (as expressed by the formula
$\mathit{stay}^R_{(q,u,q',v)}(r,s,x,y,r',s',\pi)$).  

We define now the formula $\theta_R(\pi)$ as the formula
\[
\bigwedge \{ \theta^R_{q,u,q',v} (\pi) : \delta(q,u)=(q',v,R)\}.
\]
This formula expresses how we move from one configuration to another,
when the head moves to the right. Similarly, we can define a formula
$\theta_L(\pi)$ expressing how we move from one configuration to another,
when the head moves to the left.

This finishes the proof of Theorem \ref{theo:main}. \qed

%\boxtheorem 

\medskip

As a corollary to the proof of Theorem \ref{theo:main}, we obtain that
data complexity is non-elementary even for simple WL formulas that
talk about a single path in a graph database.

\begin{corollary} The evaluation problem for WL over graph databases
is non-elementary in data complexity, even if restricted to Boolean WL
formulas of the form $\exists \pi \psi$, where $\psi$ uses no path
quantification and contains no position variable of sort different than 
$\pi$. \end{corollary}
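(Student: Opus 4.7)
The plan is to inspect the formula $\phi = \exists \pi \, \psi(\pi)$ constructed in the proof of Theorem~\ref{theo:main} and observe that it can be put in the syntactic shape required by the corollary. By design, $\psi$ already contains no quantification over path variables; what remains is to eliminate position variables of sort different from $\pi$. I would start by pinpointing the only places where foreign sorts can appear: they enter through the two-argument subformulas $\mathit{succ}_{k,n}(\pi_1,\pi_2)$, $\mathit{eq}_{k,n}(\pi_1,\pi_2)$, and $\mathit{eq}_{k+1,n}(\pi_1,\pi_2)$ from Lemma~\ref{lem:pass}, whose bodies quantify over positions of sorts $\pi_1$ and $\pi_2$. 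In the construction of $\psi$, however, every invocation of such a formula takes the form $F(\pi_{s,t},\pi_{s',t'})$, where both arguments are syntactic restrictions of the single path variable $\pi$ to disjoint index intervals on it.

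Next I would extend the restriction notation $\phi(\pi_{s,t})$ introduced earlier to two arguments in the obvious way: every quantifier $\exists u^{\pi_1}$ is replaced by $\exists u^\pi$ relativised to $s < u < t$, every quantifier $\exists u^{\pi_2}$ by $\exists u^\pi$ relativised to $s' < u < t'$, every order atom $u^{\pi_i} < v^{\pi_i}$ stays as an order atom on $\pi$, and every atom of the form $u^{\pi_i} \sim v^{\pi_j}$ becomes $u^\pi \sim v^\pi$. Since all positions then live on the single path $\pi$ and the semantics of $\sim$ depends only on the datum carried at the node in that position, the rewriting is sound whenever the intervals $[s,t]$ and $[s',t']$ are precisely those that pick out the intended subpaths $\pi_{s,t}$ and $\pi_{s',t'}$. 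Edge atoms $E_a(u,v)$ are unaffected, since they already only refer to the path $\pi$. Applying this rewriting uniformly at every nesting level yields a formula $\psi'$ that uses only position variables of sort $\pi$, contains no path quantification, and is equivalent to $\psi$ when interpreted over the graphs $G_w$ of Theorem~\ref{theo:main}.

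Putting the pieces together, $\exists \pi \, \psi'$ meets the syntactic restrictions of the corollary and still encodes $k$-\expspace-hardness in data complexity for every $k$, so non-elementary data complexity already holds inside this fragment. The main subtlety, though mild, is to verify that the sort-rewriting propagates correctly through the inductive definitions in Lemma~\ref{lem:pass}: the two-argument formulas at level $k+1$ are built by invoking two-argument formulas at level $k$ on further subpaths, so the rewriting must commute with the inductive construction. Fortunately this is a purely syntactic and structural transformation, and once the correspondence between subpath notation and interval-relativised quantifiers is set up, a routine induction on the formulas of Lemma~\ref{lem:pass} discharges it.
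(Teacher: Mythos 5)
Your proposal matches the paper's own (implicit) argument: the corollary is obtained by inspecting the formula $\exists\pi\,\psi(\pi)$ built in the proof of Theorem~\ref{theo:main} and noting that, via the restriction notation $\phi(\pi_{s,t})$, all auxiliary formulas of Lemma~\ref{lem:pass} are only ever applied to interval-relativised subpaths of the single path $\pi$, so $\psi$ needs no path quantifiers and no position variables of foreign sort. Your explicit extension of the rewriting to the two-argument formulas ($\mathit{succ}$, $\mathit{eq}$) is exactly the bookkeeping the paper leaves implicit, so this is essentially the same proof.
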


\section{Register Logic} 
\label{sec:rl}

We saw in the previous section that WL is impractical due to its 
very high data complexity. In this section, we start by recalling the
notion of regular expressions with memory (REM) and their basic
results from \cite{LV}. In our view, this logic is rather limited in
terms of expressive power.  For instance, the query (Q) from the
introduction cannot be expressed in REM. We then introduce an
extension of REM, called regular logic (RL), that remedies this
limitation in expressive power (in fact, it can express many natural
examples of queries expressible in WL, e.g., those given in \cite{WL})
while retaining elementary complexity of query evaluation. Finally, we
study which fragments of RL are well-behaved
for database applications.

\OMIT{

\subsection{Data path queries and regular expressions with memory} 

\OMIT{
The study of querying graphs with data has concentrated on an
edge-labeled graph database model, similar to the one presented at the
end of the preceeding section, but where each node is assigned a data
value (i.e. a value from an infinite alphabet $\D$). Assuming a single data value per
node is not a restriction, as one can easily represent graphs in which
nodes are assigned more than one data value by adding extra outgoing edges
with distinguished labels. 

Formally, a {\em data
graph} $G$ over finite alphabet $\Sigma$ is a tuple $(V,E,\lambda)$,
where $V$ is a finite set of nodes, $E \subseteq V \times \Sigma
\times V$, and $\lambda : V \to \D$ is a function that assigns a data
value in $\D$ to each node in $V$. 
Walks over data graphs are defined in the same way than over
edge-labeled graphs. That is, a walk in a data graph $G =
(V,E,\lambda)$ over $\Sigma$ is a sequence $\rho = v_1 a_1 v_2 \cdots v_{n-1}
a_{n-1} v_n$, such that $(v_{i-1},a_{i-1},v_i) \in E$, for each $1
\leq i < n$. The {\em label} of $\rho$ is the string $a_1 a_2 \cdots
a_{k-1} \in \Sigma^*$. }

%\subsection{Data walk queries} 

The basic building block for navigational query languages over
graph databases is the class of {\em regular path
  queries} (RPQs) \cite{MW95}.  An RPQ allows to identify pairs of
nodes in a graph database that are linked by a path whose {\em label}
satisfies a regular expression $L$. Formally, an RPQ over $\Sigma$ is
a regular expression $L$ over $\Sigma$, and its evaluation over a
graph database $G = (V,E)$ consists of all pairs $(v_1,v_2) \in V
\times V$ for which there is a path $v_1 a_1 v_2 \cdots v_{k-1}
a_{k-1} v_k$ in $G$ such that its label $a_1 a_2 \cdots a_{k-1} \in
\Sigma^*$ satisfies $L$.
Thus, RPQs define queries over graph databases that only care about the
{\em topology} of the data, but not about its underlying node ids (in
particular, two different paths with the same label are invariant with
respecto to any RPQ $L$). 

Recently, RPQs have been extended in a way that the topology and the
underlying data are combined to express interesting properties
\cite{LV,xpath-leonid}.  In our setting, this approach allows to
traverse the edges of the graph checking for the existence of a path
that satisfies a regular condition (in the same way than RPQs), while
at the same time comparing the underlying node ids that appear in such
path. 
%We are thus interested in querying a certain class of objects,
%called {\em data paths}, that are obtained from paths by replacing
%each node by its data value. Formally, with each path $\rho = v_1 a_1
%v_2 v_{k-1} a_{k-1} v_k$ in a data graph $G = (V,E,\lambda)$ over
%$\Sigma$, we associate a {\em data path} $$\rho_\D \ = \ \lambda(v_1)
%a_1 \lambda(v_2) a_2 \cdots \lambda(v_{k-1}) a_{k-1} \lambda(v_k).$$
%where $\bot$ is a fresh symbol not in $\Sigma$.
%Thus, a data walk $s(\rho)$ is a string in the language $(\D \times \Sigma)^* (\D
%\times \{\bot\})$. 
By analogy to RPQs, we define a {\em data path
  query} (DPQ) as a language $R$ of paths. The evaluation $R(G)$
of $R$ on $G$ consists of all pairs $(v_1,v_2) \in V \times V$ 
for which there is a path $\rho$ from $v_1$ to $v_2$ 
in $G$ such that $\rho$ belongs to $R$.  

There are different formalisms that allow to express DPQs. These
include FO, MSO, pebble automata, LTL with a freeze quantifier, and
register automata (see, e.g., \cite{LV}). Unfortunately, the data
complexity of evaluation of DPQs defined by any of these formalisms --
except for the last one, register automata -- is intractable. 
Thus, the literature on DPQs
has concentrated on queries defined by register automata, and, more,
in particular, on DPQs specified by a class of regular expressions --
called {\em regular expressions with memory} (REMs) \cite{LV} -- that
are equivalent in expressive power to register automata \cite{LV1},
but allow to express properties of paths more naturally. }

\subsection{Regular expressions with memory}

REMs define pairs of nodes in data graphs that are linked by a path
that satisfies a constraint in the way in which the topology
interacts with the underlying data.  
%\paragraph{{\bf Regular expressions with memory:}} 
REMs allow us to remember data values and use them later. %when data
%values are remembered and used. 
Data values
are stored in $k$ registers 
$r_1,\dots,r_k$. At any point we can compare a data value with one
previously stored in the registers. As an example, consider the REM
$\downarrow \! r. a^+[r^=]$. This can be read as follows: Store the current
data value  in register $r$ (represented by the expression $\downarrow \! r$), and then check that after reading a word in $a^+$
we see the same data value again (condition $[r^=]$). We formally
define REM next. 

Let $r_1,\dots,r_k$ be registers. The set of {\em conditions} $c$ over
$\{r_1,\dots,r_k\}$ is recursively defined as: $c \; := \; r_i^= \, \mid \, c
\wedge c \, \mid \, \neg c$, for $1 \leq i \leq k$. Assume that $\V_\bot$ 
is the extension of the set $\D$ of data values 
with a new symbol $\bot$. 
Satisfaction of conditions is defined with respect to a value $d \in
\V$ (the data value that is currently being scanned) and a tuple $\tau =
(d_1,\dots,d_k) \in \V_\bot^k$ (the data values stored in the registers,
assuming that $d_i = \bot$ represents the fact that register $r_i$ has no value
assigned) as
follows (Boolean combinations omitted): $(d,\tau) \models r_i^=$ iff $d
= d_i$.

\begin{definition}[REMs]
The class of REMs over $\Sigma$ and $\{r_1,\dots,r_k\}$ is defined by the grammar:
$$e \ := \ \varepsilon \: \mid \: a \: \mid \: e \cup e \: \mid \: 
e \cdot e \: \mid \: e^+
\: \mid \: e[c] \: \mid \:\, \downarrow \!\bar r. e$$
where $a$ ranges over symbols in $\Sigma$, $c$ over conditions over
$\{r_1,\dots,r_k\}$, and $\bar r$ over tuples of elements in
$\{r_1,\dots,r_k\}$. \qed
\end{definition} 

That is, REM extends the class of regular expressions $e$ -- which 
is a popular mechanism for specifying topological properties of paths in graph
databases (see, e.g., \cite{Wood,Bar13}) -- 
with expressions of the form $e[c]$, for $c$ a condition, and
$\downarrow \! \bar r. e$, for $\bar r$ a tuple of registers -- that
define how such topology interacts with the data.

\paragraph{{\bf Semantics:}} 
To define the evaluation $e(G)$ of an REM $e$ over a data graph $G = (V,E,\kappa)$, 
we use a relation $\sem{e}_G$ that consists of tuples of the form 
$(u,\lambda,\rho,v,\lambda')$, for $u,v$ nodes in $V$, $\rho$ a path in $G$ from $u$ to $v$,  
%$(e,\rho_\D,\lambda)
%\vdash \lambda'$, for $e$ an REM, $\rho_\D$ a data path in $G$, and
and $\lambda,\lambda'$ two $k$-tuples over $\V_\bot$. 
%The intuition behind the relation $(e,\rho_\D,\lambda)
%\vdash \lambda'$ 
%s the following: 
The intuition is the following: the tuple $(u,\lambda,\rho,v,\lambda')$ belongs to
$\sem{e}_G$ if and only if the data and topology of $\rho$ can be parsed
according to $e$, with $\lambda$ being the initial assignment of the
registers, in such a way that the final assignment is $\lambda'$. 
We then define $e(G)$ as the pairs $(u,v)$ of nodes in $G$ such that 
$(u,\bot^k,\rho,v,\lambda) \in \sem{e}_G$, for some path $\rho$ in $G$ from $u$ to $v$ and 
$k$-tuple
$\lambda$ over $\V_\bot$.  

We inductively define relation
$\sem{e}_G$ below. 
%$(e,\rho_\D,\lambda) \vdash
%\lambda'$ below. But before, we need to introduce some extra
%terminology.  
We assume that $\lambda_{\bar r = d}$, for $d \in \D$, is the tuple
obtained from $\lambda$ by setting all registers in $\bar r$ to be
$d$. Also, if $\rho_1 = v_1 a_1 v_2
\cdots v_{k-1} a_{k-1} v_k$ and $\rho_2 = v_k a_k v_{k+1} \cdots v_{n-1} a_{n-1} v_n$ are paths, 
then: $$\rho_1 \rho_2 \ := \ 
v_1 a_1 v_2 \cdots v_{k-1} a_{k-1} v_k a_k v_{k+1} \cdots v_{n-1} a_{n-1} v_n.$$ 
Then we define: 
%Also, the {\em concatenation} of two data paths of the form
%$\eta_1 = d_0 a_0 d_1 \cdots d_{k-1} a_{k-1} d_k$ and $\eta_2 = d_{k}
%a_k d_{k+1} \cdots d_{n-1} a_{n-1} d_n$ is defined as $$\eta_1 \eta_2
% = \ 
%d_0 a_0 d_1 \cdots d_{k-1} a_{k-1} d_k a_k d_{k+1} \cdots d_{n-1}
%a_{n-1} d_n.$$ If $\rho_\D$ is obtained by concatenating the data paths
%$\eta_1,\eta_2,\dots,\eta_\ell$, i.e., $\rho_\D = \eta_1 \eta_2 \cdots
%\eta_\ell$, then we say that there is a {\em splitting} of $\rho_\D$ into
%$\eta_1 \eta_2 \cdots \eta_\ell$.  Then:

\begin{itemize}

\item 
%$(\varepsilon,\rho_\D,\lambda) \vdash \lambda'$ 
$\sem{\varepsilon}_G = \{(u,\lambda,\rho,u,\lambda) : u \in V, \, \rho = u, \, \lambda \in \D_\bot^k\}$. 

\item 
$\sem{a}_G = \{(u,\lambda,\rho,v,\lambda) : \rho = u a v, \, \lambda \in \D_\bot^k\}$. 
%$(a,\rho_\D,\lambda) \vdash \lambda'$ iff $\rho_\D = d_1 a d_2$
%and $\lambda = \lambda'$. 

\item 
$\sem{e_1 \cup e_2}_G = \sem{e_1}_G \cup \sem{e_2}_G$. 
%$(e_1 \! \cup \! e_2,\rho_\D,\lambda) \vdash \lambda'$ iff
%$(e_1,\rho_\D,\lambda) \vdash \lambda'$ or $(e_2,\rho_\D,\lambda)
%\vdash \lambda'$. 

\item $\sem{e_1 \cdot e_2}_G = \sem{e_1}_G \circ \sem{e_2}_G$, where 
%$(e_1 \cdot e_2,\rho_\D,\lambda) \vdash \lambda'$ iff there is a
%splitting $\eta_1 \eta_2$ of $\rho_\D$ and 
$\sem{e_1}_G \circ \sem{e_2}_G$ is the set of tuples $(u,\lambda,\rho,v,\lambda')$ 
such that 
$(u,\lambda,\rho_1,w,\lambda'') \in \sem{e_1}_G$ and $(w,\lambda'',\rho_2,v,\lambda') \in \sem{e_2}_G$, 
for some $w \in V$, $k$-tuple
$\lambda''$ over $\V_\bot$, and paths $\rho_1,\rho_2$ such that 
$\rho = \rho_1 \rho_2$. 

\item 
$\sem{e^+}_G = \sem{e}_G \cup (\sem{e}_G \circ \sem{e}_G) \cup (\sem{e}_G \circ \sem{e}_G \circ \sem{e}_G) \dots$
 %$(e^+,\rho_\D,\lambda) \vdash \lambda')$ iff there is a
%splitting $\eta_1 \eta_2 \cdots \eta_\ell$ of $\rho_\D$ ($\ell \geq
%1$) and $k$-tuples
%$\lambda = \lambda_0, \lambda_1,\dots,\lambda_m = \lambda'$ over
%$\V_\bot$, 
%such that
%$(e,\eta_i,\lambda_{i-1}) \vdash \lambda_i$, for all $1
%leq i \leq \ell$. 

\item $\sem{e[c]}_G =$ $\{(u,\lambda,\rho,v,\lambda') \in \sem{e}_G :$ $(\kappa(v),\lambda') \models c\}$. 
%$(e[c],\rho_\D,\lambda) \vdash \lambda'$ iff
%$(e,\rho_\D,\lambda) \vdash \lambda'$ and $(d,\lambda') \models c$,
%where $d$ is the last data value of $\rho_\D$.   

\item $\sem{\downarrow \!\bar r. e}_G =$ $\{(u,\lambda,\rho,v,\lambda') : (u,\lambda_{\bar r = \kappa(u)},\rho,v,\lambda') \in \sem{e}_G\}$.  
%$(\downarrow \!\bar r. e,\rho_\D,\lambda) \vdash \lambda'$ iff
%$(e,\rho_\D,\lambda_{\bar r = d}) \vdash \lambda'$, where $d$ is the
%first value of $\rho$. 

\end{itemize} 
For each REM $e$, we will use the shorthand notation $e^*$ to denote
$\varepsilon \cup e^+$.

\begin{example} The REM $\Sigma^* \cdot (\downarrow \!\! r. \Sigma^+
  [r^=]) \cdot \Sigma^*$ defines the pairs of nodes in data graphs 
that are linked by
  a path in which two nodes have the same data value. The REM
  $\downarrow \!r. (a[\neg r^=])^+$ defines the pairs of nodes that
  are linked by a path $\rho$ with label in $a^+$, such that the data
  value of the first node in the path is different from the data value
  of all other nodes in $\rho$.  \hfill $\Box$ \end{example}

The problem {\sc Eval}(REM) is, 
given a data graph $G = (V,E,\kappa)$, a pair $(v_1,v_2)$ of nodes in
$V$, and an REM $e$, is $(v_1,v_2) \in e(G)$?  
The data complexity of the problem refers again to
the case when $e$ is considered to be fixed.  REMs are tractable in
data complexity and have no worse combined complexity than FO over
relational databases:

\begin{proposition}[\cite{LV}] 
{\sc Eval}{\em (REM)} is {\sc Pspace}-complete, 
and {\sc Nlogspace}-complete in data complexity.  
\end{proposition}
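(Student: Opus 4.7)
The plan is to establish the upper and lower bounds independently, using the known equivalence between REMs and register automata as the main technical device. For the \pspace\ upper bound, I would first translate the given REM $e$ into an equivalent register automaton $A_e$ with $O(|e|)$ states and at most $|e|$ registers, by an inductive construction parallel to the Thompson construction for ordinary regular expressions (with the operators $\downarrow\!\bar r.\,e$ and $e[c]$ realised by $\varepsilon$-transitions that rewrite the register bank or guard an edge by a condition, respectively). Given $G=(V,E,\kappa)$ and $(v_1,v_2)$, I would then nondeterministically search for an accepting run of $A_e$ along a path of $G$ from $v_1$ to $v_2$, maintaining as the current configuration a triple consisting of a node of $V$, a state of $A_e$, and a register assignment in $\V_\bot^k$. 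Since the only data values that can ever appear in registers are drawn from $\kappa(V)\cup\{\bot\}$, each configuration has size polynomial in $|G|+|e|$, so the search runs in \textsc{NPspace}, hence in \pspace\ by Savitch's theorem.

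For the data-complexity refinement, the expression $e$ (and thus $A_e$, its state set and its number of registers) is fixed, so a configuration consists of $O(1)$ many values plus a node identifier, all encodable in $O(\log|G|)$ bits; the same nondeterministic simulation then runs in \nlogspace. The matching \nlogspace\ lower bound is immediate from directed-graph reachability: using the fixed REM $e := a^+$ over a singleton alphabet, $(v_1,v_2)$ lies in $e(G)$ iff $v_2$ is reachable from $v_1$ in the underlying graph.

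For the \pspace\ lower bound I would reduce from a canonical \pspace-complete problem such as \textsc{Qbf} or the word problem for a polynomial-space Turing machine. The idea is to build a data graph whose paths enumerate candidate configuration sequences of the machine, and to use the REM's registers together with the equality comparator $r^=$ to enforce the consistency of successive configurations: registers store the identifier of the cell whose update must be checked, and the condition operator rejects paths in which the content of an unchanged cell differs between one step and the next. This is the step where the register mechanism genuinely contributes expressive power beyond ordinary regular expressions on graphs (which would only give \nlogspace-combined complexity), and it is also where the main obstacle lies: the REM must be designed so that, on every path of the constructed graph, each cell of each configuration is matched with the correct cell in the next configuration, even though the path traversal is linear and must thread through an exponential configuration space using only polynomially many registers. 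Once this encoding is in place, the remaining arguments are routine.
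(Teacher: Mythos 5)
The paper itself offers no proof of this proposition: it is imported directly from \cite{LV}, so your sketch has to stand on its own. The parts of it that concern upper bounds and the \nlogspace\ lower bound are correct and are essentially the standard argument: translate the REM into an equivalent register automaton, simulate it on the fly with configurations of the form (node, state, register assignment), observe that every register content lies in $\kappa(V)\cup\{\bot\}$ so a configuration takes polynomial space (logarithmic space once $e$, and hence the number of registers, is fixed), and get \nlogspace-hardness from directed reachability via $a^+$ (or $a^*$, to match reachability exactly). No complaints there.

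The \pspace-hardness half, however, is a genuine gap rather than a proof. You name the crucial step --- forcing corresponding cells of consecutive configurations to agree along a single linear traversal --- and then leave it unresolved (``once this encoding is in place, the remaining arguments are routine''), which is precisely the part that carries the hardness. You also misdiagnose the obstacle: for a polynomial-space machine each configuration has only polynomially many \emph{cells}; the ``exponential configuration space'' is the number of configurations, which is irrelevant to the cell-matching problem. Since this is combined complexity, the REM may grow polynomially with the input, so you are allowed polynomially many registers: one workable construction stores an entire configuration in a bank of registers $r_1,\dots,r_{p(n)}$ and, while reading the block encoding the next configuration, compares the $j$-th cell against $r_j$ and overwrites it, with the head position and the applied transition handled by a polynomial-size union in the expression's finite control (the tape contents being encoded in the data values of the nodes of the constructed data graph, which need not be injective). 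Alternatively, you can sidestep the construction altogether by reducing from nonemptiness of REMs (equivalently, register automata) over data words, which is known to be \pspace-hard: an REM with $k$ registers has a nonempty language iff it matches some path in the complete data graph on $k+1$ nodes carrying pairwise distinct data values, with all edge labels present. As written, though, the hardness direction is an intention, not an argument, so the proposal does not yet establish the proposition.
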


\subsection{Register logic}

REM is well-behaved in terms of the complexity of evaluation, but its
expressive power is rather rudimentary for expressing several
data/topology properties of interest in data graphs. As an example,
the query (Q) from the introduction -- which can be easily expressed
in WL -- cannot be expressed as an REM (we actually prove a stronger
result later). The main shortcomings of REM in terms of its expressive
power are its inability to (i) compare data values in different paths
and (ii) express branching properties of the data. 
 
In this section, we propose register logic (RL) as a natural extension
of REM that makes up for this lack of expressiveness.  We borrow ideas
from the logic CRPQ$^\neg$, presented in \cite{BLWW}, that closes the
class of {\em regular path queries} \cite{CMW} under Boolean
combinations and existential {\em node} and {\em path} quantification.
In the case of RL we start with REMs and close them not only under
Boolean combinations and node and path quantification -- which allow
to express arbitrary patterns over the data -- but also under {\em
  register assignment} quantification -- which permits comparing data
values in different paths. We also prove that the combined complexity of the
evaluation problem for RL is elementary (\expspace), and, thus, that in
this regard RL is in stark contrast to WL. 

To define RL we assume the existence of countably infinite sets of
{\em node}, {\em path} and {\em register assignment variables}. Node
variables are denoted $x,y,z,\dots$, path variables are denoted
$\pi,\pi',\pi_1 ,\pi_2,\dots$, and register assignment variables are denoted
$\nu,\nu_1,\nu_2,\dots$

\begin{definition}[Register logic (RL)] 
We define the class of RL formulas
$\phi$ over alphabet
$\Sigma$ and $\{r_1,\dots,r_k\}$ using the following
grammar:
\begin{eqnarray*}
{\tt atom} \ & := & \ x = y \, \mid \, \pi = \pi' \, \mid \, \nu
= \nu' \, \mid \, \nu = \bar \bot \, \mid \, 
(x,\pi,y) \, \mid \, e(\pi,\nu_1,\nu_2) \\
\phi \ & := & \ {\tt atom} \, \mid \, \neg \phi \, \mid \, \phi \vee
\phi \, \mid \, \exists x \phi \, \mid \, \exists \pi \phi \, \mid \,
\exists \nu \phi
\end{eqnarray*}
Here $x,y$ are node variables, $\pi,\pi'$ are path variables, 
$\nu,\nu'$ are register assignment variables, and $e$ is an REM over
$\Sigma$ and $\{r_1,\dots,r_k\}$. \qed
\end{definition} 

Intuitively, $\nu = \bar \bot$ holds iff $\nu$ is the empty register
assignment, $(x,\pi,y)$ checks that $\pi$ is a path from $x$ to $y$,
and $e(\pi,\nu,\nu')$ checks that $\pi$ can be parsed according to $e$
starting from register assignment $\nu$ and finishing in register
assignment $\nu'$. The quantifier $\exists \nu$ is to be read ``there
exists an assignment of data values in the data graph to the
registers''. 
%We formally define the semantics of RL below.

Let $G = (V,E,\kappa)$ be a data graph over $\Sigma$ and $\phi$ a RL
formula over $\Sigma$ and $\{r_1,\dots,r_k\}$. Assume that $D$ is the
set of data values that are mentioned in $G$, i.e., $D = \{\kappa(v)
: v \in V\}$. 
An {\em assignment}
$\alpha$ for $\phi$ over $G$ is a mapping that assigns (i) a node in
$V$ to each free node variable $x$ in $\phi$, (ii) a path $\rho$ in
$G$ to each free path variable $\pi$ in $\phi$, and (iii) a tuple
$\lambda$ in $\D_\bot^k$ to each register variable $\nu$
that appears free in $\phi$.  That is, for safety reasons we assume
that $\alpha(\nu)$ can only contain data values that appear in the
underlying data graph. This represents no restriction for the
expressiveness of the logic.
 
We inductively define $(G,\alpha) \models \phi$, for $G$ a data graph,
$\phi$ an RL formula, and $\alpha$ an assignment for $\phi$ over $G$,
as follows (we omit equality atoms and Boolean combinations since they
are standard):
 
\begin{itemize}

\item $(G,\alpha) \models \nu = \bar \bot$ iff $\alpha(\nu) =
\bot^k$. 

\item $(G,\alpha) \models (x,\pi,y)$ iff $\alpha(\pi)$ is a path from
$\alpha(x)$ to $\alpha(y)$ in $G$. 

\item $(G,\alpha) \models e(\pi,\nu,\nu')$ iff  
$(u,\alpha(\nu),\alpha(\pi),v,\alpha(\nu')) \in \sem{e}_G$, assuming 
$\alpha(\pi)$ goes from node $u$ to $v$. 

\item $(G,\alpha) \models \exists x \phi$ iff there is node $v \in V$
such that $(G,\alpha[x \to v]) \models \phi$. 

\item $(G,\alpha) \models \exists \pi \phi$ iff there is path $\rho$ in $G$
such that $(G,\alpha[\pi \to \rho]) \models \phi$.

\item $(G,\alpha) \models \exists \nu \phi$ iff there is tuple
$\lambda$ in
$\D_\bot^k$ such that $(G,\alpha[\nu \to \lambda]) \models
\phi$. 

\end{itemize} 

Thus, each REM $e$ is expressible in RL using the
formula: $$\exists \pi \exists \nu \exists \nu' \, ( \, \nu = \bar \bot \,
\wedge \, 
e(\pi,\nu,\nu') \,).$$  

\begin{example} \label{exa:trial1} 
Recall query (Q) from the introduction:
{\em Find pairs of nodes $x$ and $y$ in a graph database, such that there is
a node $z$ and a path $\pi$ from $x$ to $y$ in which each node 
is connected to $z$}.
This query can be expressed in RL over
$\Sigma = \{a\}$ and a
single register $r$ as follows:
$$\exists \pi \,\big(\,(x,\pi,y) \wedge \exists z \forall \nu
(e_1(\pi,\nu,\nu) \rightarrow \exists z' \exists \pi'
((z',\pi',z) \wedge e_2(\pi',\nu,\nu)))\,\big),$$
where $e_1 := a^* [r^=] \cdot a^*$ is the REM that checks whether the node (i.e. data) stored in
register $r$ appears in a path, and $e_2 := \varepsilon [r^=] \cdot a^*$
is the REM that checks if the first node of a path is the one that is
stored in register $r$. 

In fact, this formula defines the pairs of nodes $x$ and $y$ such that
there exists a path $\pi$ that goes from $x$ to $y$ and a node $z$ for which the
following holds: for every register
value $\nu$ (i.e., for every node $\nu$) such that
$e_1(\pi,\nu,\nu)$ (i.e. node $\nu$ is in $\pi$), it
is the case that there is a path $\pi'$ from some node $z'$ to $z$
such that $e_2(\pi',\nu,\nu)$ (i.e., $z' = \nu$ and $\pi'$
connects $\nu$ to $z$). Notice that this uses the fact that
the underlying data model is that of graph databases, in which each
node is uniquely identified by its data value. 
\boxtheorem
\end{example} 

The limitations in expressive power 
of RL have also been independently recognized by Libkin, Martens and Vrgoc
\cite{LMV}. In order to allow for interesting data value comparisons
while retaining reasonable complexity of evaluation, they propose to
use query languages based on the XML language XPath. These languages are not
comparable in terms of expressive power to the ones we study here. 

\medskip 

\noindent
{\bf Complexity of evaluation for RL:} 
The evaluation problem for RL, denoted {\sc Eval}(RL), is as
follows: Given a data graph 
$G$, an RL formula $\phi$, and an assignment $\alpha$ for $\phi$ over
$G$, is it the case that $(G,\alpha) \models \phi$?  
As before, we denote by {\sc Eval}(RL,$\phi$) the evaluation problem
for the fixed RL formula $\phi$.  

We show next that, unlike WL, 
register logic RL can be evaluated in elementary time,
and, actually, with only one exponential jump over the complexity of
evaluation of REMs:

\begin{theorem} \label{theo:rl-cc} 
{\sc Eval}{\em (RL)} is {\sc Expspace}-complete. The lower bound holds
even if the input is restricted to graph databases. 
\end{theorem}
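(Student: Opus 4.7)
\medskip

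\noindent
{\em Proof Plan.} I would prove the two bounds separately, tackling the upper bound first and then the matching lower bound.

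For EXPSPACE membership, the plan is to combine the PSPACE algorithm for REM evaluation of Libkin and Vrgo\v{c} with a standard recursive first-order evaluation procedure. The crucial observation is that a register assignment $\nu \in \D_\bot^k$ can be stored in space $O(k \log |V|)$, so quantification over $\nu$ costs only polynomial additional space per alternation. Path variables are more subtle, since a witnessing path $\rho$ may be exponentially long. To handle this, I would translate each REM occurring in $\phi$ into an equivalent register automaton of polynomial size, and then, for each subformula where a path variable $\pi$ appears free under no further path quantifier binding $\pi$, group all REM atoms mentioning $\pi$ and form the product $G \times A_{e_1} \times \cdots \times A_{e_m}$. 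The existence of a path $\rho$ simultaneously witnessing all those atoms reduces to reachability in this product, whose configurations $(v, q_1, \lambda_1, \dots, q_m, \lambda_m)$ have size $|V| \cdot \prod_i |A_{e_i}| \cdot (|V|+1)^{\sum_i k_i}$, that is, single-exponential in $|\phi| + |G|$. Reachability there is solvable in NPSPACE in the product size, hence EXPSPACE in the original input, and wrapping this inside the recursive FO evaluation over the remaining quantifiers (each adding only polynomial space) yields the EXPSPACE bound.

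For the lower bound, I would reduce from acceptance by a Turing machine $M$ using $2^{p(n)}$ tape cells on an input $w$ of length $n$, and carry out the reduction within the class of graph databases. The plan is to exploit two features that RL has over REM: quantification over register assignments, and negation (hence universal quantification) over path/node variables. With $k := p(n)$ registers, a register assignment $\nu \in \D_\bot^k$ can be seen as encoding a position in $\{0, \dots, 2^{p(n)}-1\}$, of which there are exactly enough to address every cell of a configuration. I would design a graph database $G_w$ in which paths between designated nodes represent successive configurations of $M$, with the contents of individual cells recorded via the (distinct) data values of nodes appearing along the path. An REM using the $k$ registers can then ``read'' the cell addressed by $\nu$ in a given configuration; branching conjunctions of REMs sharing a path variable $\pi$ can enforce that the path is a genuine configuration (via internal counters), and a universal quantifier over $\nu$ combined with two existential path variables $\pi, \pi'$ (for two consecutive configurations) can enforce the TM transition function cell-by-cell. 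An outer $\exists \pi_0 \exists \pi_1 \cdots$ pattern plus reachability to a final-state configuration completes the encoding.

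The main obstacle will be the lower bound, specifically orchestrating the alternation of register, node, and path quantifiers so that (i) each cell position is referenced consistently across two consecutive configurations, (ii) the unicity of the tape head is preserved under the transition, and (iii) the construction uses only node-distinct data values (graph database restriction) rather than the more permissive data-graph model. Getting (i) and (ii) to hold simultaneously without double-exponential blow-up in the register count is the delicate point, and the proof will hinge on the trick of using $\forall \nu$ (equivalently, $\neg \exists \nu$) to range uniformly over all $2^{p(n)}$ cell addresses while keeping the REM part of the formula fixed in size.
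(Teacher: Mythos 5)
Your upper-bound plan breaks at the point where a path quantifier is processed. You assume that when $\exists \pi$ is reached, the residual formula amounts to a conjunction of REM atoms about $\pi$ whose register arguments are already fixed by the outer recursion, so that existence of a witnessing path reduces to reachability in one product $G \times A_{e_1} \times \cdots \times A_{e_m}$. But in RL, register-assignment quantifiers (and negation) typically occur \emph{inside} the scope of $\exists \pi$ --- e.g.\ the hamiltonicity formula $\exists \pi\,(\forall \lambda \forall \lambda'\, \neg e_1(\pi,\lambda,\lambda') \wedge \forall \lambda(\dots))$ --- so the truth of the matrix for a candidate path $\rho$ depends on the \emph{complete} behaviour of $\rho$: for which triples $(e_i,\lambda,\lambda')$ the atom holds, positively and negatively, over all exponentially many assignments $\lambda,\lambda' \in D_\bot^k$. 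The paper's proof supplies exactly the missing idea: abstract a path by its type $\E \subseteq \{e_1,\dots,e_m\}\times D_\bot^k \times D_\bot^k$ (an exponential-size address that is guessed by an alternating exponential-time machine), verify that a guessed type is realizable --- which needs intersections with \emph{complements} of the REM automata, something your plan never mentions --- and justify the abstraction by an Ehrenfeucht--Fra\"iss\'e argument showing that keeping only quantifier-rank-many paths per type preserves the formula; that counting step is also what handles the equality atoms $\pi = \pi'$, which your reachability-certificate representation of paths cannot evaluate at all.

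Your lower-bound plan has a structural flaw: you represent consecutive configurations by separate path variables $\pi,\pi'$ and an ``outer $\exists \pi_0 \exists \pi_1 \cdots$ pattern''. An \expspace\ machine may run for doubly exponentially many steps, while the formula (even one depending on $n$) has only polynomially many path variables, so the run can never be covered configuration-by-configuration; moreover RL has no subpath predicate, and making two independently quantified paths share an endpoint node in a polynomial-size graph says nothing about their being adjacent configurations of one run, so nothing enforces global consistency of the sequence. The paper avoids this by encoding the \emph{entire} run as the label of a single path $\pi$ through a polynomial-size graph $G_w$ (configurations separated by a delimiter symbol), and by writing the \emph{complement} condition --- ``$l(\pi)$ is not the encoding of an accepting run'' --- as a disjunction of existentially register-quantified REM atoms that scan this one path, using $O(n)$ registers to store node ids of a small address gadget so that the cell touched before a delimiter can be re-identified after it; the final formula is $\exists \pi\, \neg\chi(\pi)$. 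Your idea of letting a register assignment address a cell is compatible with this, but without the single-path encoding of the whole run and the complementation trick, the reduction as proposed does not go through.
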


\begin{proof}

We start by proving the upper bound, that is, 
{\sc Eval}{\em (RL)} is in {\sc Expspace}. 
The structure of the proof is quite similar to the one that proves
 that CRPQ$^\neg$ queries 
can be evaluated in {\sc Pspace} in combined complexity
\cite{BLWW}. The difference is that now 
we have to accommodate 
the extra expressive power of RL, 
that
allows to express properties of register values
and check acceptance of data walks by REMs.

Let $G = (V,E,\lambda)$ be a $\Sigma$-labeled data graph and $\phi$ a RL formula over $\Sigma$ and
$\{x_1,\dots,x_k\}$. Let us denote by $D = \{\lambda(v) : v \in
V\}$ and define $D_\bot$ to be $D \cup \{\bot\}$. 
Further, let $\alpha$ be an assignment for $\phi$ over $G$. %
We define 
\begin{itemize}
\item $\bar v = (v_1,\dots,v_{k_1})$ as a tuple of nodes in $G$ such
that 
\[
\{ v_1,\dots, v_{k_1}\} = \{ \alpha(x) : x \text{ is a free node
  variable}\},
\]
%such that $|\bar v| = |\bar x|$, 
\item $\bar \rho = (\rho_1,\dots,\rho_{k_2})$ as a tuple of paths in
$G$ such that  
\[
\{ \rho_1,\dots, \rho_{k_2}\} =  \{ \alpha(\rho) : \rho
\text{ is a free path 
  variable}\},\]  
%$|\bar \rho| = |\bar \pi|$, 
\item  
$\bar \lambda = (\lambda_1,\dots,\lambda_{k_3})$ as a tuple of
register values for $\{x_1,\dots,x_k\}$ over $G$ (i.e., tuples in $(D \cup
\{\bot\})^k$) such that $\{ \lambda_1,\dots, \lambda_{k_3}\} =$ $ \{ \alpha(\lambda) : \rho
\text{ is a free register  
  variable}\}$. 
\end{itemize}
Further, 
assume that $e_1,\dots,e_m$ are all the REMs
mentioned in $\phi$. Our goal is to define an \expspace\ procedure that
checks whether $(G,\alpha) \vDash \phi$. 
%$(\bar v,\bar \rho,\bar \lambda) \in \phi(G)$. 
In order to
do that, we first have to introduce some new terminology.

Let $\tau$ be a first-order (FO) vocabulary 
$$\langle {\tt Nodes}, {\tt Paths}, {\tt Registers}, 
{\tt Endpoints}, e_1,\dots,e_m, \bar \bot \rangle,$$ where 
(a) ${\tt Nodes}$, ${\tt Paths}$ and ${\tt Registers}$  
 are unary relation symbols, (b) ${\tt Endpoints}$ 
and $e_i$
($1 \leq i \leq m$)
are
ternary relation symbols, and (c) $\bar \bot$ is a constant.  
We define, from $G$, an FO structure $\M_G$
over $\tau$ as follows: The domain of $\M_G$ is the disjoint union of
$V$, all the paths that belong to $G$, and all $k$-tuples over  
$D_\bot$. (Notice that each node in
$V$ is also a path in $G$, but here we consider them to be different
objects. That is, each $v \in V$ appears separately as a node and as a
path in the domain of $\M_G$). The constant $\bar \bot$ is interpreted
in $\M_G$ as the tuple $\bot^k$. 
The interpretation of ${\tt Nodes}$ in $\M_G$
contains all those elements of the domain that are nodes. The
interpretation of ${\tt Paths}$ in $\M_G$ contains all those elements of the
domain that are paths.
The
interpretation of ${\tt Registers}$ in $\M_G$ contains all those elements of the
domain that are $k$-tuples over  
$D_\bot$.
The interpretation of the ternary relation
${\tt Endpoints}$ contains all tuples $(v,\rho,v')$ such that $\rho$ is a
path in $G$ from node $v$ to node $v'$. Finally, the interpretation of
the symbol $e_i$ ($1 \leq i \leq m$) contains all tuples
$(\lambda,\rho,\lambda')$ such that $\rho$ is a path in $G$,
$\lambda,\lambda'$ are $k$-tuples over $D_\bot$, and 
$e_i(\rho,\lambda,\lambda')$. 

%all those data paths $\rho_\D$ 
%in $G$.

Let $\phi_\tau$ be the \FO\ formula over vocabulary $\tau$ obtained
from $\phi$ by simultaneously replacing (1) each subformula of the
form $\exists x \theta$ (for $x$ a node variable) with $\exists x
({\tt Nodes}(x) \wedge \theta)$, (2) each subformula of the form $\exists
\pi \theta$ (for $\pi$ a path variable) with $\exists \pi ({\tt Paths}(\pi)
\wedge \theta)$, (3) each subformula of the form $\exists
\nu \theta$ (for $\nu$ a register variable) with $\exists \nu ({\tt Registers}(\nu)
\wedge \theta)$, and (4) each atomic formula of the form $(x,\pi,y)$
with ${\tt 
Endpoints}(x,\pi,y)$. 

Clearly, $G,\alpha \vDash \phi$ iff $\M_G,\alpha \vDash \phi_\tau$. %$(\bar v,\bar \rho,\bar \lambda) \in \phi(G)$ iff $(\bar
%v,\bar \rho,\bar \lambda)$
%belongs to the evaluation of $\phi_\tau$ over $\M_G$. 

Of course, $\M_G$ cannot be effectively constructed from $G$ since the
set of paths in $G$ is potentially infinite, and, thus, $\M_G$ is also
potentially infinite. However, it is possible to prove that there
exists a finite structure $\M'_{G,\bar \rho}$ such
that $G,\alpha \vDash \phi$ iff $\M'_{G,\bar \rho},\alpha \vDash \phi_\tau$. %$(\bar v,\bar \rho,\bar \lambda)$ belongs to the evaluation of $\phi_\tau$
%over $\M_G$ iff it belongs to the evaluation of $\phi_\tau$ over
%$\M'_{G,\bar \rho}$.
 We show how to define $\M'_{G,\bar
\rho}$ next.

\newcommand{\E}{{\mathcal E}}

Assume that the {\em quantifier rank} of $\phi_\tau$ is $k \geq 0$,
where the quantifier rank of an \FO\ formula $\theta$ is the
depth of nested quantification in $\theta$.  Let $\E \subseteq
\{e_1,\dots,e_m\} \times D_\bot^k \times D_\bot^k$. A path $\rho$ in $G$ {\em satisfies} $\E$
if the following holds: For each triple $(e_i,\lambda,\lambda') \in 
\{e_1,\dots,e_m\} \times D_\bot^k \times D_\bot^k$, it is the case that $G,\alpha \vDash e_i(\rho,\lambda, \lambda')$ iff
$(e_i,\lambda,\lambda') \in \E$. (Notice that for each path in $G$
there is one, and only one, subset $\E$ of $\{e_1,\dots,e_m\} \times
D_\bot^k \times D_\bot^k$ that it satisfies.)
For each pair $(v,v')$ of nodes in $V$, and for every $\E \subseteq
\{e_1,\dots,e_m\} \times D_\bot^k \times D_\bot^k$, 
let $c_{\E,v,v'} \geq 0$ be the minimum between $k +
|\bar \rho|$ and the number of paths in $G$ that go from $v$ to $v'$
and satisfy $\E$. We arbitrarily pick, for each pair $(v,v')$ of nodes
in $V$ and for each $\E \subseteq \{e_1,\dots,e_m\} \times D_\bot^k 
\times D_\bot^k$, $c_{\E,v,v'}$ distinct paths
$\rho^1_{\E,v,v'},\dots,\rho^{c_{\E,v,v'}}_{\E,v,v'}$ from $v$ to $v'$
that satisfy $\E$.

We define the structure $\M'_{G,\bar \rho}$ as follows: Its
domain contains all the nodes of $V$, each path $\rho$ that belongs to
the tuple $\bar \rho$, every path of the form $\rho^i_{\E,v,v'}$,
where $\E \subseteq
\{e_1,\dots,e_m\} \times D_\bot^k \times D_\bot^k$ ($v,v' \in V$ and $1 \leq i
\leq c_{\E,v,v'}$), and every tuple in $D_\bot^k$.
The constant $\bar \bot$ is interpreted in $\M'_{G,\bar \rho}$ as the
tuple $\bot^k$.  
The interpretation of ${\tt Nodes}$ in $\M'_{G,\bar \rho}$ contains all nodes in the domain. The interpretation
of ${\tt Paths}$ in $\M'_{G,\bar \rho}$ contains all those elements
of the domain that are paths. 
The
interpretation of ${\tt Registers}$ in $\M_{G,\bar \rho}$ 
contains all those elements of the
domain that are $k$-tuples over  
$D_\bot$.
The interpretation of the ternary
relation ${\tt Endpoints}$ contain all tuples of the form $(v,\rho,v')$,
where $v,v' \in V$ and $\rho$ is a path in the domain that goes from
$v$ to $v'$ in $G$. Finally, the interpretation of $e_i$ ($1 \leq i
\leq m$) in $\M'_{G,\bar \rho}$ contains all tuples
$(\lambda,\rho,\lambda')$ such that $\rho$ is a path in the domain,
$\lambda,\lambda'$ are $k$-tuples over $D_\bot$, and 
$e_i(\rho,\lambda,
\lambda')$. 

By using a standard Ehrenfeucht-Fra\"iss\'e argument it is possible to
prove the following: 

\begin{claim}
The structures $(\M_G,\bar v,\bar \rho,\lambda)$ and $(\M'_{G,\bar
\rho},\bar v,\bar \rho,\lambda)$ are indistinguishable by \FO\ sentences of
quantifier rank $\leq k$. 
\end{claim}

\begin{proof} 
We show that the duplicator has a winning strategy in the
  $k$-round Ehrenfeucht-Fra\"iss\'e game played on $(\M_G,\bar v,\bar
  \rho,\bar \lambda)$ and $(\M'_{G,\bar \rho},\bar v,\bar \rho,\bar \lambda)$. The
  duplicator's response to a spoiler move in round $i \leq k$ is
  (inductively) defined as follows (we assume without loss of
  generality that the spoiler never repeats moves, i.e.\ in no round
  does the spoiler choose an element that has already been chosen by
  either player in previous rounds):  
\begin{itemize} 
\item 
If the spoiler's move in round $i$ is a node in either of the two
structures, then the duplicator responds by mimicking the spoiler's
move on the other structure;
\item if the spoiler's move in round $i$ is a $k$-tuple over $D \cup
\{\bot\}$ 
in either of the two
structures, then the duplicator responds by mimicking the spoiler's
move on the other structure;
\item if the spoiler's move in round $i$ is a path $\rho$ in $\bar \rho$ in either
  of the two structures, then again the duplicator responds by
  mimicking the spoiler's move on the other structure;   
\item if the spoiler plays a path $\rho$ from node $v$ to $v'$, in either of
  the two structures, such that $\rho$ satisfies $\E \subseteq
\{e_1,\dots,e_m\} \times D_\bot^k \times D_\bot^k$ and $\rho$ is not a path in $\bar \rho$, then
  the duplicator responds with any path from $v$ to $v'$ in the other
  structure that (1) satisfies $\E$, (2) does not belong to $\bar
  \rho$,  and (3) has not been previously
  chosen in the game. Notice that it is always possible for the
  duplicator to choose such a path, since for each pair of nodes $v,v'
  \in V$ and for each $\E \subseteq
\{e_1,\dots,e_m\} \times D_\bot^k \times D_\bot^k$, 
the number of paths from $v$ to $v'$ that satisfy $\E$ and do not
belong to $\bar \rho$ is the same up to $k$.   
\end{itemize} 

\noindent It is easy to see that duplicator's response defined in this way
always preserves a partial isomorphism between the two
structures. This implies that the duplicator has a winning strategy in the
  $k$-round Ehrenfeucht-Fra\"iss\'e game played on $(\M_G,\bar v,\bar
  \rho,\bar \lambda)$ and $(\M'_{G,\bar \rho},\bar v,\bar \rho,\bar \lambda)$, and, thus,
  by well-known results, 
that the structures are indistinguishable by
  \FO\ sentences of quantifier rank $\leq k$. 
\end{proof}  

The previous claim shows that  $(G,\alpha) \vDash \phi$ iff
$(M'_{G,\bar\rho}, \alpha) \vDash \phi_\tau$. %$(\bar v,\bar \rho,\bar \lambda) \in \phi(G)$ if and
%only if $(\bar v,\bar \rho,\bar \lambda)$ belongs to the evaluation of $\phi_\tau$
%over $\M'_{G,\bar \rho}$. 
Thus, a straightforward approach to
check whether $(G,\alpha) \vDash \phi$ would be to construct
$\M'_{G,\bar \rho}$ and then evaluate $\phi_\tau$ over it. The
problem with this approach is that $\M'_{G,\bar \rho}$ could be
of double exponential size (because there is a double exponential number 
of different subsets $\E$ of
$\{e_1,\dots,e_m\} \times D_\bot^k \times D_\bot^k$), and, thus, impossible to construct in
exponential space. It will be necessary to follow a different approach.

Assume that $\phi_\tau$ is given in prenex normal form,
i.e.\ $\phi_\tau$ is of the form 
\[
Q_1 y_1 \cdots Q_m y_m \, \psi(\bar
x,\bar \pi,\bar \nu,y_1,\dots,y_m),\]
 where each $Q_i$ is either $\exists$ or
$\forall$, each $y_i$ is a node, path or register variable, 
and $\psi$ is quantifier-free (if $\phi_\tau$ is
not in prenex normal form, we can convert it in polynomial time into
an equivalent formula in prenex normal form). We follow a
usual argument to evaluate \FO\ formulas on structures.  The
main problem with this is that some of the elements in $\M'_{G,
\bar \rho}$ are paths and register values, and 
have to be treated as such. Therefore, we define a way of encoding paths (in exponential space)
and register values (in polynomial space).

Clearly, each register value can be codified with a tuple of length
$k \cdot \log_2{(|V|)}$. In order to denote that this tuple 
is the address of a register value (and not, say, of a path), 
we add an extra bit at the beginning of the tuple which is labeled with
a new symbol $r$. Codification of paths requires a bit of extra work. 
Each path $\rho$ is encoded with an {\em address}, that is, a string
that satisfies the following:
\begin{itemize}
\item It starts with a
new symbol $p$, that states that this is the address of a path; 
\item the address continues with the encodings of the two endpoints
  $v$ and $v'$ of
  the path (separated with some delimiter); this part of the address
uses $O(\log_2{|V|})$ space; 
\item then the address encodes the subset $\E$
  of $\{e_1,\dots,e_m\} \times D_\bot^k \times D_\bot^k$ 
that $\rho$ satisfies; this encoding can be easily expressed 
with a string of length $m \times |V|^k \times |V|^k$ over alphabet 
$\{0,1\}$ that flags with a 1 those elements of 
$\{e_1,\dots,e_m\} \times D_\bot^k \times D_\bot^k$ that belong to $\E$;  
\item finally, the address contains an encoding of the integer $i \leq k +
  |\bar \rho|$ such that $\rho = \rho^i_{\E,v,v'}$; this encoding uses
$O(\log_2{(|\phi|+|\bar \rho|)})$ space.  
\end{itemize} 
Clearly, the address of a path defined in this way can be specified
using at most exponential space.  

We show next how the problem of checking whether $(\bar v,\bar
\rho,\bar \lambda)$
belongs to the evaluation of $\phi_\tau$ over $\M'_{G,\bar \rho}$ can
be solved 
in exponential time by an alternating Turing
  machine. This will finish the proof of the theorem, since the class
  of problems that can be solved in exponential time by alternating
  Turing machines coincides with the class of problems that can be
  solved in \expspace.

The alternating machine proceeds as follows. It first replaces in
$\phi_\tau$ each node variable $x$ in $\bar x$ with the encoding of the
corresponding node $v$ of $\bar v$, each path variable
$\pi$ in $\bar \pi$ with the encoding (address) of the corresponding
path $\rho$ in $\bar \rho$, and each register variable $\nu$ in $\bar
\nu$ with the encoding of the corresponding tuple $\lambda$ in $\bar
\lambda$. Then the machine reads the formula
$\phi_\tau$ from left-to-right. Each time it encounters an existential
quantifier $\exists y_i$ it enters an existential state, and each time
it encounters a universal quantifier $\forall y_i$ it enters a
universal state. In each case, the machine ``guesses'' the
interpretation of $y_i$ as the encoding of
a node, a path or a register value 
$c(y_i)$ in the domain. (Since encodings of paths are of exponential size, this alternating
machine requires at least exponential time to work). 
Finally, the machine verifies
that $\psi(\bar v,\bar \rho,\bar \lambda,c(y_1),\dots,c(y_m))$ holds, and if that
is the case it accepts. We show next that the latter can be done in
exponential time. Notice that this implies that the whole process can
be performed in exponential time.

We start with the case of the atomic formulas in $\psi$. In order to
check whether the element assigned to a variable belongs to the
interpretation of ${\tt Nodes}$ in $\M'_{G,\bar \rho}$, we only have
to check that the encoding of this element does not start with a
$p$ or an $r$. In order to check whether the element belongs to the
interpretation of ${\tt Paths}$ (resp., ${\tt Registers}$), 
it is sufficient to check that its encoding
starts with a $p$ (resp., $r$). 
In order to check whether the elements $a,b,c$
assigned to variables $x,\pi,y$, respectively, are such that $(a,b,c)$
belongs to the interpretation of ${\tt Endpoints}$, we only have to check
that $b$ is the encoding of a path, $a$ and $c$ are encodings of
nodes, and that $b$ is a path from $a$ to $c$. 
 Finally, in order to check whether the
elements $(a,b,c)$ assigned to a variable belongs to the interpretation of
$e_i$ ($1 \leq i \leq m$), we only have to check that $a,c$ are
register values (i.e. their encodings start with symbol $r$), that $b$ encodes a path
$\rho$ (i.e. its encoding starts with $p$), and that the bit that corresponds to
tuple $(e_i,a,c)$ in the part of the address $b$ that encodes the
set $\E \subseteq \{e_1,\dots,e_m\} \times D_\bot^k \times
D_\bot^k$ that $\rho$ satisfies is set to 1. 

Thus, the value of the atomic formulas involved in $\psi(\bar v,\bar
\rho,\bar \lambda,c(y_1),\dots,c(y_m))$ can be computed in polynomial
time (in the size of $\psi(\bar v,\bar
\rho,\bar \lambda,c(y_1),\dots,c(y_m))$). But
since $\psi$ is a polysize Boolean combination of atomic formulas,
 the value of $\alpha(\bar v,\bar \rho,\bar \lambda,c(y_1),\dots,c(y_m))$
can be computed in polynomial time from the values of the atomic
formulas. We conclude that computing the value of $\alpha(\bar v,\bar
\rho,\lambda,c(y_1),\dots,c(y_m))$ can be done in polynomial time.

There is, however, one small issue that requires explanation in order
for the previous procedure to work properly. Assume that the procedure
``guesses'' the interpretation of a variable $y_i$ in $\phi_\tau$ to
be the encoding of a path in $G$ from $v$ to $v'$ that satisfies $\E \subseteq \{e_1,\dots,e_m\} \times D_\bot^k \times D_\bot^k$. Then it is necessary to check that, if
the encoding implies that this path is $\rho^i_{\E,v,v'}$, then $i
\leq c_{\E,v,v'}$. In order to do so, the procedure needs to check, in
a subroutine, whether there exist $i$ different paths from $v$ to $v'$
that satisfy $\E$.  The next claim shows that this can be done in
exponential space, which finishes the proof of the theorem.

\medskip 

\begin{claim} For each pair $v,v' \in V$, $\E \subseteq
  \{e_1,\dots,e_m\} \times D_\bot^k \times D_\bot^k$, and $i \leq k + |\bar \rho|$, one can check in
  \expspace\ whether there are $i$ distinct paths in $G$
  from $v$ to $v'$ that satisfy $\E$.  \end{claim}

\begin{proof} Let $\#$ be a symbol not in $\Sigma$ and denote by 
  $\Sigma_\#$  the alphabet $\Sigma \cup \{\#\}$. 
Let $\A_{v,v'}$ be the automaton over alphabet $\{v\}
\cup (\Sigma_\# \times V)$ defined as follows.  The set of states is
the disjoint union of $V$ with a new state $s$. The initial state of
$\A$ is $s$ and the final state is $v'$. Further, the transition
relation of $\A$ is defined as follows: (1) For every edge
$(v_1,a,v_2) \in E$ there is a transition in $\A$ from $v_1$ to $v_2$
labeled $(a,v_2)$, (2) for every node $v_1 \in V$ there is a
transition from $v_1$ to $v_1$ in $\A$ labeled $(\#,v_1)$, and (3)
there is a transition in $\A$ from $s$ to $v$ labeled
$v$. Intuitively, $\A_{v,v'}$ accepts exactly those strings of the
form $v (a_1,v_1) (a_2,v_2) \cdots (a,v')$ such that $v a_1 v_1 a_2
v_2 \cdots a v'$ is a path in $G$ from $v$ to $v'$, when we allow
paths to loop arbitrarily many times on $\#$-labeled nodes.

Let $\A^i_{v,v'}$ be the automaton over alphabet $\{v^i\} \cup
(\Sigma_\# \times V)^i$ defined as follows: The set of states is
$V^i \cup \{s^i\}$, the initial state is $s^i$ and the final state is
$(v')^i$. There is a transition in $\A^i$ from $\bar u =
(u_1,\dots,u_i)$ to $\bar w = (w_1,\dots,w_i)$ labeled $\bar t =
(t_1,\dots,t_i)$ iff there is a transition labeled $t_\l$ from $u_\l$
to $w_\l$ in $\A_{v,v'}$, for each $1 \leq \l \leq i$. (Notice that 
$\A^i_{v,v'}$ is not exactly the $i$-th product of $\A_{v,v'}$ with itself, 
as $\A^i_{v,v'}$ does not contain all states in such a product). 
Clearly,
$\A^i_{v,v'}$ is of exponential size but the size of each one of its
states is polynomial. Furthermore, it is decidable in polynomial time
whether there exists a transition labeled $\bar t$ from state $\bar u$
to $\bar w$ in $\A^i_{v,v'}$.

Define now an automaton $\A'_{v,v'}$ that is the restriction of
$\A^i_{v,v'}$ to those strings 
\[
v^i (w^1_1,\dots,w^1_i) \cdots
(w^p_1,\dots,w^p_i)
\] over alphabet $\{v^i\} \cup (\Sigma_\# \times
V)^i$ that satisfy the following:

\begin{itemize}

\item For each $1 \leq \l \leq i$, if for some $1 \leq j < p$ it is
the case that $w^j_\l = (\#,v')$, for some $v' \in V$, then for
each $j < k \leq p$ it is the case that $w^k_\l = (\#,v')$.

\item For each $1 \leq \l,t \leq i$, if $\l \neq t$ then the strings
$v w^1_\l \cdots w^p_\l$ and $v w^1_t \cdots w^p_t$ over alphabet
$\{v\} \cup (\Sigma_\# \times V)$ are different.

\end{itemize}
The first condition says that each projection of a string accepted by
$\A'_{v,v'}$ represents a path in $G$ from $v$ to $v'$ that loops only
on $v'$ and only at the end of the path. The second condition ensures
that any two distinct projections of a path accepted by $\A'_{v,v'}$
represent different strings. 

It is not hard to prove that the language accepted by $\A'_{v,v'}$ is
nonempty iff there exist $i$ distinct paths in $G$ from $v$ to
$v'$. Further, it is also not hard to see that $\A'_{v,v'}$ is of
exponential size but the size of each one of its states is polynomial;
and it is decidable in polynomial time whether there exists a
transition labeled $\bar t$ from state $\bar q$ to state $\bar q'$ in
$\A'_{v,v'}$.

Using techniques in \cite{LV}, it is also possible to construct in
exponential time an NFA $\A_{v,v',e_i,\lambda,\lambda'}$ ($e_i \in
\{1,\dots,m\}$, $\lambda,\lambda' \in D_\bot^k$)
over alphabet $v \cup (\Sigma_\# \times V)$, 
that accepts precisely the strings $w$ accepted by $\A_{v,v'}$ such
that the path
$\rho$ from $v$ to $v'$ in $G$ 
that is represented by $w$ satisfies  
$e_i(\rho,\lambda,\lambda')$. (The main idea 
is to construct $\A_{v,v',e_i,\lambda,\lambda'}$ in such a way 
that, at each position while reading $w$, it keeps in its state 
the $k$-tuple of data values that is stored in the registers of $e_i$). 
The set of states of 
$\A_{v,v',e_i,\lambda,\lambda'}$ is of exponential size, but each
particular state can be represented using polynomial space. Further, deciding
if there is a transition between two states of   
$\A_{v,v',e_i,\lambda,\lambda'}$ can be done in polynomial time. 
This means that for each $\A_{v,v',e_i,\lambda,\lambda'}$, its
complement can be constructed in
double exponential time, with each state using only exponential space. 

It is not hard to see, then, that one can construct 
in double exponential time 
an automaton $\A_{\E,v,v'}$ over alphabet $\{v^i\} \cup
(\Sigma_\# \times V)^i$ 
that does the following: It starts from
$\A'_{v,v'}$, and restricts acceptance to strings 
 $v^i (w^1_1,\dots,w^1_i) \cdots$ 
$(w^p_1,\dots,w^p_i)$
over alphabet $\{v^i\} \cup
(\Sigma_\# \times V)^i$ such that:
\begin{itemize}
\item For each $1 \leq j \leq i$ and tuple $(e_\l,\lambda,\lambda')
\in E$, it is the case that  
$v w_j^1 \cdots w_j^p$ is accepted by
$\A_{v,v',e_\l,\lambda,\lambda'}$.   
\item 
For each $1 \leq j \leq i$ and tuple $(e_\l,\lambda,\lambda')
\not\in E$, it is the case that  
$v w_j^1 \cdots w_j^p$ is accepted by
the complement $\A_{v,v',e_\l,\lambda,\lambda'}$. 
\end{itemize}
Further, each state in $\A_{\E,v,v'}$ can be represented using
exponential space and checking whether there is a transition between
two given states of $\A_{\E,v,v'}$ can be done in polynomial
time. 

It is clear that there exist $i$ distinct paths in $G$
  from $v$ to $v'$ that satisfy $\E$ if and only if $\A_{\E,v,v'}$
accepts at least one string. But we can check $\A_{\E,v,v'}$ for
nonemptiness in \expspace\ using a standard ``on-the-fly" argument. 
This finishes the proof of the claim.
\end{proof} 

This also finishes the proof that {\sc Eval}{\em (RL)} is in {\sc
  Expspace}. Now we show that {\sc Eval}{\em (RL)} is {\sc
Expspace}-hard. 

For all constants $f_0$, we provide a reduction from the class of problems solvable by a Turing
machine using a tape of size $2^{f_0 n}$ given an input word of size
$n$. There are a Turing machine $M$ and a constant $f_0$ such that the following problem is {\sc
  ExpSpace}-hard: given a word $w$ of size $n$, is there an accepting
run of $M$ over $w$ using at most $2^{f_0 n}$ cells? We prove that there is a formula $\phi
\in$ RL such that for all words $w$ of size $n$, there are a
formula $\phi_w$ and a graph
$G_w$ such that
\[
G_w \vDash \phi_w \quad \text{iff} \quad \text{there is an accepting
run of $M$ over $w$ using at most $2^{f_0 n}$ cells}. 
\] 
Let $(\Sigma,Q, \delta,q_0,q_f)$ be the Turing machine $M$,
where $\Sigma$ is the alphabet consisting of the input alphabet and
the blank symbol $B$, $q_0$ is the initial state, $q_f$ is the final state and
$\delta: Q \times \Sigma \to (Q \times \Sigma \times \{ L,R\})$ is the
transition map, where $L$ stands for ``left'' and $R$ stands for ``right''.

The formula $\phi_w$ that we associate with the machine $M$ and a 
word $w$ is a formula
of the form
\[
\exists \pi \psi_w(\pi),
\]
where $\psi_w$ is a formula that does not contain any quantification
over path variables. The formula $\psi_w(\pi)$ expresses that the path $\pi$ in the graph
$G_w$ encodes an accepting run of $M$ over the word $w$. 

As in the proof of Theorem~\ref{theo:main}, we encode a configuration
$C$ of a run of $M$ in the following way. Suppose that the content of
the tape is the word $w'=w'_1 \dots w'_{2^{f_0 n}}$, the
head is scanning the cell number $i_0$ and the machine is
in state $q$.  We encode 
the configuration $C$ by the word $e_C= d^C_1 \& \dots d^C_{2^{f_0 n}} \&$
where $\&$ plays the role of a delimiter and  each $d^C_i$
 encodes the information in cell number $i$. More precisely, if
 $\Delta$ is the alphabet $ (Q \cup
\{ \$ \}) \times \Sigma$, we define $d_i^C$
as  the word
\[
c(i)  \; (q'_i,w'_i),
\]
where $c(i)$ and $q'_i$ are defined as follows. The word $c(i)$
is the binary representation of the number $i$. The
 letter $w'_i$ is the content of the cell $i$. The letter $q'_i$ is equal
 to $\$$ if the head is not scanning the cell number $i$; otherwise,
 $q'_i$ is equal to the state $q$. That is, $q'_{i_0} = q$
 and for all $i \neq i_0$, $q'_i = \$$. 
We encode a run $C_0 C_1 \dots$  as  the word $e_{C_0} \#
e_{C_1} \# \dots$. We define the formula $\psi_w(\pi)$ and the
graph $G_w$ in such a
way that a path $\pi$ satisfies $\psi_w$ iff the label of 
 $\pi$ is the
encoding of an accepting run of $M$ over $w$.    

The graph $G_w$ is (almost) the same graph as the graph in the proof of
Theorem~\ref{theo:main} in the case where $k=1$. That is, $G_w$ is
obtained by ``linking'' two graphs $I_w$ and $H$. If $C_0 C_1 \dots$
is an accepting run with associated path $\pi$, the trace of $\pi$
on $I_w$ will correspond to the encoding of the initial configuration
$C_0$, while the trace of $\pi$ on $H$ is the encoding of the run $C_1
C_2 \dots$. 
The graph $H$ is
given by
\begin{center}
\begin{tikzpicture} [scale=0.5]
\node(1) [draw,circle,scale=0.8] at (0,0) {$1$};
\node(Y) at (2,1) {{\scriptsize $1$}};
\node(N)at (2,-1) {{\scriptsize $0$}};
\node(2) [draw, circle,scale=0.8] at (4,0) {$2$};
\draw [->,>=latex, thick]
    (1) to[bend left] (2);
\draw [->,>=latex, thick]
    (1) to[bend right] (2);

 \node(Y2) at (6,1) {{\scriptsize $1$}};
\node(N2) at (6,-1) {{\scriptsize $0$}};

\node(3) [draw,circle,scale=0.8] at (8,0) {$3$};
\draw [->,>=latex, thick]
    (2) to[bend left] (3);
\draw [->,>=latex, thick]
    (2) to[bend right] (3);

\node(d) at (10,0) {$\dots$};

\node(n)  [draw, circle,scale=0.8] at (12,0) {$f_0 n$};
 \node(Yn) at (14,1) {{\scriptsize $1$}};
\node(Nn) at (14,-1) {{\scriptsize $0$}};
\node(f)  [draw, circle,scale=0.8] at (16,0) {};
\draw [->,>=latex, thick]
    (n) to[bend left] (f);
\draw [->,>=latex, thick]
    (n) to[bend right] (f);

\node(f2) [draw,circle,scale=0.8] at (23,0) {};
\draw [->,>=latex, thick]
    (f) to[bend left =75] (f2);
\draw [->,>=latex, thick]
    (f) to[bend left=25] (f2);
\draw [->,>=latex, thick]
    (f) to[bend right=55] (f2);
%\node(fn) [draw,circle,scale=0.8] at (12,-4) {};
\node(d1) at (19.5,2.7) {$d_1$};
\node(d2) at (19.5,1.6) {$d_2$};
\node(d3) at (19.5,-2.5) {$d_l$};
\node at (19.5,-0.3) {$\dots$};
%\draw [->,>=latex, thick]
 %   (f2) to[bend left=50] (1) ;
%\draw [->,>=latex, thick]
 %   (f2) edge[out=270,in=270] (1);
\node at (10,-3) {$\&, \#$};
\node(r)[anchor=east] at (10,-2){};
\node(r2)[anchor=west] at (10,-2){};
%\node at (18,-0.3) {$\dots$};
\draw [thick]
    (f2) to[out=270,in=0] (r);
\draw [->,>=latex, thick] (r2) to[out=180,in=270] (1);
\end{tikzpicture}
\end{center}
Recall that the set $\{ d_i : 1\leq i \leq l\} $ is defined as $(Q \cup \{\$\}) \times
\Sigma$. 
Consider a simple path $\pi'$ starting from the node with data value $1$ and ending in the node
with outgoing edges with labels $\&$ and $\#$.  Its label is of the
form $c(i) (q',a)$; that is, it is the encoding of the information in
a cell with number $i$. Hence, the label of a path $\pi$ starting and ending
in the node with data value $1$ satisfies $c(i_1) d_1 (\& \vee \#) c(i_2)
d_2
(\& \vee \#) \dots c(i_k) d_k (\& \vee \#)$, where each $d_j$ is the encoding
of the information of a cell and each $c(i_j)$ is the encoding of the
number $i_j$. We define the formula $\psi$ in such a
way  that if $\psi(\pi)$ holds, the succession of encodings of cells
describe a run of $M$.

Next we define the graph $I_w$ where we encode the initial configuration. Suppose that $w=w_1 \dots w_n$. For
all $1 \leq i \leq n$, we introduce a graph $K_i$ describing the cell
number $i$ in the initial configuration. If  $b_1 \dots b_n$ is the
binary encoding of the number $i$, the graph $K_i$ is given by
\begin{center}
\begin{tikzpicture} [scale=0.7]
\node(1) [draw,circle,scale=0.8] at (0,0) {$1$};

\node(2) [draw, circle,scale=0.8] at (2,0) {$2$};
\draw [->,>=latex, thick]
    (1) -- (2)
node[midway,above]{$b_1$}; 

\node(3) [draw,circle,scale=0.8] at (4,0) {$3$};
\draw [->,>=latex, thick]
    (2) -- (3) node[midway,above]{$b_2$}; 

\node(d) at (6,0) {$\dots$};

\node(n)  [draw, circle,scale=0.8] at (8,0) {$f_0 n$};
 \node(Yn) [draw, circle,scale=0.8] at (10,0) {};
\draw [->,>=latex, thick]
    (n) -- (Yn) node[midway,above]{ $b_n$}; 
\node(f)  [draw, circle,scale=0.8] at (12,0) {};
\draw [->,>=latex, thick]
    (Yn) -- (f) node[midway,above]{ $(q'_i,w_i)$}; 
\end{tikzpicture}
\end{center} 
where $q'_1 =q_0$ and $q'_i = \$$ if $i \neq 1$. The label of the
longest path of $K_i$ is exactly the encoding of cell number $i$ in
the initial configuration.  Next we define the graph $K$ which allows
us to encode the cells with number $>n$ in the initial
configuration. The graph $K$ is given by
\begin{center}
\begin{tikzpicture} [scale=0.7]
\node(1) [draw,circle,scale=0.8] at (0,0) {$1$};
\node(Y) at (2,1) {$1$};
\node(N)at (2,-1) {$0$};
\node(2) [draw, circle,scale=0.8] at (4,0) {$2$};
\draw [->,>=latex, thick]
    (1) to[bend left] (2);
\draw [->,>=latex, thick]
    (1) to[bend right] (2);
%\draw [->,>=latex, thick]
 %   (1) -- (Y)
%node[midway,above]{$1$}; 
%\draw [->,>=latex, thick]
 %   (1) -- (N)
%node[midway,above]{$0$}; 

 \node(Y2) at (6,1) {$1$};
\node(N2) at (6,-1) {$2$};
%\draw [->,>=latex, thick]
 %   (2) -- (Y2) node[midway,above]{$1$}; 
%\draw [->,>=latex, thick]
 %   (2) -- (N2) node[midway,above]{$0$};
%\draw [->,>=latex, thick]
 %   (Y) -- (2); 
%\draw [->,>=latex, thick]
 %   (N) -- (2);

\node(3) [draw,circle,scale=0.8] at (8,0) {$3$};
\draw [->,>=latex, thick]
    (2) to[bend left] (3);
\draw [->,>=latex, thick]
    (2) to[bend right] (3);
%\draw [->,>=latex, thick]
 %   (Y2) -- (3); 
%\draw [->,>=latex, thick]
 %   (N2) -- (3);

\node(d) at (10,0) {$\dots$};

\node(n)  [draw, circle,scale=0.8] at (12,0) {$f_0 n$};
 \node(Yn) at (14,1) {$1$};
\node(Nn) at (14,-1) {$0$};
%\draw [->,>=latex, thick]
 %   (n) -- (Yn) node[midway,above]{ $1$}; 
%\draw [->,>=latex, thick]
 %   (n) -- (Nn) node[midway,above]{ $0$}; 
\node(f)  [draw, circle,scale=0.8] at (16,0) {};
\draw [->,>=latex, thick]
    (n) to[bend left] (f);
\draw [->,>=latex, thick]
    (n) to[bend right] (f);
%\draw [->,>=latex, thick]
 %   (Yn) -- (f); 
%\draw [->,>=latex, thick]
 %   (Nn) -- (f);
\node(d) [draw, circle,scale=0.8] at (18,0) {};
\draw [->,>=latex, thick]
    (f) -- (d) node[midway,above]{ $(\$,B)$}; 
\draw [->,>=latex, thick]
    (d) to[bend left=43] (1) ;
\node at (9,-4) { $\&$};
\end{tikzpicture}
\end{center}
The label of a simple path from the node with data value $1$ to the node with
outgoing edge with label $\&$ is of the form $c(i) (\$,B)$; that is,
it is the encoding of an unscanned cell with a blank symbol. In
particular, it is the encoding of all the cells with number $>n$ in
the initial configuration. 
The graph $I_w$ is obtained by linking together the graph 
$K_1,\dots,K_n,K$ in the following way 

\begin{center}
\begin{tikzpicture} [scale=0.5]
\node(s) [draw, circle,scale=0.8] at (-5,0) {};
\node(0) [draw, circle,scale=1] at (0,0) {$K_1$};
\node(1) [draw, circle,scale=1] at (5,0) {$K_2$};
\draw [->,>=latex, thick]
    (0) -- (1) node[midway,above]{ $\$$};
\draw [->,>=latex, thick]
    (s) -- (0) node[midway,above]{ $s$};

\node(2) at (10,0) {};
\draw [->,>=latex, thick]
    (1) -- (2) node[midway,above]{ $\$$};

\node(d) at (11,0) {$\dots$};
\node(n1) at (12,0) {};
\node(n) [draw, circle,scale=1] at (17,0) {$K_{n}$}; %node[midway,above]{ $\$$};

\draw [->,>=latex, thick]
    (n1) -- (n) node[midway,above]{ $\$$};
\node(i) [draw, circle,scale=1] at (22,0) {$K$}; %node[midway,above]{ $\$$};
\draw [->,>=latex, thick]
    (n) -- (i) node[midway,above]{ $\$$};
\end{tikzpicture}
\end{center}

\noindent The arrow with a label $s$ is an arrow pointing to the node
of $K_1$ with data value $1$. Each arrow with label $\#$ between two graphs is from the
``right-most'' node of the first graph to the ``left-most'' node of
the second graph. 
Finally, the graph $G_w$ is defined by 
\begin{center}
\begin{tikzpicture} [scale=0.5]
\node(0) [draw, circle,scale=1] at (0,0) {$I_w$};
\node(1) [draw, circle,scale=1] at (5,0) {$H$};
\draw [->,>=latex, thick]
    (0) -- (1) node[midway,above]{ $\#$};
\end{tikzpicture}
\end{center}
where the edge with label $\#$ is an edge from the ``right-most'' node
of $K$ to the node with data value $1$ in $H$. 
We define now a formula $\psi_\pi$ such that
\begin{multline*} 
G_w \vDash \psi(\pi) \quad \text{iff}\\ 
 \ l(\pi) \text{ is the
  encoding of an accepting run of $M$ over $w$ using at most $2^{f_0 n}$ cells,}
\end{multline*}
where $l(\pi)$ is the label of $\pi$. 
In fact it is easier to define a formula $\chi(\pi)$ such that
\begin{multline} \label{eq:chiexp}
G_w \vDash \chi(\pi) \quad \text{iff} \\ l(\pi) \text{ is not the
  encoding of an accepting run of $M$ over $w$ using at most $2^{f_0 n}$ cells.}
\end{multline}
Suppose that $\pi$ is a path through $G_w$. The label of $\pi$ is
not the encoding of an accepting run of $M$ over $w$ using at most
$2^{f_0 n}$ cells iff at least one of the
following conditions holds.

\begin{enumerate}[label=(\roman*)]
\item The first letter of $l(\pi)$ is not the initial letter $s$ or the run never reaches a final state,~i.e. there is no pair
  of the form $(q_f,a)$ for some $a$, occurring in $l(\pi)$.
\item The symbol $\#$ is not at the ``right place'':
\begin{itemize}[label=$-$]
\item either after we reach the symbol $\#$ (i.e. we are going to
  enter the encoding of a new configuration), the label contains
  the binary encoding of a number $\neq 1$,
\item or after the binary encoding of the number $2^{f_0 n}$ (that is,
  after encoding the information of the last cell), the
  symbol $\#$ does not occur in the label of $\pi$. Since $\#$ is used as a delimiter between encodings of
  configurations, this means that although we finished encoding 
   the last cell of a configuration, we do not move to a new configuration. 
\end{itemize}
\item There is a substring $c(i) d \& c(j) d'$ (where $i<2^n$) such that $j$ is
  not the successor of $i$. That is, after encoding the information of
  cell number $i$, we do not encode the information of cell number
  $i+1$.  
\item Finally, there is a string  $e_C \# e_{C'}$ of
  $D_\pi$ such that $C$ and $C'$ are not successive configurations.
\end{enumerate}

\noindent Expressing cases~(i) and~(ii) is fairly easy. We concentrate on~(iii)
and~(iv). We start by showing how to express case~(iv). Suppose that $c(i)$ and $c(j)$ are  two successive
  binary encodings occurring in the label of $\pi$. Suppose
  $c(i)=b_1 \dots b_{f_0 n}$ and $c(j) = b'_1 \dots b'_{f_0 n}$. Then $j$ is
  not the successor if $i$ iff one of the following holds.
\begin{enumerate}[label=\({\alph*}]
\item For some $k$, $b_{k} \dots b_{f_0 n}$ is equal to $1 \dots 1$ and
  $b'_k$ is not equal to $0$. 
\item For  some $k$, $b_{k} \dots b_{f_0 n}$ is equal to $0 1 \dots 1$ and
  $b'_k$ is not equal to $1$. 
\item For some  $k$, $b_{k} = 1$, $0$ occurs in  $b_{k+1}\dots
  b_{f_0 n}$ and
  $b'_k$ is not equal to $1$. 
\item For some  $k$, $b_{k} = 0$, $0$ occurs in  $b_{k+1}\dots
  b_{f_0 n}$ and
  $b'_k$ is not equal to $0$. 
\end{enumerate}
We show how we can express~(a); the other cases are similar. Case~(a)
is expressed by the following REM
\[
(\Lambda\backslash \Delta_f)^*  \downarrow r_1. 1^*  \Delta \& \{ 0,1\}^*  [r_1^=] 1  \Lambda^*.
\]
where $\Delta_f$ is the set $\{ q_f\} \times \Sigma $ and $\Lambda$ is
the alphabet of the graph $G_w$. 
In the register $r_1$, we store the number $k$ such that $b_{k} \dots
b_{f_0 n} = 1 \dots 1$ (that is, we only go through edges with label $1$ until we reach an edge
with label in $\Delta$). When we reach again the node with data value
$k$, the label of the outgoing edge is $1$, expressing that
$b'_{k}=1$.  

Finally we look at case~(iv), i.e. how to express that there is a string  $e_C \# e_{C'}$ of
  $D_\pi$ such that $C$ and $C'$ are not successive configurations. This might happen for several
  reasons:~(A) either we did not modify properly the content of a cell
  or move properly the head or~(B) we modified the content that was
supposed to remain constant. We only treat 
case~(A), as the other case can be handled in a similar way (note that
the proof of Theorem~\ref{theo:rl-dc} is very similar to this proof
and there, we will treat case~(B)). In case~(A), we also only consider the case of a transition
moving the head to the right, the other case being symmetric.  

So suppose that $\delta(q,a)=(q',b,R)$. As in the proof of
Theorem~\ref{theo:main}, we use a slightly different definition of a
run of a Turing machine (which is equivalent to the usual definition, but helps us to keep our
formulas simpler). We assume that if $\delta(q,a)=(q',b,R)$ and the machine scans a cell
$c$ with content $a$, then in the next
state, the machine scans the successor $c'$ of $c$, the content of
$c'$ is $b$, while the content of $c$ is $a$. 

Let $(\Lambda_{!\#})^*$
  be the set of words over $\Lambda$ that contain at most one
  occurrence of $\#$.  We define $e^R_{(q,a)}$
    as the following REM  
\[
\Lambda^* \{ 0,1 \} \downarrow r_1.  \dots \{ 0,1\} \downarrow r_n. 
(q,a) (\Lambda_{!\#})^* \{ 0,1\} [r_1^=] \dots  \{ 0,1\} [r_{f_0 n}^=] 
\Delta \& \{ 0,1\}^* (\Delta \backslash \{ (q',b) \}) \Lambda^*.
\]
We store in the registers $r_1, \dots, r_{f_0 n}$ the binary encoding of a
number $i$. That number is the number of a scanned cell with content $a$  and the
 current state is $q$.  After the next occurrence of $\#$ (after
 reading a word in $(\Lambda_{!\#})^*$), we enter a new
 configuration. In that new configuration, we reach the
 cell number $i$ when we read a sequence matching the contents of the
 registers. The encoding of the next cell (after reading a sequence
 in $\Delta \&$) must consist of the binary encoding of a number
 followed by a symbol in $\Delta$ that is not $(q',b)$.

If $\Delta_R$ is the set $\{ (q,a) : \delta(q,a)=(q',b,R)  \text{ for
  some } q', b \}$ and $\lambda$ is a register of size $f_0 n$, we define the formula
\[
\bigvee \{ \exists \nu \; e^R_{(q,a)} (\pi,\bot,\nu) : (q,a) \in \Delta_R \}, 
\]
taking care of case~(iv)(A).
\OMIT{
\begin{itemize}
\item[(i)] The first letter of $l(\pi)$ is not the initial letter $s$. 
\item[(ii)] The run never reaches a final state,~i.e. there is no pair
  of the form $(q_f,a)$ occurring in $l(\pi)$.
\item[(iii)] After we reach the symbol $\#$ (i.e. we are going to
  enter the encoding of a new configuration), the label contains
  the binary encoding of a number $\neq 1$. That is, the encoding of a
  configuration does not start with the encoding of the information of
  the first cell.    
\item[(iv)] After the binary encoding of the number $2^{f_0 n}$ (that is,
  after encoding the information of the last cell), the
  symbol $\#$ does not occur in the label of $\pi$. Since $\#$ is used as a delimiter between encodings of
  configurations, this means that although we finished encoding 
   the last cell of a configuration, we do not move to a new configuration. 
\item[(v)] There is a string $c(i) d \& c(j) d'$ (where $i<2^n$) such that $j$ is
  not the successor of $i$. That is, after encoding the information of
  cell number $i$, we do not encode the information of cell number
  $i+1$.  
\item[(vi)] Finally, there is a string  $e_C \# e_{C'}$ of
  $D_\pi$ such that $C$ and $C'$ are not successive configurations.
  This can happen for several reasons.
\begin{itemize}
\item[(a)] In $C$, the head is scanning cell number $i$ with content
  $a$ and in state $q$ and $\delta(q,a)=(q',b,L)$. However, in $C'$,
  it is not the case that the head is scanning cell number $(i-1)$
  with content $b$ in state $q'$.
\item[(b)] In $C$, the head is scanning cell number $i$ with content
  $a$ and in state $q$. However, in $C'$, either the content of cell
  number $i$ is not $a$ or the head is scanning cell number $i$.
\item[(c)] In $C$, the head is scanning cell number $i$ with content
  $a$ and in state $q$ and $\delta(q,a)=(q',b,R)$. However, in $C'$, 
  it is not the case that the head is scanning cell number $(i+1)$
  with content $b$ in state $q'$.
\item[(d)] In $C$ and in $C'$, the head is not scanning cell number
  $i$. However, the content of the cell does not remain the same.
\end{itemize}
\end{itemize}
We show now how to express conditions~(i)--(vi) using formulas in RL.
\begin{itemize}
\item[(i)] We
  define the formula $\chi_0(\pi)$ by
\[
(\Lambda \backslash \{s\}) \Lambda^* (\pi,\bot,\bot).
\]
$\chi_0(\pi)$ is true in $G_w$ iff the first letter of $l(\pi)$ is not
$s$.  
\item[(ii)]  We define $\chi_1(\pi)$
  as the formula
\[
(\Lambda \backslash\Delta_f)^* (\pi,\bot,\bot),
\]
where $\Sigma_f$ is the set $\{ q_f\} \times \Sigma $. $\chi_1(\pi)$
is true iff no pair
  of the form $(q_f,a)$ occurs in $l(\pi)$.
\item[(iii)] We define $\chi_2(\pi)$ as the formula
\[
(\Lambda\backslash \Delta_f)^* \# 0^* 1 \Lambda^* (\pi,\bot,\bot).
\]
$\chi_2(\pi)$ holds iff after the symbol $\#$,  there is a binary
encoding containing $1$. That is, a binary encoding of a number $\neq 1$.
\item[(iv)] We define $\chi_3(\pi)$ as the formula
\[
(\Lambda\backslash \Delta_f)^* 1^* \Delta \&  (\Lambda\backslash
\{\#\} ) \Lambda^* (\pi,\bot,\bot).
\]
That is, after the binary encoding of the number $2^{f_0 n}$ (or
equivalently, a sequence satisfying $1^*$), we do not read the symbol
$\#$, signalling that we start the encoding of a new configuration. 
\item[(v)] Suppose that $c(i)$ and $c(j)$ are  two successive
  binary encodings occurring in the label of $\pi$. Suppose
  $c(i)=b_1 \dots b_{f_0 n}$ and $c(j) = b'_1 \dots b'_{f_0 n}$. Then $j$ is
  not the successor of $i$ iff one of the following holds.
\begin{itemize}
\item For some $k$, $b_{k} \dots b_{f_0 n}$ is equal to $1 \dots 1$ and
  $b'_k$ is not equal to $0$. This is expressed by the REM
  $e_{41} $ defined by
\[
(\Lambda\backslash \Delta_f)^*  \downarrow r_1. 1^*  \Delta \& \{ 0,1\}^*  [r_1^=] 1  \Lambda^*.
\]
In the register $r_1$, we store the number $k$ such that $b_{k} \dots
b_{f_0 n} = 1 \dots 1$ (that is, we only go through edges with label $1$ until we reach an edge
with label in $\Delta$). When we reach again the node with data value
$k$, the label of the outgoing egde is $1$, expressing that
$b'_{k}=1$.  
\item For  some $k$, $b_{k} \dots b_{f_0 n}$ is equal to $0 1 \dots 1$ and
  $b'_k$ is not equal to $1$. This is expressed by the REM 
  $e_{42} $ defined by
\[
(\Lambda\backslash \Delta_f)^* \downarrow r_1. 0 1^*  \Delta \& [r_1^=] 0 \Lambda^*.
\]
\item For some  $k$, $b_{k} = 1$, $0$ occurs in  $b_{k+1}\dots b_{f_0 n}$ and
  $b'_k$ is not equal to $1$. This is expressed by the REM 
  $e_{43} $ defined by
\[
(\Lambda\backslash \Delta_f)^* \downarrow r_1. 1 1^*0  \{ 0,1\}^* \Delta \& [r_1^=] 0 \Lambda^*.
\]
\item For some  $k$, $b_{k} = 0$, $0$ occurs in  $b_{k+1}\dots b_{f_0 n}$ and
  $b'_k$ is not equal to $0$. This is expressed by the REM 
  $e_{44} $ defined by
\[
(\Lambda\backslash \Delta_f)^*  \downarrow r_1. 0 1^*0  \{ 0,1\}^* \Delta \& [r_1^=] 1 \Lambda^*.
\]
\end{itemize}
We define $\chi_4(\pi)$ as the formula
\[
\exists \nu \; [e_{41} (\pi,\bot,\nu) \vee \dots \vee e_{44} (\pi,\bot,\nu)].
\]
\item[(vi)] 
\begin{itemize} 
\item[(a)] Suppose that $\delta(q,a)=(q',b,R)$.Let $(\Lambda_{!\#})^*$
  be the set of words over $\Lambda$ that contain at most one
  occurrence of $\#$.  We define $e^R_{(q,a)}$
    as the following REM  
\[
\Lambda^* \{ 0,1 \} \downarrow r_1.  \dots \{ 0,1\} \downarrow r_n. 
(q,a) (\Lambda_{!\#})^* \{ 0,1\} [r_1^=] \dots  \{ 0,1\} [r_{f_0 n}^=] 
\Delta \& \{ 0,1\}^* (\Delta \backslash \{ (q',b) \}) \Lambda^*.
\]
We store in the registers $r_1, \dots, r_{f_0 n}$ the binary encoding of a
number $i$. That number is the number of a scanned cell with content $a$  and the
 current state is $q$.  After the next occurrence of $\#$ (after
 reading a word in $(\Lambda_{!\#})^*$), we enter a new
 configuration. In that new configuration, we reach the
 cell number $i$ when we read a sequence matching the contents of the
 registers. The encoding of the next cell (after reading a sequence
 in $\Delta \&$) must consist of the binary encoding of a number
 followed by a symbol in $\Delta$ that is not $(q',b)$.

If $\Delta_R$ is the set $\{ (q,a) : \delta(q,a)=(q',b,R)  \text{ for
  some } q', b \}$, we define $\theta_1(\pi)$ as the formula
\[
\bigvee \{ \exists \nu \; e^R_{(q,a)} (\pi,\bot,\nu) : (q,a) \in \Delta_R \}, 
\]
where $\lambda$ is a register of size $f_0 n$.

\item[(b)] Suppose that $q \in Q$ and $a \in \Delta$.  We define $e_{(q,a)}$
    as the following REM  
\[
\Lambda^* \{ 0,1 \} \downarrow r_1.  \dots \{ 0,1\} \downarrow r_{f_0 n}. 
(q,a) (\Lambda_{!\#})^* \{ 0,1\} [r_1^=]  \dots  \{ 0,1\} [r_n^=] 
(\Delta \backslash \{ (\$,a) \}) \Lambda^* 
\]
Similarly to case~(i), this REM expresses the following. We store the
binary encoding of the number of a cell that is scanned. In the next
configuration (after reading a sequence in $(\Lambda_{!\#})^*$),  we
reach the same cell when we match the content of the registers. The
encoding of the information of that cell is not the pair $(\$,a)$. 

We define $\theta_3(\pi)$ as the formula
\[
\bigvee \{ \exists \lambda \; e_{(q,a)} (\pi, \bot,\nu) : (q,a) \in
  \Delta, q \neq \$ \}, 
\]
where $\lambda$ is a register of size $f_0 n$.
\item[(c)]Suppose that $\delta(q,a)=(q',b,L)$. Similarly to the
  previous case, we define $e^L_{(q,a)}$
    as the following REM  
\[
\Lambda^* \{ 0,1 \} \downarrow r_1.  \dots \{ 0,1\} \downarrow r_{f_0 n}. 
(q,a) (\Lambda_{!\#})^* (\Delta \backslash \{ (q',b) \}) \& \{ 0,1\}
[r_1^=]  \dots  \{ 0,1\} [r_{f_0 n}^=] 
\Lambda^* 
\]
If $\Delta_L$ is the set $\{ (q,a) : \delta(q,a)=(q',b,L)  \text{ for
  some } q', b \}$, we define $\theta_2(\pi)$ as the formula
\[
\bigvee \{ \exists \lambda \; e^L_{(q,a)} (\pi,\bot,\nu) : (q,a) \in \Delta_L \}, 
\]
where $\lambda$ is a register of size $f_0 n$.

\item[(d)] Suppose that $a$ belongs to $\Delta$. We define $e_{a}$
    as the following REM  
\begin{eqnarray*}
&& \Lambda^* [\# \vee (\Delta_L^c \&)] \{ 0,1 \} \downarrow r_1.
\dots \{ 0,1\} \downarrow r_{f_0 n}. 
(\$,a)  \& [\# \vee (\{ 0,1\}^* \Delta_R^c)] \\ && (\Lambda_{!\#})^* \{ 0,1\} [x_1^=]  \dots  \{ 0,1\} [x_n^=] 
(\Delta \backslash \{ (\$,a) \}) \Lambda^*,
\end{eqnarray*}
where $\Delta_L^c$ is the set $\Delta \backslash \Delta_L$ and
$\Delta_R^c$ stands for $\Delta \backslash \Delta_R$. We store in
registers $r_1,\dots,r_{f_0 n}$ the binary encoding of the number $i$ of a
cell. We assume that that cell is not scanned and contains the letter
$a$; that is, it is described by the pair $(\$,a)$. Moreover, if the
previous cell is scanned, then the head is not moving to the left in
the next configuration. That is, either $i=1$ (and the
symbol occurring before encoding cell number $i$ is $\#$) or the encoding of the information of cell
number $(i-1)$ is not a pair in $\Delta_L$. 

Similarly, if the next cell is scanned, then the head is not moving to
the right in the next configuration. Either $i=2^{f_0 n}$ (and the symbol
occurring after the encoding of the cell is $\#$) or the encoding of the information of cell
number $(i+1)$ is not a pair in $\Delta_R$. 

In the next configuration (that is, after reading a word in 
$(\Lambda_{!\#})^* $), we reach cell number $i$ after we match the
contents of the registers. The encoding of the information of cell
number $i$ is not the pair $(\$,a)$.

We define $\theta_4(\pi)$ as the formula
\[
\bigvee \{ \exists \nu \; e_a (\pi,\bot,\nu) : a \in \Sigma \}
\]
\end{itemize}
We define $\chi_5(\pi)$ as the formula $\theta_1 (\pi) \vee \dots \vee \theta_5(\pi)$
\end{itemize}
Finally, we define $\chi(\pi)$ as the formula $\chi_0 (\pi) \vee \dots \vee \chi_5(\pi)$. 
The formula $\chi(\pi)$ satisfies~\eqref{eq:chiexp}.
}
\end{proof}

\medskip 

The increase in expressiveness of RL over REM has an important cost in data
complexity, which becomes intractable: 

\begin{theorem} \label{theo:rl-dc} 
{\sc Eval}{\em (RL)} is in {\sc Pspace} in data
complexity. Furthermore, 
there is a finite alphabet $\Sigma$ and a RL formula $\phi$ over $\Sigma$ 
and a single register $r$,  
such that {\sc Eval}{\em (RL,$\phi$)} is {\sc Pspace}-hard. In
addition, the latter holds even if the input is restricted to graph
databases.  
\end{theorem}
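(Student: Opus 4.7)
For the upper bound, the plan is to rerun the alternating-time algorithm from the proof of Theorem~\ref{theo:rl-cc}, observing that every parameter that drove the exponential blowup there is now a constant: the number $k$ of registers in $\phi$, the number $m$ of REM sub-expressions of $\phi$, the quantifier rank of $\phi_\tau$, and the number of free variables of $\phi$. Hence $|\V_\bot^k|$ is polynomial in $|G|$, the set of relation profiles $\mathcal{E}\subseteq\{e_1,\dots,e_m\}\times\V_\bot^k\times\V_\bot^k$ that a path may satisfy has polynomial size, and consequently the address of any path in $\M'_{G,\bar\rho}$ fits in polynomial space. The alternating machine then performs only a constant number of alternations, each guessing a polynomial-size object, placing the problem in alternating polynomial time, i.e.\ {\sc Pspace}. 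The ``counting'' subroutine (existence of $i$ distinct paths from $v$ to $v'$ with profile $\mathcal{E}$, with $i$ constant) reduces to nonemptiness of an automaton that is a constant-arity product of polynomial-state-space NFAs, which is in {\sc Nlogspace}.

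For the lower bound, the plan is to adapt the {\sc Expspace}-hardness reduction in the proof of Theorem~\ref{theo:rl-cc} so as to yield a \emph{single fixed} formula. Fix a {\sc Pspace}-complete Turing machine $M$ with polynomial space bound $p(n)$; the reduction sends an input $w$ of length $n$ to a graph database $G_w$ of polynomial size such that $G_w\models\phi$ iff $M$ accepts $w$, where $\phi$ depends only on $M$. Since $p(n)$ is polynomial, cell indices within a configuration can be represented directly by the data values of dedicated ``position nodes'' of $G_w$, so the nested-counter machinery of Theorem~\ref{theo:main} is avoided altogether. Concretely, $G_w$ is the union of an initial-configuration gadget $I_w$ (whose unique long path spells out the initial configuration of $M$ on $w$) glued to a run gadget $H$ through which any long path spells out an arbitrary sequence of configurations of length $p(n)$ separated by delimiters $\#$. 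The fixed formula is of the form $\phi := \exists\pi\,\neg\chi(\pi)$ where $\chi(\pi)$ is a disjunction of small RL formulas, one for each way a path can fail to encode an accepting run; since $M$ is fixed, $\chi$ may hard-code the transition table $\delta$. Each local consistency check is realised by a fixed REM that stores a single cell identifier in the register $r$ and matches it against the corresponding cell of the next configuration using $[r^=]$ and $[\neg r^=]$.

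The main technical obstacle, and the difference with the proof of Theorem~\ref{theo:rl-cc}, is case~(B)---asserting that every \emph{non-scanned} cell of a configuration retains its content in the successor configuration---which the proof of Theorem~\ref{theo:rl-cc} explicitly deferred to here. I plan to express this disjunct of $\chi$ as a fixed REM (one copy per tape symbol $a$) that nondeterministically stores the identifier of some unscanned cell carrying $a$, scans ahead past the next delimiter via $(\Lambda_{!\#})^*$ until the same identifier reappears (enforced by $[r^=]$), and then asserts that at this second occurrence the cell description differs from $(\$,a)$, while the adjacent cells in the preceding configuration are \emph{not} in $\Delta_L\cup\Delta_R$ (so that no legitimate head-crossing could account for the change). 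Assembling this REM with the fixed analogues of the sanity-check cases (i)--(v) from the proof of Theorem~\ref{theo:rl-cc} yields the single fixed $\phi$, completing the reduction.
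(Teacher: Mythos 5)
Your proposal matches the paper's proof in both halves: the \pspace\ upper bound is obtained, as in the paper, by rerunning the \expspace\ algorithm of Theorem~\ref{theo:rl-cc} with the formula (hence the number of registers, REMs, free variables and quantifiers) fixed, and the lower bound is the paper's reduction from a fixed space-bounded Turing machine, with cell positions identified by the data values of position nodes, the fixed formula $\exists \pi\,\neg\chi(\pi)$ enumerating the ways a path can fail to encode an accepting run, and essentially the paper's single-register REM for the unscanned-cell case~(B). One small inaccuracy that does not affect the result: the path-counting subroutine is not a constant-arity product of polynomial-state NFAs decidable in \nlogspace\ --- the triples outside the profile $\mathcal{E}$ must be handled by complement automata (subset construction over the polynomially many register configurations) and there is one component automaton per triple, i.e.\ polynomially many, so the subroutine naturally lands in \pspace\ rather than \nlogspace, which is still sufficient for the claimed upper bound.
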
 

\begin{proof}
The upper bound follows as a corollary to the proof of the upper bound
in Theorem \ref{theo:rl-cc}. In fact, it is clear that the whole
process can be carried in \pspace\ if we assume a fixed RL query (in
fact, to obtain a \pspace\ upper bound we do not need more than to fix
the number of registers used in the query).

For the lower bound, we define a formula $\phi$ in $\rl$ such that for
all constants 
$f_0 $, there is a reduction from the  class of problems solvable by a Turing
machine using a tape of size $f_0 n$ given an input word of size
$n$, to the evaluation problem of $\phi$. More precisely, there are a
Turing machine $M$ and a constant $f_0$ such that the following problem is {\sc PSpace)}-hard: given a word $w$ of size $n$, is there an accepting
run of $M$ over $w$ using at most $f_0 n$ cells? 

We prove that the formula $\phi$ is such that for all words $w$ of size $n$, there is a graph
$G_w$ such that
\[
G_w \vDash \phi \quad \text{iff} \quad \text{there is an accepting
run of $M$ over $w$ using at most $f_0 n$ cells}. 
\] 
Let $(\Sigma,Q, \delta,q_i,q_f)$ be the Turing machine $M$,
where $\Sigma$ is the input alphabet together with a blank symbol $B$, $q_0$ is the initial state, $q_f$ is the final state and
$\delta: Q \times \Sigma \to (Q \times \Sigma \times \{ L,R\})$ is the
transition map, where $L$ stands for ``left'' and $R$ stands for ``right''.

The formula $\phi$ that we associate with the machine $M$ is a formula
of the form
\[
\exists \pi \psi(\pi),
\]
where $\psi$ is a formula that does not contain any quantification
over path variables. Given a word $w$, the path $\pi$ in the graph
$G_w$ will encode an accepting run of $M$ over the word $w$. 

Given a word $w$ of size $n$, consider a configuration $C$ of the run of
$M$ over $w$ where the contents of the tape is the word $w'=w'_1 \dots w'_{f_0 n}$, the
head is scanning the cell number $i_0$ and the machine is
in state $q$.  Similarly to the proof of Theorem~\ref{theo:main},  we encode 
the configuration $C$ by the word 
\[
e_C= N d^C_1 N  
d^c_2 \dots N  d^C_{f_0 n}
\]
 where each $d^C_i$
 encodes the information in cell number $i$ in the configuration $C$. 
We define $d_i^C$
as the pair $(q'_i,w'_i)$, 
where $q'_i$ is defined as follows. The
 letter $w'_i$ is the contents of the cell $i$. The letter $q'_i$ is equal
 to $\$$ if the head is not scanning the cell number $i$; otherwise,
 $q'_i$ is equal to the state $q$. That is, $q'_{i_0} = q$
 and for all $i \neq i_0$, $q'_i = \$$. 
  
The run of $M$ over the word $w$ is a (possibly infinite) sequence of configurations of
the form $C_0 C_1 \dots$. We encode the run as  the word $e_{C_0} \#
e_{C_1} \# \dots$, where $\#$ plays the role of a delimiter. We will define the formula $\psi(\pi)$ and the
graph $G_w$ in such a 
way that a path $\pi$ satisfies $\psi$ iff the label of 
$\pi$ is the
encoding (as defined above) of an accepting run of $M$ over $w$.    

We think of a path $\pi$ encoding a run of $M$ over $w$ as consisting
of two parts. The label of the first part contains the encoding $ e_{C_0}$ of the
initial configuration $C_0$. The label of the second part contains the encoding  $e_{C_1} \#
e_{C_2} \# \dots$ of the remaining part of the run. The first part of the
path $\pi$ is a path in a subgraph $I_w$ of  $G_w$, while the second
part is a path in the subgraph $H$ (independent of $w$) of $G_w$. The graph $G_w$ will be
obtained by adding an  edge from a node in $I_w$ to a node in $H$. 

The graph
$I_{w}$ is given by
\begin{center}
\begin{tikzpicture} [scale=0.46]
\node(i) [draw, circle,scale=1.2] at (-3,0) {};
\node(S) [draw, circle,scale=1.2] at (0,0) {};
\node(1) [draw, circle,scale=0.7] at (3,0) {$1$};
\node(V1) [draw, circle,scale=1.2] at (6,0) {};
\node(2) [draw, circle,scale=0.7] at (9,0) {$2$};
\node(V2) [draw, circle,scale=1.2] at (12,0) {};
\node(3) [draw, circle,scale=0.7] at (15,0) {$3$};
\node(V3) [draw, circle,scale=1.2] at (18,0) {};
\node(B) at (21,0) {};
\node at (22,0) {$\dots$};
\node(C) [draw, circle,scale=1.2]  at (23,0) {};
\node(n) [draw, circle,scale=0.5]  at (26,0) {$n+1$};
\node(Vn) [draw, circle,scale=1.2] at (29,0) {};

\node(n1) [draw, circle,scale=0.5]  at (23,-4) {$n+2$};
\node(n11) [draw, circle,scale=1.2] at (20,-4) {};
\node(n2) [draw, circle,scale=0.5]  at (17,-4) {$n+3$};
\node(n22) [draw, circle,scale=1.2] at (14,-4) {};
\node(B2) at (11,-4) {};
\node at (9,-4) {$\dots$};
\node(B3) at (8,-4) {};
\node(f) [draw, circle,scale=0.5]  at (5,-4) {$f_0 n$};
\node (f2) [draw, circle,scale=1.2]  at (2,-4) {};

\draw [->,>=latex, thick]
    (i) -- (S)
node[midway,above]{{\scriptsize $s$}}; 
\draw [->,>=latex, thick]
    (B3) -- (f)
node[midway,above]{{\scriptsize $N$}}; 

\draw [->,>=latex, thick]
    (f) -- (f2)
node[midway,above]{{\scriptsize $(\$,B)$}}; 

\draw [->,>=latex, thick]
    (Vn) -- (n1)
node[midway,above]{{\scriptsize $N$}}; 

\draw [->,>=latex, thick]
    (n1) -- (n11)
node[midway,above]{{\scriptsize $(\$,B)$}}; 

\draw [->,>=latex, thick]
    (n11) -- (n2) node[midway,above]{{\scriptsize $N$}}; 

\draw [->,>=latex, thick]
    (n2) -- (n22)
node[midway,above]{{\scriptsize $(\$,B)$}}; 

\draw [->,>=latex, thick]
    (n22) -- (B2)
node[midway,above]{{\scriptsize $N$}}; 

\draw [->,>=latex, thick]
    (C) -- (n)
node[midway,above]{{\scriptsize $N$}}; 

\draw [->,>=latex, thick]
    (n) -- (Vn)
node[midway,above]{{\scriptsize $(\$,w_n)$}}; 

\draw [->,>=latex, thick]
    (S) -- (1)
node[midway,above]{{\scriptsize $N$}}; 
\draw [->,>=latex, thick]
    (1) -- (V1)
node[midway,above]{{\scriptsize $(q_0,w_0)$}}; 

\draw [->,>=latex, thick]
    (V1) -- (2)
node[midway,above]{{\scriptsize $N$}}; 
\draw [->,>=latex, thick]
    (2) -- (V2)
node[midway,above]{{\scriptsize $(\$,w_1)$}}; 

\draw [->,>=latex, thick]
    (V2) -- (3)
node[midway,above]{{\scriptsize $N$}}; 
\draw [->,>=latex, thick]
    (3) -- (V3)
node[midway,above]{{\scriptsize $(\$,w_2)$}};
\draw [->,>=latex, thick]
    (V3) -- (B);
%\draw [->,>=latex, thick]
  %  (b) edge[out=270,in=270] (G);
\end{tikzpicture}
\end{center}

In the graph above the data value $i$ carried by a node $v$ indicates
that the label of the outgoing edge of $v$ is $d_i^{C_0}$ (where
$C_0$ is the initial configuration and $d_i^{C_0}$ is defined as
above). Recall that $d_i^{C_0}$ indicates the contents of the cell 
number $i$ and whether or not the head is scanning that cell. There is
a unique path $\pi_0$ from the node with no incoming edge to the unique node with no
outgoing edge. The label of $\pi_0$ is
\[
s  N (q_i,w_0) \;  (\$,w_1) N(\$,w_2) \dots N
(\$,w_n) N(\$,B) \dots N(\$,B),
\]
that is, the word $s.e_{C_0}$, where $e_{C_0}$ is the encoding of the initial configuration $C_0$. 

We define now the graph $H$ encoding the remaining part of the run of
the machine. 

\begin{center}
\begin{tikzpicture} [scale=0.4]
\node(s)  [draw, circle,scale=1.2] at (0,0) {};
\node(1)  [draw, circle,scale=0.7] at (3,0) {$1$};
\node(f)  [draw, circle,scale=1.2] at (33,0) {};

\node(11)  [draw, circle,scale=1.2] at (8,6) {};
\node(1k)  [draw, circle,scale=1.2] at (8,2) {};

\node at (8,4) {$\dots$};
\node at (8,0) {$\dots$};
\node at (8,-4) {$\dots$};

\node(s1)  [draw, circle,scale=1.2] at (8,-2) {};
\node(sk)  [draw, circle,scale=1.2] at (8,-6) {};

\node(2)  [draw, circle,scale=0.7] at (13,0) {$2$};
%\node(3)  [draw, circle,scale=0.7] at (23,0) {$3$};
\node(nk)  [draw, circle,scale=0.7] at (23,0) {$f_0 n$};
\node at (18,0) {$\dots$};

\node(21)  [draw, circle,scale=1.2] at (28,6) {};
\node(2k)  [draw, circle,scale=1.2] at (28,2) {};

\node at (28,4) {$\dots$};
\node at (28,0) {$\dots$};
\node at (28,-4) {$\dots$};

\node(s2)  [draw, circle,scale=1.2] at (28,-2) {};
\node(s2k)  [draw, circle,scale=1.2] at (28,-6) {};
\node at (16.5,-9.5) {{\scriptsize $\#$}};

%\node(nk)  [draw, circle,scale=0.7] at (20,0) {$n^k$};

\draw [->,>=latex, thick]
    (s) -- (1)
node[midway,above]{{\scriptsize $N$}};

\draw [->,>=latex, thick]
    (11) -- (2)
node[midway,above]{{\scriptsize $N$}};
\draw [->,>=latex, thick]
    (1k) -- (2)
node[midway,above]{{\scriptsize $N$}};
\draw [->,>=latex, thick]
    (s1) -- (2)
node[midway,above]{{\scriptsize $N$}};
\draw [->,>=latex, thick]
    (sk) -- (2)
node[midway,above]{{\scriptsize $N$}};

\draw [->,>=latex, thick]
    (21) -- (f)
node[midway,above]{{\scriptsize $N$}};
\draw [->,>=latex, thick]
    (2k) -- (f)
node[midway,above]{{\scriptsize $N$}};
\draw [->,>=latex, thick]
    (s2) -- (f)
node[midway,above]{{\scriptsize $N$}};
\draw [->,>=latex, thick]
    (s2k) -- (f)
node[midway,above]{{\scriptsize $N$}};

\draw [->,>=latex, thick]
    (1) -- (11);
\node at (5,4) {{\scriptsize $(q,a_1)$}};
%node[midway,above,above]{{\scriptsize $(q,a_1)$}};

\draw [->,>=latex, thick]
    (1) -- (1k);
%node[midway,above,below]{{\scriptsize $(q,a_k)$}};
\node at (5.65,0.3) {{\scriptsize $(q,a_k)$}};

\draw [->,>=latex, thick]
    (1) -- (s1);
\node at (5,-4) {{\scriptsize $(\$,a_k)$}};
%node[midway,above,above]{{\scriptsize $(\$,a_1)$}};

\draw [->,>=latex, thick]
    (1) -- (sk);
%\node at (16,-1.8) {{\scriptsize $(\$,a_1)$}};
\node at (6,-1.8) {{\scriptsize $(\$,a_1)$}};
%node[midway,above,below]{{\scriptsize $(\$,a_k)$}};

\draw [->,>=latex, thick]
    (nk) -- (21);
\node at (25,4) {{\scriptsize $(q,a_1)$}};
%node[midway,above,above]{{\scriptsize $(q,a_1)$}};

\draw [->,>=latex, thick]
    (nk) -- (2k);
%node[midway,above,below]{{\scriptsize $(q,a_k)$}};
\node at (25.65,0.3) {{\scriptsize $(q,a_k)$}};

\draw [->,>=latex, thick]
    (nk) -- (s2);
\node at (25,-4) {{\scriptsize $(\$,a_k)$}};
%node[midway,above,above]{{\scriptsize $(\$,a_1)$}};

\draw [->,>=latex, thick]
    (nk) -- (s2k);
\node at (26,-1.8) {{\scriptsize $(\$,a_1)$}};
%node[midway,above,below]{{\scriptsize $(\$,a_k)$}};

\draw [->,>=latex, thick]
    (11) -- (2)
node[midway,above]{{\scriptsize $N$}};

\draw [->,>=latex, thick]
    (f) edge[out=270,in=270] (s);
\end{tikzpicture}
\end{center}
For all $1 \leq i \leq f_0 n$, all $q \in Q$ and all $a \in \Sigma$, the
node with data value $i$ admits outgoing edges with
label $(q,a)$ and $(\$,a)$. 
A path from the ``left-most'' node to the ``right-most'' node that
does 
not go through the edge 
with label $\#$ has a label  of the form
\[
N   d_1 N  d_2 \dots  N d_{f_0 n}N,
\]
where each $d_i$ belongs to $(Q \cup \{ \$\}) \times \Sigma$, that is, an encoding of a
configuration of the machine. 

Hence, a path $\pi'$ from   the ``left-most'' node to the ``right-most'' node
of $H$ 
(possibly going through the edge with label $\#$) has a label of the form
\[
e_{C_1} \# e_{C_2} \# \dots e_{C_l},
\]
where each $e_{C_i}$ is the encoding of a configuration of the
machine. 

We are now ready to define the graph $G_w$.
\begin{center}
\begin{tikzpicture} [scale=0.5]
\node(i)  [draw, circle,scale=1.2] at (0,0) {$I_w$};
\node(h)  [draw, circle,scale=1.2]  at (5,0) {$H$};
\draw [->,>=latex, thick]
    (i) -- (h)
node[midway,above]{$\#$};
\end{tikzpicture}
\end{center}
The edge with label $\#$ is an edge from the unique node in $I_w$ with
no outgoing edge to the ``left-most'' node in $H$. We define now the
formula $\psi$. Let $\Lambda$ be the alphabet $[(Q \cup \{ \$\}) \times
\Sigma] \cup \{ \#,s \}$. The formula $\psi$ must be such that 
\[
G_w \vDash \psi(\pi) \quad \text{iff} \quad l(\pi) \text{ is of the form } se_{C_0}\# e_{C_1} \dots e_{C_p} \Lambda^*,
\]
where $l(\pi)$ is the label of $\pi$ and $C_0 \dots
C_p$ is an accepting run of the machine over $w$. In fact, it will be
intuitively easier to first define a formula $\chi(\pi)$ such that
\begin{equation} \label{eq:psi}
G_w \vDash \chi(\pi) \quad \text{iff} \quad l(\pi) \text{ is not of the form } se_{C_0}\# e_{C_1} \dots e_{C_p} \Lambda^*,
\end{equation}
and define $\psi$ as $\neg \chi$. 
The formula
$\chi$ is obtained as a disjunction of the following subformulas.
\begin{itemize}
\item First $l(\pi)$ might not satisfy  $se_{C_0}\# e_{C_1} \dots e_{C_p} \Lambda^*$ because: (i)~it
  does not start with the letter $s$ or~(ii)~ it
  does not contain the encoding of a final configuration, i.e.~ it
  does not contain any occurrence of a pair of the form $(q_f,a)$ for
  some $a \in \Sigma$. Case~(i) is expressed by the formula 
%We introduce the subformula
 % $\chi_0(\pi)$ defined by
\[
\bigvee \{ e_\gamma (\pi,\bot,\bot) : \gamma \in \Lambda, \gamma \neq s\},
\] 
where $e_\gamma$ is the REM 
$\gamma \Lambda^*$. Case~(ii) is expressed by the formula
\[
(\Lambda \backslash \Delta_f)^*) (\pi,\bot,\bot),
\]
where $\Delta_f$ is the set $\{ q_f\} \times \Sigma $. This formula 
express that there is no pair of the form $(q_f,a)$, for
  some $a \in \Sigma$.

\item Next $l(\pi)$ might not be of the ``right form'' because
  it contains a substring of the form $e_{C}\#e_{C'}$ occurring
  before a pair of the form $(q_f,a)$ and such that $C$ and $C'$ are
  not consecutive configurations. This might happen for several
  reasons:~(A) either we did not modify properly the contents of a cell
  or move properly the head or~(B) we modified the contents that was
supposed to remain constant.  

We will only treat one case. As we treated case~(A) in the proof of
Theorem~\ref{theo:rl-cc}, here we treat
case~(B). As in the proof  of
Theorem~\ref{theo:rl-cc}, we also consider a slightly different definition of a
run of a Turing machine (which is equivalent to the usual definition, but helps us to keep our
formulas simpler). We assume that if $\delta(q,a)=(q',b,R)$ and the machine scans a cell
$c$ with content $a$, then in the next
state, the machine scans the successor $c'$ of $c$, the content of
$c'$ is $b$, while the content of $c$ is $a$.

In case~(B), we make the following case distinction. 
\begin{itemize}
\item Suppose first that case~(B) happened because we modified the contents of the cell that was
  scanned by the head in configuration $C$. Note that where when moving from
  $C$ to $C'$, by definition of a run, we cannot modify the contents of
  the cell scanned in configuration $C$.

Let $(q,a)$ be a pair in $Q \times \Sigma$ and let
  $\Delta$ be the set $(Q \cup \{ \$\} \times \Sigma)$. We define also
  $\Lambda_{! \#}^*$ as the set of words over $\Lambda$ that contain
  at most 
  one occurrence of the symbol $\#$.
  We let $e_{(q,a)} (\pi)$ be the formula
\[
(\Lambda \backslash \Delta_f)^*  \downarrow r. (q,a) \Lambda_{! \#}^* [r^=]  (\Delta
\backslash \{(q,a):q 
\in Q \cup \{ \$ \} \}) 
\Lambda^*.
\]
Before we reach a final state, we store in register $r$ the number of a cell that is
scanned in the current configuration and with contents $a$. In the next
configuration (after reading a word in  $\Lambda_{! \#}^*$), when we
read the same cell, it does not contain
$a$. That is, the label of the edge is not a pair of the form $(q,a)$. 

The formula
\[
\bigvee \{ \exists \nu\; e_{(q,a)} (\pi,\bot,\nu) : (q,a) \in
Q \times \Sigma \},
\]
takes care of the cases where from moving from one configuration to
the next, we modified the contents of the cell scanned in the first
configuration. 

\item Next suppose that when moving from $C$ to $C'$, we modified the
  contents of a cell that was not scanned in $C$. Suppose also that
  according to the machine, we were not supposed to modified that
  contents. 

Let $a$ be a letter in $\Sigma$. Let $\Delta_L$ be the set of pairs $(q,a)$ such that
  $\delta(q,a)=(q',b,L)$ and let $\Delta_R$ be the set of pairs $(q,a)$ such that
  $\delta(q,a)=(q',b,R)$. We define $e_a(\pi)$ as the REM
\[
(\Lambda \backslash \Delta_f)^*  (\Delta \backslash \Delta_R)
\downarrow r . (\$,a) N (\Delta \backslash
\Delta_L)  \Lambda_{! \#}^* [r^=] (\Delta \backslash (\$,a)) \Lambda^*.
\] 
Before we reach a final state, in a configuration $C$, we store in register $r$ the number $i$
of an unscanned cell with contents
$a$. We assume that if the cell with number $(i-1)$ (if its exists) is
scanned in $C$, then the head is not moving to the right (i.e. to cell
number $i$) in the next configuration. That is, the pair describing
the cell number $(i-1)$ in $C$ is not a pair in $\Delta_R$. We also
assume that if cell number $(i+1)$ (if it exists) is scanned, then the
head is not moving to cell number $i$ in the next configuration,
i.e.~cell number $(i+1)$ is not described by a pair in $\Delta_L$.

Next we express that in the next configuration, it is not the case
that the cell with number $i$ is an unscanned cell with contents
$a$. That is, after reading  a word in  $\Lambda_{! \#}^* $, when we
see again the node stored in the register, the label of the edge is
not the pair $(\$,a)$. 

The formula
\[
\bigvee \{ \exists \nu \; e_a (\pi,\bot,\nu) : a \in \Delta\},
\]
takes care of case~(B) in the case where the cell modified is a cell
unscanned by the head. 
\end{itemize}

\end{itemize}
This finishes the proof of the theorem. 
\end{proof}

In the next section we introduce an interesting language, based on a
restriction of RL, that is tractable in data complexity, and thus
better suited for database applications. This language is a proper
extension of REM. 
But before that, we make some important remarks about the expressive power
of RL. 

\subsubsection*{Expressive power of RL}
We now look at the expressive power of the logic RL. It was proven in 
\cite{WL} that CRPQ is not subsumed by WL. Since RL subsumes CRPQ$^\neg$,
it follows that RL is not subsumed by WL. On the other hand, WL is also 
not subsumed by RL due to Theorem \ref{theo:main}, Theorem \ref{theo:rl-cc}, 
and the standard time/space hierarchy theorem from complexity theory. 
Therefore, we have the following proposition:
\begin{proposition}
The expressive power of WL and RL are incomparable.
\end{proposition}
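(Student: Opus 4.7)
The plan is to establish incomparability by separately exhibiting a query in RL that no WL formula captures, and a query in WL that no RL formula captures. Both directions will essentially be routed through results already at hand in the paper (and in \cite{WL}), so I expect the write-up to be short and mostly organizational.

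For the direction RL $\not\subseteq$ WL, I would start by recalling that Hellings et al.\ \cite{WL} exhibit an explicit conjunctive regular path query (CRPQ) that is provably not expressible in WL. The plan is then to argue that RL subsumes CRPQ. This is a routine syntactic translation: a CRPQ is a conjunction of atoms of the form $(x_i, L_i, y_i)$, where $L_i$ is a regular language over $\Sigma$, and such an atom is captured in RL by $\exists \pi_i\,\big((x_i,\pi_i,y_i) \wedge e_{L_i}(\pi_i,\bar\bot,\bar\bot)\big)$, where $e_{L_i}$ is the REM obtained from the regular expression for $L_i$ (using no registers). Closing under conjunction and existential node quantification is immediate from the grammar of RL. Since the CRPQ witnessing non-expressibility in WL is then also in RL, we conclude RL $\not\subseteq$ WL.

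For the direction WL $\not\subseteq$ RL, I would argue by contradiction using complexity. Suppose every Boolean WL formula $\phi$ has a logically equivalent RL formula $\phi^{\mathrm{RL}}$. Fix any $\phi$. Then by Theorem~\ref{theo:rl-dc}, the problem \textsc{Eval}(RL,\,$\phi^{\mathrm{RL}}$) is in \pspace. This would yield a \pspace\ algorithm for \textsc{Eval}(WL,\,$\phi$): on input $G$, evaluate $\phi^{\mathrm{RL}}$ on $G$. Hence the data complexity of WL would be in \pspace\ for every fixed Boolean formula. But Theorem~\ref{theo:main} (together with the standard space hierarchy theorem, which places \pspace\ strictly inside $k$-\textsc{Expspace} for sufficiently large $k$) shows that WL has queries whose data complexity is $k$-\textsc{Expspace}-hard for arbitrary $k$. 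This is a contradiction, so some WL-definable query is not expressible in RL.

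The only subtlety to be careful about is the first step: one must verify that the CRPQ used in \cite{WL} to separate CRPQ from WL lies in the fragment of RL that our syntactic translation actually covers, i.e.\ that only plain regular expressions (no data-value comparisons) are needed to witness the separation. Since \cite{WL} works in the data-free setting of graph databases, this is immediate. The rest is bookkeeping, and no genuine obstacle is expected.
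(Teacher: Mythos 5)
Your proposal is correct and follows essentially the same route as the paper: RL $\not\subseteq$ WL via the fact from \cite{WL} that CRPQs are not subsumed by WL together with the (routine) embedding of CRPQs into RL, and WL $\not\subseteq$ RL via the complexity gap between Theorem~\ref{theo:main} and the elementary upper bounds for RL plus the space hierarchy theorem. The only cosmetic differences are that the paper invokes the combined-complexity bound (Theorem~\ref{theo:rl-cc}) where you invoke the data-complexity bound (Theorem~\ref{theo:rl-dc}), and that $e_L(\pi,\bar\bot,\bar\bot)$ should formally be written as $\exists \nu\,(\nu=\bar\bot \wedge e_L(\pi,\nu,\nu'))$-style atoms, neither of which affects the argument.
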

On the other hand, we shall argue now that many natural queries about the
interaction between data and topology are also expressible in RL. The 
aforementioned query (Q) is one such example. We shall now mention other 
examples: hamiltonicity (H), the existence of an eulerian trail (E), 
bipartiteness (B), and connected graphs with an even number of nodes (C2).
The first two are expressible in WL, while (B) and (C2) are not known to
be expressible in WL. We conjecture that they are not. 

We now show how to express in RL the existence of a hamiltonian path in a graph;
the query (E) can be expressed in the same way but with two registers (to
remember edges, i.e., consisting of two nodes).
This is done with the following formula over $\Sigma = \{a\}$ and a single 
register $r$:
$$\exists \pi \: \big(\,\forall \lambda \forall \lambda' 
\neg e_1(\pi,\lambda,\lambda') \,
\wedge \, \forall \lambda (\lambda \neq \bot \rightarrow 
e_2(\pi,\lambda,\lambda))\,\big),$$ where $e_1 := a^* \cdot
(\downarrow \!\! r. a^+ [r^=]) \cdot a^*$ is the REM that checks
whether in a path some node is repeated (i.e., that it is not a simple
path), and $e_2 := a^*[r^=]a^*$ is the REM that checks that the node
stored in register $r$ appears in a path. 
In fact, this query expresses that there is a path $\pi$ that it is
simple (as expressed by the formula $\forall \lambda \forall \lambda'
\neg e_1(\pi,\lambda,\lambda')$), and
every node of the graph database is mentioned in $\pi$ (as expressed
by the formula $\forall \lambda (\lambda \neq \bot \rightarrow 
e_2(\pi,\lambda,\lambda))$). 

We now show how to express in RL the bipartiteness property from graph theory. 
An undirected graph $G = (V,E)$ is \emph{bipartite} if its set of 
nodes can be partitioned into two sets $S_1$ and $S_2$ such that, for each edge
$(v,w) \in E$, either (i) $v \in S_1$ and $w \in S_2$, or (ii) $v \in S_2$
and $w \in S_1$. It is well-known that a graph database is bipartite if and only if it 
does not
have cycles of odd length. The latter is expressible in RL since the existence 
of an odd-length cycle can be expressed
as $\exists \pi \exists \lambda \exists \lambda' e(\pi,\lambda,\lambda')$, where 
$e = \downarrow \! r. a(aa)^*[r^=]$. 

We now show how to express in RL that a graph database is a connected graph
with an odd number of nodes. To this end, it is sufficient and necessary
to express the existence of a hamiltonian path $\pi$ with an even number of
edges in the graph. But this is a simple modification
of our formula for expressing hamiltonicity: we add the check that
$\pi$ has an even number of edges by adding the conjunct $e(\pi,\nu,\nu')$,
where $e = (aa)^*$, and close the entire formula under existential
quantification of $\nu$ and $\nu'$.

\subsection{Tractability in data complexity} 

Let RL$^+$ be the {\em positive} fragment of RL, i.e., 
the logic obtained from
RL by forbidding negation but adding conjunctions (as they were not explicitly present 
in RL). 
It is easy to prove that the data complexity of the evaluation
problem for RL$^+$ is tractable (\nlogspace). This fragment
contains the class of {\em conjunctive} REMs, that has been previously
identified as tractable in data complexity \cite{LV}. However, 
the expressive power of RL$^+$ is limited as the
following proposition shows.

\begin{proposition} \label{prop:rlplus}
The query {\em (Q)} from the introduction is not expressible in RL$^+$. 
\end{proposition}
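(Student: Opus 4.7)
The plan is to give a homomorphism-preservation argument adapted to RL$^+$. First, we establish that every RL$^+$ formula is preserved under data-graph homomorphisms, i.e., maps $h\colon G\to G'$ that preserve edge labels and data values ($\kappa'(h(v))=\kappa(v)$). The atoms $(x,\pi,y)$ and $e(\pi,\nu,\nu')$ are preserved because $h$ sends paths to paths with the same label sequence and preserves both equality and inequality of data values (the latter because distinct data values in $G$ remain distinct under $h$). Combined with the positive connectives and existential quantifiers, this gives full preservation of RL$^+$ under such homomorphisms. Crucially, (Q) is also preserved under homomorphism, so by itself this fact is not enough.

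The second ingredient is a two-graph construction tailored to a hypothetical $\phi\in{\rm RL}^+$ expressing (Q). Let $k$ bound the total number of node, path, and register variables of $\phi$, and fix $N$ sufficiently large. The graph $G_1$ consists of a chain $v_0\to v_1\to\cdots\to v_N$ plus a single ``hub'' $z$ with edges $v_i\leftrightarrow z$ for every $i$, all $v_i$ having distinct data values and $z$ carrying a fresh value $\#$. The graph $G_2$ has the same chain, but replaces $z$ by a family of ``partial hubs'' $z_S$, one for each $S\subseteq\{0,\ldots,N\}$ of size $k+1$, where $z_S$ is adjacent to $v_i$ (in both directions) iff $i\in S$, and every $z_S$ also carries value $\#$. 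Then (Q)$(v_0,v_N)$ holds in $G_1$ (witness $z$ together with the chain path), whereas for $N>k$ it fails in $G_2$, since no single $z_S$ is adjacent to every $v_i$ along a path from $v_0$ to $v_N$.

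The mapping $h\colon G_2\to G_1$ sending $z_S\mapsto z$ and $v_i\mapsto v_i$ is a data-preserving homomorphism, so by the preservation step $G_2\models\phi(v_0,v_N)$ implies $G_1\models\phi(v_0,v_N)$. For the converse, we transfer any satisfying assignment of $\phi$ in $G_1$ to $G_2$ by replacing each variable bound to $z$ by a suitable $z_S$: the $k$ variables of $\phi$ can ``pin'' only a bounded set $I$ of indices that interact with the hub (through incident edges or through register comparisons along a witness path). Choosing $S\supseteq I$ of size $k+1$, we reroute through $z_S$ every path witness that went through $z$; since $\kappa(z_S)=\kappa(z)=\#$, all REM conditions involving data-value tests on the hub remain satisfied. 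Combining both directions, $\phi$ evaluates identically on $(G_1,v_0,v_N)$ and $(G_2,v_0,v_N)$, yet (Q) separates them, so $\phi$ cannot express (Q).

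The main technical obstacle lies in bounding the set $I$ of hub-interacting indices uniformly by a function of $\phi$ alone. Because an REM in $\phi$ may be satisfied by an arbitrarily long path that visits $z$ many times with many distinct neighbouring $v_i$'s, a naive per-path count is unbounded. This is handled by a pumping-style argument on REMs: every accepting run on a path that revisits $z$ more than a fixed number of times (bounded by the product of the number of REM states and the register configurations, both determined by $\phi$) can be shortened to an accepting run that visits $z$ only boundedly often. This keeps $|I|$ bounded by a function of $\phi$, letting the choice $|S|=k+1$ (enlarged if necessary to absorb this pumping bound) always admit a valid $S\supseteq I$, and completes the inexpressibility argument.
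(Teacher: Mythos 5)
Your route is genuinely different from the paper's, but the step your whole argument rests on is not established. For the contradiction you only need the direction $G_1\models\phi(v_0,v_N)\Rightarrow G_2\models\phi(v_0,v_N)$ (the homomorphism $G_2\to G_1$ you prove gives the converse and is never used), and you reduce this transfer to bounding, \emph{by a function of $\phi$ alone}, the set $I$ of chain indices whose hub edges are needed when every path witness is rerouted through one fixed $z_S$. The pumping bound you invoke is not determined by $\phi$: to excise the segment between two visits to $z$ from an accepting run you must match the automaton state \emph{and} the register contents, and register contents are tuples of data values of the graph, so the pigeonhole bound is of order $|\mathrm{states}|\cdot(N+2)^{k}$ and grows with $N$. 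With an $N$-dependent bound you cannot have both $|S|\geq |I|$ and $|S|\leq N$, so the parameter balancing collapses and the transfer step fails as written. (A repair exists that you did not make: reroute each visit to $z$ through a possibly \emph{different} partial hub covering just the two adjacent chain indices; since all hubs carry the same value $\#$, every REM atom sees identical label and data sequences, and only node variables assigned $z$ force a common hub, which yields a constraint genuinely bounded in terms of $\phi$. But that is a different argument from the one you propose, and it still needs care with the size of $S$ and with consistency across the atoms sharing a path or node variable.)

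Even with that repair, your construction proves a weaker statement than the paper's. Your $G_2$ is not a graph database: all $z_S$ carry the same data value $\#$, and this is essential to your reasoning (it is precisely why the REMs cannot tell the partial hubs apart). The paper's proof separates $\phi$ from (Q) on graph databases (injective $\kappa$), which is the class on which (Q) is expressible in RL and NRL$^+$ and hence the class on which the proposition is meaningful; inexpressibility over arbitrary data graphs does not imply inexpressibility over graph databases. The paper gets both the stronger statement and a much shorter argument by a different mechanism: its hub is a sink, so each of the finitely many path witnesses can traverse at most one hub edge, and positivity of $\phi$ then allows deleting an unused hub edge, preserving $\phi$ while destroying (Q) --- no pumping, no homomorphism lemma. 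Finally, your claim that (Q) fails in $G_2$ silently uses the reading ``connected to $z$'' as ``has an edge to $z$'' (needed anyway for the proposition to be nontrivial) and should also dispose of paths from $v_0$ to $v_N$ that detour through hubs; this does work (a common neighbour of a path ending at $v_N$ must itself be a hub, and no node of $G_2$ has an edge into two hubs or a hub self-loop), but it deserves a sentence.
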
 

\begin{proof}
Recall that $Q$ is the following query: Find pairs of
nodes $x$ and $y$ such that there is a node $z$ and a path $\pi$ from
$x$ to $y$ in which each node is connected to $z$. Suppose for contradiction that there is a formula $\phi$ in RL$^+$
over an alphabet $\Sigma$ and registers $r_1,\dots, r_k$, 
expressing $\exists x \exists y \, Q$. We may assume that $\phi $ is of the form
\[
\exists x_1 \dots \exists x_{n_1} \exists \pi_1 \dots \exists
\pi_{n_2} \exists \nu_1\dots \exists \nu_{n_3} \psi
\]
where $\psi$ is a disjunction of conjunctions of atoms. Let $G=(V,E,\kappa)$ be the
following graph
\begin{center}\begin{tikzpicture}[scale=0.7]
\node(s)  [draw, circle,scale=0.8] at (0,0) {$\$$};
\node(1)  [draw, circle,scale=0.8] at (-5,-2) {$1$};
\node(2)  [draw, circle,scale=0.8] at (-2.5,-2) {$2$};
\node(n)  [draw, circle,scale=0.8] at (2.5,-2) {$n_2$};
\node(n2)  [draw, circle,scale=0.6] at (5,-2) {$n_2+1$};
\node at (0,-2) {$\dots$};

\draw [->,>=latex, thick]
    (1) -- (s)
;
\draw [->,>=latex, thick]
    (2) -- (s)
;
\draw [->,>=latex, thick]
    (n2) -- (s)
;
\draw [->,>=latex, thick]
    (n) -- (s)
;
\draw [->,>=latex, thick]
    (1) -- (2)
;
\draw [->,>=latex, thick]
    (n) -- (n2)
;

\end{tikzpicture}
\end{center} 
where each edge is labeled with $a$.  The query
$\exists x \exists y \, Q$ is true in $G$; hence, the formula $\phi$ must be true in
$G$. That is, there is an assignment $\alpha$ mapping each variable
$x_i$ to a node in $G$, each path variable $\pi_i$ to a path $\rho_i$ in $G$
and each variable $\nu_i$ to a tuple in $\{ \bot, \$, 1,\dots,n_2+1\}^k$ such that
\[
(G,\alpha) \vDash \psi.
\]
Let $G'$ be the graph $(V,E',\kappa)$  where $E'$ is the set 
\[
\{(i,a,i+1): 1 \leq i < n_2 \} \cup \{ (i,a,\$) : \text{ for some $1 \leq
  j \leq n_2$,
  $(i,\$)$ occurs in $\rho_j$ }\}
\]
That is, we delete the edges $(i,\$)$ that $\alpha$ ``does not use''. 
By definition of $E'$, the formula $\psi$ remains true in $G'$ under
the assignment $\alpha$. In particular, $\phi$ is true in $G'$. This
implies that $\exists x,y \; Q$ holds
in $G'$. 

Now, for all $1 \leq j \leq n_2$, there is at most one natural number
$i$ such that $(i,a,\$)$ occurs in $\rho_j$. This is simply because
there is no path going through edges $(i,a,\$)$ and $(i',a,\$)$ if $i \neq
i'$. This implies that the set 
\[
\{ (i,a,\$) : \text{ for some $1 \leq
  j \leq n_2$,
  $(i,\$)$ occurs in $\rho_j$ }\}
\]
contains at most $n_2$ edges. Since $G$ admits $n_2+1$ edges of the
form $(i,a,\$)$, there must be an edge $(i_0,a,\$)$ occurring in $G$, but
not in $G'$. This means that $G'$ is a graph of the form
\begin{center}\begin{tikzpicture}[scale=0.7]
\node(s)  [draw, circle,scale=0.8] at (0,0) {$\$$};
\node(1)  [draw, circle,scale=0.8] at (-7,-2) {$1$};
\node(2)  [draw, circle,scale=0.8] at (-5,-2) {$2$};
\node(n)  [draw, circle,scale=0.8] at (7,-2) {$n_2$};
\node(n2)  [draw, circle,scale=0.6] at (9,-2) {$n_2+1$};
\node(i0)  [draw, circle,scale=0.8] at (1,-2) {$i_0$};
\node(i)  [draw, circle,scale=0.6] at (-1,-2) {$i_0-1$};
\node(i1)  [draw, circle,scale=0.6] at (3,-2) {$i_0+1$};
\node at (4.5,-2) {$\dots$};
\node at (-3,-2) {$\dots$};

\draw [->,>=latex, thick]
    (i1) -- (s);
\draw [->,>=latex, thick]
    (i) -- (s);
\draw [->,>=latex, thick]
    (i) -- (i0);
\draw [->,>=latex, thick]
    (i0) -- (i1);
\draw [->,>=latex, thick]
    (1) -- (s);
\draw [->,>=latex, thick]
    (2) -- (s)
;
\draw [->,>=latex, thick]
    (n2) -- (s)
;
\draw [->,>=latex, thick]
    (n) -- (s)
;
\draw [->,>=latex, thick]
    (1) -- (2)
;
\draw [->,>=latex, thick]
    (n) -- (n2)
;
\end{tikzpicture}
\end{center} 
In particular, $\exists x,y \; Q$ is not true in $G'$, which contradicts the fact
that 
$\phi$ is true in $G'$. 
\end{proof}

On the other hand, increasing the expressive power of RL$^+$ with some simple
forms of negation leads to intractability of query evaluation in data complexity: 

\begin{proposition} \label{prop:neg}
  There is a finite alphabet $\Sigma$ and REMs $e_1,e_2,e_3,e_4$ over
  $\Sigma$ and a single register $r$, such that {\sc Eval}{\em
    (RL,$\phi$)} is {\sc Pspace}-complete, where $\phi$ is either
  $\exists \pi \exists \lambda \neg (e_1(\pi,\bot,\lambda) \vee
  e_2(\pi,\bot,\bot))$ or $\exists \pi \forall \lambda \neg (e_3(\pi,\bot,\lambda) \vee
  e_4(\pi,\bot,\bot))$.
\end{proposition}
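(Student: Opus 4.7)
Both forms of $\phi$ are in \pspace\ by Theorem~\ref{theo:rl-dc}, so only \pspace-hardness remains. For the universal form, \pspace-hardness is essentially a free corollary of the proof of Theorem~\ref{theo:rl-dc}. That proof produces a \pspace-hard formula of the shape $\exists \pi \neg \chi(\pi)$, where $\chi(\pi)$ is a finite disjunction whose atoms are each of the form $\exists \nu\, e_i(\pi,\bot,\nu)$ (register-using) or $e_j(\pi,\bot,\bot)$ (register-free). Taking $e_3 := \bigcup_i e_i$ and $e_4 := \bigcup_j e_j$ via the REM union operator $\cup$, we have $\chi(\pi) \equiv (\exists \nu\, e_3(\pi,\bot,\nu)) \vee e_4(\pi,\bot,\bot)$, so $\neg \chi(\pi) \equiv \forall \lambda\, \neg(e_3(\pi,\bot,\lambda) \vee e_4(\pi,\bot,\bot))$. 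Hence the formula built in the reduction is already of the universal shape, and \pspace-hardness transfers verbatim.

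For the existential form, the plan is to reduce from the same linear-space acceptance problem but to swap the roles of ``detecting errors'' and ``witnessing correctness''. Keeping essentially the graph $G_w$ of Theorem~\ref{theo:rl-dc}, we build $e_2$ to catch, via a purely register-free pattern, every structural malformation of $\pi$ and the failure of $\pi$ ever to visit $q_f$; thus $\neg e_2(\pi,\bot,\bot)$ captures structural well-formedness together with acceptance. We build $e_1$ as a finite union whose disjuncts each first witness a class of transition errors using $r$ exactly as in Theorem~\ref{theo:rl-dc}, and then are concatenated with the suffix $\Sigma^*\downarrow r.\Sigma^*$ that nondeterministically overwrites $r$ with the data value of an arbitrary downstream node. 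Since the paths encoding runs traverse every data value of $G_w$, this suffix realises every $\lambda\in D$ as a possible final register value, so that whenever $\pi$ contains a transition error, $e_1(\pi,\bot,\lambda)$ holds for all $\lambda\in D$; for an error-free $\pi$, no disjunct fires and $e_1$ fails for every $\lambda$. Consequently $\exists \lambda\, \neg e_1(\pi,\bot,\lambda)$ expresses ``$\pi$ has no transition error'', and $\exists \pi \exists \lambda\, \neg(e_1 \vee e_2) \Leftrightarrow$ ``$M$ accepts $w$''.

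The principal technical obstacle is the corner case $\lambda = \bot$: since $\downarrow r$ can only set $r$ to a data value, no branch of the above construction can terminate with $r = \bot$, so a $\pi$ that has only a transition error would spuriously satisfy the formula via the witness $\lambda=\bot$. We close this gap by slightly enriching $G_w$ with distinguished marker edges between successive configurations that align corresponding cells structurally, so that at least one canonical pattern of a mismatched transition becomes recognisable by a purely regular (register-free) REM; adding this register-free REM as an additional disjunct to $e_1$ then ensures $e_1(\pi,\bot,\bot)$ holds for every $\pi$ containing a transition error. All remaining verifications are straightforward adaptations of the corresponding parts of the reduction in Theorem~\ref{theo:rl-dc}.
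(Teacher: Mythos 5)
Your \pspace\ upper bound and your treatment of the universal form are correct and coincide with the paper's own argument: one takes the formula $\chi(\pi)$ from the proof of Theorem~\ref{theo:rl-dc}, groups its disjuncts according to whether they use the register, and uses closure of REMs under $\cup$ to rewrite $\neg\chi(\pi)$ as $\forall\lambda\,\neg(e_3(\pi,\bot,\lambda)\vee e_4(\pi,\bot,\bot))$, so hardness transfers verbatim.

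The existential form, however, has a genuine gap. Your reduction needs that, whenever $M$ rejects, \emph{every} path $\pi$ not caught by $e_2$ satisfies $e_1(\pi,\bot,\lambda)$ for \emph{every} $\lambda\in D\cup\{\bot\}$. For $\lambda\in D$ you justify this by the claim that ``the paths encoding runs traverse every data value of $G_w$'', but in the graph of Theorem~\ref{theo:rl-dc} this is false (each pass through $H$ uses only one $\Delta$-labelled detour per cell), and, more importantly, only data values occurring \emph{after} the position where the error disjunct completes its $[r^=]$-check can be produced by the trailing $\Sigma^*\!\downarrow\! r.\Sigma^*$. An erroneous path that stops right after its first $(q_f,a)$-edge therefore leaves most of $D$ unrealisable as a final register value, and any such unrealisable $d$ gives a pair $(\pi,d)$ that falsely satisfies the formula when $M$ rejects. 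This is precisely the point the paper's proof repairs by \emph{modifying} $G_w$: an edge from the right end of $H$ back to the left end of $I_w$ is added, and the register-free requirements are strengthened so that a surviving path must, after reaching the final state, continue through all the nodes of $G_w$; only then is every data value attainable after any error, and only then does $\exists\lambda\,\neg e_1(\pi,\bot,\lambda)$ express error-freeness. Your fix for the $\lambda=\bot$ corner case is also not viable as stated: a register-free REM that matches every surviving path containing a transition error while excluding the correct encodings would make the linear-space acceptance problem decidable in \nlogspace\ (test, in the product with a fixed automaton, for a surviving path avoiding it), contradicting $\nlogspace\neq\pspace$; no ``marker edges'' can make a comparison of cells that are $\Theta(n)$ apart along the path regular. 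The $\bot$ case is in fact harmless: it suffices to include in $e_1$ a register-free disjunct accepting every label, since such a disjunct only ever contributes the final assignment $\bot$ and thus never interferes with choosing a data value as the witness $\lambda$ in the accepting direction.
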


\begin{proof}
 By the proof of Theorem~\ref{theo:rl-dc} (and using the same
notation), we know that for every Turing machine $M$, there is a formula $\chi(\pi)$ such that for
all words $w$ of size $n$, there is a graph $G_w$ (of size polynomial in $n$) such that
\[
G_w \vDash \exists \pi \neg \chi (\pi) \quad \text{iff} \quad \text{
  there is an accepting run of $M$ over $w$ using at most $cn$ cells.}
\] 
Moreover, the formula $\chi(\pi)$ is a formula of the form 
\[
\bigvee \{ e_i (\pi,\bot,\bot) : i \in I \} \vee
\bigvee \{ \exists \lambda \; f_j (\pi,\bot,\lambda) : j \in J \},
\]
where $\{ e_i,f_j : i \in I,j\in J\}$ is a set of REMs that do not
contain any $\cup$. Since REMs are closed under union, $\chi(\pi)$ is
equivalent to a formula of the form
\[
\exists \lambda \; (e(\pi,\bot,\bot) \vee f(\pi,\bot,\lambda)).
\]
Hence, the formula $
\exists \pi \neg \chi (\pi) $ is equivalent to 
\[
\exists \pi \forall
\lambda \neg (e(\pi,\bot,\bot) \vee f(\pi,\bot,\lambda)).
\]
This proves that {\sc Eval}(RL,$\phi$) is {\sc
  Pspace}-complete, where $\phi = \exists \pi \forall
\lambda \neg (e(\pi,\bot,\bot) \vee f(\pi,\bot,\lambda))$.

Now we prove that {\sc Eval}(RL,$\phi'$) is {\sc
  Pspace}-complete. where $\phi'$ is a formula of the form $\exists
\pi \exists 
\lambda \neg (e'(\pi,\bot,\bot) \vee f'(\pi,\bot,\lambda))$ , for some
REMs $e'$ and $f'$. The intuition is as follows. The difference
between $\phi$ and $\phi'$ is that in $\phi$, we may choose the data
value that is in the register after checking that $f$ is
true. However, in $\phi'$, we must be able to store any value in the
register after checking that $f'$ is true. We will make two changes to
make this possible. 

First, we modify the graph $G_w$ in such a way that two arbitrary
nodes are always reachable. This can be easily achieved by adding 
an edge from the ``right-most node'' of the graph  $H$ to the
``left-most node'' of the graph $I_w$ (allowing to
encode the
initial configuration of a run). Second, we modify the REMs of $\phi$
in such a way that the label of a path satisfying those REMs, encodes an accepting
run and after reaching the final state, it goes through all the nodes
of $G_w$. Hence, once we checked that the run reaches the final state,
we can simply store any value in the register. We leave out the
details, as the intuition is pretty simple and the details a bit
tedious.

\end{proof}

In the case of basic navigational languages for graph databases, 
it is possible to increase the
expressive power -- without affecting the cost of evaluation -- by
extending formulas with a branching operator (in the
style of the class of {\em nested regular expressions} \cite{BLP}). 
The same idea can be applied in our scenario, by 
extending atomic REM formulas in
RL$^+$ with such a branching operator. The resulting language is
more expressive than RL$^+$ (in particular, this extension can express
query (Q)), yet remains tractable in data complexity. We formalize
this idea below.

%Is there a natural extension of RL$^+$ that can express query (Q) and
%remains tractable in data complexity? We show that this is the case
%by following a simple approach: We extend atomic REM formulas in
%RL$^+$ , that increases
%the expressiveness of basic navigational 

The class of {\em nested} REMs (NREM) extends REM with a nesting 
operator $\langle \cdot \rangle$ defined as follows: 
If $e$ is an NREM then $\langle e \rangle$ is also an NREM. 
Intuitively, the formula $\langle e \rangle$ filters those nodes in a data graph 
that are the origin of a path that can be parsed according to $e$. Formally, 
if
$e$ is an NREM over $k$ registers and $G$ is a data graph, 
then $\sem{\langle e \rangle}_G$ consists of all tuples of the form $(u,\lambda,\rho = u,u,\lambda)$
such that $(u,\lambda,\rho',v,\lambda') \in \sem{e}_G$, for some node $v$ in $G$, path $\rho'$ in $G$, 
and $k$-tuple $\lambda'$ over $\D_\bot$. 
%Thus, the evaluation $e(G)$ of $e$ over $G$ 
%consists of all pairs $(u,u)$ such that $(u,\lambda,u,\lambda) \in \sem{e}_G$, for some $k$-tuple $\lambda$ in $\D_\bot$.   
%there is a path in $G$ starting in $u$ 
%Further, if $\alpha$ is an assignment for 

Let NRL$^+$ be the logic that is obtained from RL$^+$ by allowing atomic formulas of the form $e(\pi,\lambda,\lambda')$, 
for $e$ an NREM. Given a data graph $G$ and an assignment $\alpha$ for $\pi$, $\lambda$ and $\lambda'$ over $G$, 
we write as before $(G,\alpha) \models e(\pi,\lambda,\lambda')$ if and only if $\alpha(\pi)$ goes from $u$ to $v$ and 
$(u,\alpha(\lambda),\alpha(\pi),v,\alpha(\lambda')) \in \sem{e}_G$.  The semantics of NRL$^+$ is thus obtained from the semantics 
of these atomic formulas in the expected way. The following example
shows that query (Q) is expressible in NRL$^+$, and, therefore, that
NRL$^+$ increases the expressiveness of RL$^+$. 

\begin{example} Over graph databases, the query (Q) from the
introduction is expressible in NRL$^+$ using the following formula
over $\Sigma = \{a\}$ and register $r$: $$\phi \ = \ \exists \pi
\exists \lambda \, \big( \, (x,\pi,y) \, \wedge \, e(\pi,\lambda,\lambda) \,
\big),$$ where $e := (\langle e_1 \rangle \cdot a)^* \langle e_1
\rangle$, for $e_1 = a^*[r^=]$. Intuitively, $e_1$ checks in a path
whether its last node is precisely the node stored in register $r$,
and thus $e$ checks whether every node in a path can reach the node
stored in register $r$. Therefore, the formula $\phi$ defines the set
of pairs $(x,y)$ of nodes, such that there is a path $\pi$ that goes
from $x$ to $y$ and a register value $\lambda$ (i.e., a node $\lambda$) that
satisfies that every node in $\pi$ is connected to $\lambda$.  \boxtheorem
\end{example}

The extra expressive power of 
NRL$^+$ over RL$^+$ does not affect the data complexity of query evaluation: 

\begin{theorem} \label{theo:nrl-dc} 
Evaluation of NRL$^+$ formulas is in \nlogspace\ in terms of data complexity. 
\end{theorem}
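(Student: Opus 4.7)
The plan is to show that a fixed NRL$^+$ formula $\phi$ can be evaluated on any input data graph $G = (V, E, \kappa)$ using a nondeterministic logarithmic-space algorithm. Since $\phi$ is fixed in the data-complexity setting, the number of path, node, and register variables occurring in $\phi$, the number $k$ of registers, and the size and nesting depth of every NREM occurring in $\phi$ are all constants.

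First I would bring $\phi$ into prenex form $\exists \bar x\,\exists \bar\pi\,\exists \bar\nu\;\psi$, where $\psi$ is a positive Boolean combination of atoms; this is legal since NRL$^+$ uses only existential quantifiers and positive connectives. For each path variable $\pi$, I would statically collect the set $A_\pi$ of atoms of $\psi$ in which $\pi$ occurs. The algorithm then guesses a value in $V$ for each node variable and a $k$-tuple in $\D_\bot^k$ for each register variable (each guess uses $O(\log |V|)$ space since $k$ is constant), and nondeterministically resolves every disjunction in $\psi$ by choosing one disjunct while walking the formula tree, leaving a pure conjunction of atoms to be verified. Equality atoms and atoms of the form $\nu=\bar\bot$ can be decided directly from the stored values.

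The main step is verifying, for each path variable $\pi$, that a single path $\rho$ witnesses all remaining atoms in $A_\pi$. I would run, in parallel, one register-automaton simulation $M_e$ for each NREM atom $e(\pi,\nu,\nu')\in A_\pi$, and for each endpoint atom $(x,\pi,y)\in A_\pi$ I would remember the required source and target nodes. The algorithm starts at the common source, guesses the edges of $\rho$ one at a time, and after each edge it updates the state and register contents of every $M_e$. Once the required target is reached, the algorithm verifies that each $M_e$ is in an accepting state with the correct final register tuple. Since the number of simulated automata, the number of states of each, and $k$ are all constant, and since register contents are data values encoded in $O(\log |V|)$ bits, this parallel simulation runs in $\nlogspace$. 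Different path variables are processed one after the other, reusing space.

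The main obstacle is the nesting operator $\langle e'\rangle$: while simulating $M_e$, the algorithm may reach a transition requiring a witness that some path matching $e'$ exists from the current node under the current register tuple. I would handle this by invoking the algorithm recursively on the subquery $\exists \pi'\,\exists \nu'\; e'(\pi',\nu_{\text{cur}},\nu')$, with the current node as a forced source. The critical observation is that the nesting depth of $\phi$ is a constant, so the depth of recursion is bounded and each level uses only $O(\log |V|)$ additional space; upon returning from a subcall the space is reclaimed. Consequently the total space used by the whole procedure is $O(\log |V|)$, proving that evaluation of NRL$^+$ is in $\nlogspace$ in data complexity.
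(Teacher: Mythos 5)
Your proof is correct and follows essentially the paper's own argument: guess the node and register witnesses in logarithmic space, verify for each path variable that one on-the-fly guessed path simultaneously satisfies all of its atoms by simulating the (constantly many, constant-size) register automata with registers holding $O(\log|V|)$-bit data values, and use the constant nesting depth to discharge the $\langle e\rangle$-tests --- the paper does this last step by precomputing the relations $U(e)$ level by level and invoking closure of \nlogspace\ under composition, which amounts to the same on-demand recomputation you describe. The only small omission is the atom $\pi=\pi'$, which cannot be ``decided directly from the stored values'' since paths are never stored; it is handled by identifying the two path variables and verifying their combined atom sets along a single guessed path.
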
 
   
\begin{proof}
Let $G = (V,E,\kappa)$ be a data graph and $\phi$ an NRL$^+$
formula. Also, let $D = \{\kappa(v) \mid v \in V\}$. 
We assume without loss of generality that $\phi$ is Boolean,
that is, we study the complexity of deciding whether $G \models \phi$. 
In the case when $\phi$ is not Boolean, that is, when the input
consists of $G$ and an assignment $\alpha$ for $\phi$ over $G$, we
simply replace each free variable $\eta$ in $\phi$ by 
 $\alpha(\eta)$, and then use the evaluation
algorithm we describe below for the resulting formula.  

Assume without loss of generality that $\phi$ is of the form $\exists
\bar x \exists \bar \nu \exists \bar \pi \psi$, where $\bar x$ is a
tuple of node variables, $\bar \nu$ is a tuple of register assignment
variables, $\bar \pi$ is a tuple of path variables, and $\psi$ is
quantifier-free. Assume also that $\{e_1,\dots,e_m\}$ is the set of
NREMs mentioned in $\phi$, and that each such NREM is over
$\{r_1,\dots,r_k\}$.  The evaluation algorithm does the following: It
first guesses witnesses for the existentially quantified node and
register assignment variables. Assume that the guess for $\bar x$ is
$\bar v$, where $\bar v$ is a tuple of nodes of the same arity than
$\bar x$, and that the guess for $\bar \nu$ is $\bar \lambda$, where
$\bar \lambda$ is a tuple of elements in $(D \cup
\{\bot\})^k$ of the same arity than $\bar \nu$. 
Clearly, both $\bar v$ and $\bar \lambda$ can be
represented using only logarithmic space (since $\phi$, and therefore
$k$, is fixed).  

The algorithm then guesses a witness $\xi$ for each
existentially quantified path variable $\pi$ in $\bar \pi$. This
witness codifies all the information that we need to know about the
actual path $\rho$ that inteprets $\bar \pi$. In our case, $\xi$ is a
string that lists (in a precise order) the endpoints $u$ and $v$ of
$\rho$ and the tuples $(e,\lambda,\lambda')$, for $e \in
\{e_1,\dots,e_m\}$ and $\lambda,\lambda' \in \bar \lambda$, such that
$(u,\lambda,\rho,v,\lambda') \in \sem{e}_G$.
 Let $\bar \xi$ be the
witnesses for the tuple $\bar \pi$. It is not hard to see that $\bar
\xi$ can be represented using logarithmic space.

Finally, the algorithm checks that the guess $(\bar v,\bar
\lambda,\bar \xi)$ satisfies $\psi$. Since $\psi$ is a Boolean
combination of atomic formulas, we only have to explain how to do this
in logarithmic space for each atomic formula of the logic. Atomic formulas of
the form $x = y$, $\nu = \nu'$ and $\nu = \bar \bot$ are self-evident.
Atomic formulas of the form $\pi = \pi'$ only require checking whether
the witness of $\pi$ is equal to the witness of $\pi'$. Atomic
formulas of the form $(x,\pi,y)$ require checking in the witness $\xi$
of $\pi$ whether its endpoints correspond to the witnesses of $x$ and
$y$. Finally, formulas of the form $e(\pi,\nu,\nu')$ require checking
in $\xi$ whether $(u,\lambda,\rho,v,\lambda') \in \sem{e}_G$, where
$\lambda$ and $\lambda'$ are the witnesses of $\nu$ and $\nu'$,
respectively. Clearly, any of this can be done in logarithmic space. 

The only thing that remains to be done is checking that $\bar \xi$ is
consistent with $G$. That is, we need to check the following 
for each path variable $\pi$ that is mentioned in $\bar \pi$ whose
witness in $\bar \xi$ is $\xi$: There is a path $\rho$ in $G$
whose codification corresponds to $\xi$. In other words, if $\xi$
tells us that the endpoints of the path are $u$ and $v$, we need to
check in $G$ if there is a path $\rho$ from $u$ to $v$ such that, for
each $e \in \{e_1,\dots,e_m\}$ and $\lambda,\lambda' \in \bar
\lambda$, it is the case that $(u,\lambda,\rho,v,\lambda') \in
\sem{e}_G$ each time that $\xi$ tells us so. We explain next 
how this can be done in \nlogspace\ combining techniques from REM
and nested regular expressions evaluation.

The algorithm starts computing, for each expression of the form $\langle e \rangle$
that appears in any of the $e_i$'s, the set $U(e)$ 
of pairs of the form $(w,\lambda)$, for $w$ a node in $G$ and
$\lambda$ a $k$-tuple
 over $D \cup \{\bot\}$, such that $(w,\lambda,\rho' =
w,w,\lambda) \in \sem{\langle e \rangle}_G$. In order to do so it 
proceeds recursively depending on the {\em nesting depth} of the
expression $\langle e \rangle$, which is 1 if $e$ contains no nested
subexpression, and it is 1 more than the maximum
nesting depth of any subexpression of $e$ of the form $\langle e'
\rangle$ otherwise. The algorithm starts with those expressions $\langle e
\rangle$ of nesting depth one, i.e., when $e$ is an REM (no nesting). 
Using techniques from
\cite{LV} it is possible to compute in \nlogspace\ the set $U(e)$, for
each such expression $\langle e \rangle$. Then it continues with
expressions of nesting depth two. In such case, it uses the same
aforementioned techniques, but each time the procedure is asked to
check whether $(w,\lambda,\rho' =
w,w,\lambda) \in \sem{\langle e' \rangle}_G$, for a subexpression
$\langle e' \rangle$ of nesting depth one, it simply checks whether
$(w,\lambda) \in U(e')$. The process continues iteratively in this
way until all sets $U(e)$, for $\langle e \rangle$ a subexpression of
any of the $e_i$'s, are computed. 
Clearly, this iterative process can be performed in \nlogspace. 

Once the previous step is finished, the algorithm checks whether there
is a path $\rho$ from $u$ to $v$ such that, for each $e \in
\{e_1,\dots,e_m\}$ and $\lambda,\lambda' \in \bar \lambda$, it is the
case that $(u,\lambda,\rho,v,\lambda') \in \sem{e}_G$ each time that
$\xi$ tells us so. This can be done in \nlogspace\
applying the same techniques used in the previous paragraph and the
knowledge provided by the sets $U(e)$. The result follows from the
fact that \nlogspace\ functions are closed under composition. 
\end{proof}

From the proof of Theorem \ref{theo:nrl-dc} it also follows that
NRL$^+$ formulas can be evaluated in \pspace\ in combined complexity. 

\OMIT{
\section{Simple path semantics}

%\paragraph{{\bf WL based on simple paths:}} 
The authors of \cite{WL}
define a variant of WL, called PL, that preserves the syntax of WL but
modifies its semantics by replacing the role of paths by {\em simple}
paths (that is, paths that do not repeat nodes). 
In other words, each path variable $\pi$ is interpreted in PL
as a simple path, and, thus, the quantifer $\exists \pi$ is to be read
in PL as ``there exists a simple path''. Analogously, we
define a logic sRL that preserves the syntax of RL but interprets all
path variables as simple paths. 

 While a semantics based on
simple paths considerably increases the complexity of evaluation for basic query
languages over graph databases (see, e.g., \cite{MW95,Bar13}), 
the situation is different in the WL and RL scenario, as
restricting to simple paths yields a logic that can be evaluated in
{\sc Pspace}, and even in the polynomial hierarchy in data
complexity. We use {\sc Eval}(PL) and {\sc Eval}(PL,$\phi$) as the exact analogs
of {\sc Eval}(WL) and {\sc Eval}(WL,$\phi$), respectively, and
similarly for {\sc Eval}(sRL) and {\sc Eval}(sRL,$\phi$). 

\begin{proposition}
\begin{enumerate}
\item Both
{\sc Eval}{\em (PL)} and {\sc Eval}{\em (sRL)} are {\sc Pspace}-complete. 
\item
The data complexity of both {\sc Eval}{\em (PL)} and {\sc Eval}{\em
(sRL)} is in the polynomial
hierarchy. Furthermore, there is a finite alphabet $\Sigma$ that
satisfies that  
for each $k \geq 1$ there is a PL (resp., sRL) formula $\phi$ over $\Sigma$ 
such that {\sc Eval}{\em (PL,$\phi$)} (resp., {\sc Eval}{\em
(PL,$\phi$)}) is $\Sigma_k^{\rm P}$-hard.
\end{enumerate} 
\end{proposition} 

Upper bounds follow by straightforward polynomial-time translation
of PL and sRL into second order logic (SO). The latter can be evaluated in
{\sc Pspace}, and in the PH in data complexity. Lower bounds are
obtained using reductions from the QBF problem. 
%Details are in the
%appendix. 

\OMIT{

\medskip
\noindent
{\bf The case of simple paths:} We define a
logic sRL that preserves the syntax of RL but modifies the semantics
by replacing the role of paths by simple paths. The cost of 
evaluation of formulas in this logic coincides with the cost of
evaluating formulas in its analog PL, the version of WL for which the 
semantics is also based on simple paths.  

\begin{proposition}
\begin{enumerate}
\item 
{\sc Eval}{\em (sRL)} is {\sc Pspace}-complete. 
\item
The data complexity of {\sc Eval}{\em (sRL)} is in the polynomial
hierarchy. Furthermore, there is a finite alphabet $\Sigma$ that
satisfies that  
for each $k \geq 1$ there is a sRL formula $\phi$ over $\Sigma$ 
such that {\sc Eval}{\em (sRL,$\phi$)} is $\Sigma_k^{\rm P}$-hard.
\end{enumerate} 
\end{proposition} }}

\section{Conclusions and Future Work}
\label{sec:conc}
We have proven that the data complexity of walk logic (WL) is nonelementary, which
rules out the practicality of the logic. We have proposed register
logic (RL),
which is an extension of regular expressions with memory. Our results in this 
paper suggest that register logic is capable of expressing natural queries 
about interactions between data and topology in data graphs, while still
preserving the elementary data complexity of query evaluation (\pspace). 
Finally, we showed how to make register logic more tractable in data
complexity (\nlogspace) through the logic NRL$^+$, while at the same time 
preserving some level of expressiveness of RL.

We leave open several problems for future work. One interesting question is
to study the expressive power of extensions of walk logic, in comparison to
RL and ECRPQ$^\neg$ from \cite{BLWW}. For example, we can consider
extensions with regularity tests (i.e. an atomic formula testing whether a 
path belongs to a regular language). Even in this simple case, the expressive
power of the resulting logic, compared to RL and ECRPQ$^\neg$, is already
not obvious. Secondly, we do not know whether NRL$^+$ is strictly more expressive than 
RL. 
Finally, we will also mention that expressibility of bipartiteness in WL
is still open (an open question from \cite{WL}). We also leave open whether
the query that a graph database is a connected graph with an even number of nodes
is expressible in WL.

\end{document}